\documentclass[10pt,conference]{IEEEtran}

\usepackage{verbatim} 
\usepackage{clrscode3e}
\usepackage[pdftex]{graphicx}
\usepackage{setspace}
\usepackage[cmex10]{amsmath}
\usepackage{amsmath, amsthm, amssymb}
\usepackage{algorithmic}
\usepackage{algorithm}
\usepackage{subfig}
\usepackage{caption}
\usepackage{paralist}
\usepackage{cases}
\usepackage{cite}
\usepackage{etoolbox}

\usepackage{tikz}

\usepackage{color}

\newcommand{\ceiling}[1]{\left\lceil{#1}\right\rceil}
\newcommand{\floor}[1]{\left\lfloor{#1}\right\rfloor}
\newcommand{\setof}[1]{\left\{{#1}\right\}}

\newcommand{\frameworkkq}[1]{$\mathbf{k^2Q}$}
\newcommand{\frameworkku}[1]{$\mathbf{k^2U}$}

 \def\myendproof{{\ \vbox{\hrule\hbox{%
   \vrule height1.3ex\hskip0.8ex\vrule}\hrule }}\par}
 \renewenvironment{proof}{\noindent{\bf Proof. }}{\myendproof}
 \newenvironment{appProof}[1]{\noindent{\bf Proof of
     #1. }}{\myendproof\vskip 0.1in}

 \setboolean{ALC@noend}{true}
\providebool{techreport}
\setbool{techreport}{true}

\newtheorem{theorem}{Theorem}
\newtheorem{lemma}{Lemma}
\newtheorem{corollary}{Corollary}
\newtheorem{example}{Example}

\newtheorem{definition}{Definition}

\graphicspath{{fig/multiframe/}{fig/arbitrary/}{fig/dag/}} 




\ifbool{techreport}{
\addtolength{\textheight}{14pt}
}{
\addtolength{\textheight}{6pt}
}
\newcommand{\citetechreport}[1]{\ifbool{techreport}{}{ in the report \cite{DBLP:journals/corr/abs-k2q}}}


\usetikzlibrary{%
  arrows,%
  shapes.misc,
  shapes.arrows,%
  chains,%
  matrix,%
  positioning,
  scopes,%
  decorations.pathmorphing,
  shadows%
}

\pgfdeclarelayer{background}
\pgfdeclarelayer{foreground}
\pgfsetlayers{background,main,foreground}
\tikzstyle{materia}=[draw, fill=white, text width=1.0em, text centered,
  minimum height=4em,drop shadow]
\tikzstyle{practica} = [materia, text width=18em, minimum width=8em,
align =left,
  minimum height=3em, rounded corners, drop shadow]
\tikzstyle{texto} = [above, text width=6em, text centered]
\tikzstyle{linepart} = [draw, thick, color=blue!50, -latex', dashed]
\tikzstyle{line} = [draw, line width = 2pt, color=blue!50, -latex']
\tikzstyle{ur}=[draw, text centered, minimum height=0.01em]

\newcommand{\practica}[2]{node (p#1) [practica]
  {\\{\footnotesize{#2}}}}


\ifbool{techreport}{
\pagestyle{plain}
}{
\pagestyle{empty}
}

\title{\huge \textbf{\textrm{k$^2$Q}}: A Quadratic-Form Response Time and Schedulability Analysis Framework for Utilization-Based Analysis}

\author{
    Jian-Jia Chen and Wen-Hung Huang\\
    Department of Informatics\\
    TU Dortmund University, Germany
    \and
    Cong Liu\\
    Department of Computer Science\\
    The University of Texas at Dallas
}
\vspace{3mm}


\begin{document}

\maketitle

\ifbool{techreport}{
\thispagestyle{plain}
}{
\thispagestyle{empty}
}

\begin{abstract}
  In this paper, we present a general response-time analysis and
  schedulability-test framework, called \frameworkkq{} (k to Q). It
  provides automatic constructions of closed-form quadratic bounds or
  utilization bounds for a wide range of applications in real-time
  systems under fixed-priority scheduling. The key of the framework is
  a $k$-point schedulability test or a $k$-point response time
  analysis that is based on the utilizations and the execution times
  of $k-1$ higher-priority tasks.  The natural condition of
  \frameworkkq{} is a quadratic form for testing the schedulability
  or analyzing the response time. The response time analysis and the
  schedulability analysis provided by the framework can be viewed as a
  ``blackbox'' interface that can result in sufficient
  utilization-based analysis.  Since the framework is independent from
  the task and platform models, it can be applied to a wide range of applications.
\ifbool{techreport}{

We show the generality of \frameworkkq{} by
  applying it to several different task models. \frameworkkq{} produces better uniprocessor and/or multiprocessor schedulability tests  not only for the traditional sporadic task model, but also
   more expressive task models such as the generalized multi-frame task model and the acyclic task model.  Another interesting contribution is that in the past, exponential-time schedulability tests were typically
  not recommended and most of time ignored due to high
  complexity. We have successfully shown that
  exponential-time schedulability tests may lead to good
  polynomial-time tests (almost automatically) by using the
  \frameworkkq{} framework.}{} \ifbool{techreport}{Analogously, a similar concept to test
  only $k$ points with a different formulation has been studied by us in
  another framework, called \frameworkku{}, which provides hyperbolic
  bounds or utilization bounds based on a different formulation of
  schedulability test.  With the quadratic and hyperbolic expressions,
  \frameworkkq{} and \frameworkku{} frameworks can be used to provide
  many quantitive features to be measured, like the total utilization
  bounds, speed-up factors, etc., not only for uniprocessor scheduling
  but also for multiprocessor scheduling. }{}
\end{abstract} 

\section{Introduction}
\label{sec:intro}

Analyzing the worst-case timing behaviour to ensure the timeliness of
embedded systems is essential for building reliable and dependable
components in cyber-physical systems. Due to the interaction and
integration with external and physical devices, many real-time and
embedded systems are expected to handle a large variety of
workloads. Towards such dynamics, several formal
real-time task models are established to represent these workloads with various
characteristics, such as the the generalized multi-frame task
model \cite{DBLP:journals/rts/BaruahCGM99,DBLP:conf/ecrts/StiggeY12} and the self-suspending task model~\cite{suspension}. 
\ifbool{techreport}{
 To analyze the worst-case response time or to ensure the timeliness of
the system, for each of these task models, researchers tend to develop
dedicated techniques that result in schedulability tests with
different computation complexity and accuracy of the
analysis. Although many successful results have been developed, after
many real-time systems researchers devoted themselves for many years,
there does not exist a general framework that can provide  efficient and
effective analyses for different task models.}{
Although many successful results have been developed, after
many real-time systems researchers devoted themselves for many years,
there does not exist a general framework that can provide efficient analyses for different task models.
 For analysis techniques developed over the years for analyzing different task models (exhibiting different characteristics though), redundancy may exist in them because many of the developed techniques tend to apply several fundamental analysis frameworks such as the response time analysis and the busy-window analysis  \cite{DBLP:conf/rtss/Lehoczky90}. Our motivation behind this paper is: can we design a general, powerful, yet easy-to-use schedulability analysis framework that is applicable to a wide selection of real-time task models?
}

\ifbool{techreport}{
Prior to this paper, we have presented a  general
schedulability analysis framework \cite{DBLP:journals/corr/abs-1501.07084,ChenHLRTSS2015}, called \frameworkku{}, that can be applied
in uniprocessor scheduling and multiprocessor scheduling, as long as
the schedulability condition can be written in a specific form to
test only $k$ points.  For example, to verify the schedulability of a
(constrained-deadline) sporadic real-time task $\tau_k$ under fixed-priority scheduling
in uniprocessor systems, the time-demand analysis (TDA) developed in
\cite{DBLP:conf/rtss/LehoczkySD89} can be adopted.

}{
Motivated by this, we present in this paper a rather general
schedulability analysis framework, called \frameworkkq{} (k to Q), that can be applied
in uniprocessor and multiprocessor scheduling for analyzing various real-time task models. 
}
The general concept to obtain sufficient schedulability tests in
the 
\frameworkkq{} framework is to test only a subset of time
points for verifying the schedulability. This idea is implemented in
the \frameworkkq{} framework by providing a $k$-point last-release
schedulability test, which only needs to test $k$ points under
\textit{any} fixed-priority scheduling when checking schedulability of
the task with the $k^{th}$ highest priority in the system.  Moreover,
this concept is further extended to provide a safe
upper bound of the worst-case response time. The response time
analysis and the schedulability analysis provided by the framework can
be viewed as a ``blackbox'' interface that can result in sufficient
utilization-based analysis. 

\noindent\textbf{Related Work.}
There have been several results in the literature with respect to utilization-based, e.g., \cite{liu1973scheduling,HanTyan-RTSS97,journals/tc/LeeSP04,DBLP:conf/rtas/WuLZ05,kuo2003efficient,bini2003rate,RTSS14a}, and non-utilization-based, e.g., \cite{DBLP:conf/rtss/ChakrabortyKT02,DBLP:conf/ecrts/FisherB05}, schedulability tests for the sporadic real-time task model and its generalizations in uniprocessor systems.
Most of the existing utilization-based schedulability analyses focus
on the total utilization bound. That is, if the total utilization of
the task system is no more than the derived bound, the task system is
schedulable by the scheduling policy. For example, the total
utilization bounds derived in
\cite{liu1973scheduling,HanTyan-RTSS97,DBLP:dblp_journals/tc/BurchardLOS95}
are mainly for rate-monotonic (RM) scheduling, in which the results in
\cite{HanTyan-RTSS97} can be extended for arbitrary fixed-priority
scheduling. Kuo et al. \cite{kuo2003efficient} further improve the
total utilization bound by using the notion of divisibility. Lee et
al. \cite{journals/tc/LeeSP04} use linear programming formulations for
calculating total utilization bounds when the period of a task can be selected.
Moreover, Wu et al. \cite{DBLP:conf/rtas/WuLZ05} adopt the
Network Calculus to analyze the total utilization bounds of several
real-time task models. 

Bini and Buttazzo \cite{DBLP:journals/tc/BiniB04} propose a framework
of schedulability tests that can be tuned to balance the time
complexity and the acceptance ratio of the schedulability test for
uniprocessor sporadic task systems.  The efficient tests in
\cite{DBLP:journals/tc/BiniB04} are based on an observation to test
whether the parameters of a task set fall into a schedulable region of
the fixed-priority scheduling policy.  Our strategy and philosophy are
simpler than \cite{DBLP:journals/tc/BiniB04}.  First, we only look at
the parameters of task $\tau_k$ (the task defined as the $k^{th}$
highest priority) that is under analysis by assuming that the
higher-priority tasks are already verified to be schedulable. Second,
similar to our recent general schedulability analysis framework
\frameworkku{} \cite{ChenHLRTSS2015}, we also apply the key idea of evaluating only $k$
points.  The tunable strategies in \cite{DBLP:journals/tc/BiniB04}
consider to examine a subset of the time points for schedulability
tests.

Distinct from the results in \cite{DBLP:journals/tc/BiniB04}, our
objective in this paper is to find closed-form schedulability tests
and response-time analyses that can be independent from task and
platform models. We target at sufficient schedulability tests and response time analyses that are not exact but can be calculated efficiently in linear-time or polynomial-time complexity.
\ifbool{techreport}{

\noindent{\bf Comparison to \frameworkku{}:} 
Even though \frameworkkq{} and \frameworkku{} share the
same idea to test and evaluate only $k$ points, they are based on
completely different criteria for testing.  In \frameworkku{}, all the testings and
formulations are based on \emph{only the higher-priority task
  utilizations}. In \frameworkkq{}, the testings are based \emph{not
  only on the higher-priority task utilizations, but also on the
  higher-priority task execution times}.  The above difference
in the formulations results in completely different properties and
mathematical closed-forms.  The
natural condition of \frameworkkq{} is a \emph{quadratic form} for testing
the schedulability  or the response
time of a task, whereas the natural condition of \frameworkku{} is a
\emph{hyperbolic form} for testing the schedulability of a task.

\emph{If one framework were dominated by another or these two frameworks
were just with minor difference in mathematical formulations, it
wouldn't be necessary to separate and present
them as two different frameworks.} Both frameworks are in fact needed
and have to be applied for different cases.  Here,
we only shortly explain their differences,
advantages, and disadvantages in this paper.  For completeness, another
document has been prepared in
\cite{DBLP:journals/corr/framework-compare} to present the similarity,
the difference and the characteristics of these two frameworks in details.

Since the formulation of \frameworkku{} is more restrictive than
\frameworkkq{}, its applicability is limited by the possibility to
formulate the tests purely by using higher-priority task utilizations
without referring to their execution times. There are cases, in which
formulating the higher-priority interference by using only task
utilizations for \frameworkku{} is troublesome or over-pessimistic. For such cases,
further introducing the upper bound of the execution time by using
\frameworkkq{} is more precise. Most of the presented cases, except the one in
uniprocessor constrained-deadline systems in Appendix
B\citetechreport{} are in the above category.
Although \frameworkkq{} is more general, it is not as precise as
\frameworkku{}, if we can formulate the schedulability tests into both
frameworks with the same parameters.  In such cases, the same
pseudo-polynomial-time (or exponential time) test is used, and the
utilization bound or speed-up factor analysis derived from the \frameworkku{} framework is, in general,
tighter and better.

In a nutshell, \frameworkkq{} is more general, whereas \frameworkku{} is
more precise. If an exact schedulability test can
be constructed and the test can be converted into \frameworkku{},
e.g., uniprocessor scheduling for constrained-deadline task sets,
then, adopting \frameworkku{} leads to tight results. For example, by using
\frameworkkq{}, we can reach the conclusion that the utilization bound
for rate-monotonic scheduling is $2-\sqrt{2} \approx 0.586$, which is
less precise than the Liu and Layland bound $\ln{2}\approx 0.693$,
 a simple implication by using \frameworkku{}. However, if we
are allowed to change the execution time and period of a task for
different job releases (called acyclic task model in
\cite{DBLP:journals/tc/AbdelzaherSL04}), then the tight utilization bound
$2-\sqrt{2}$ can be easily achieved by using \frameworkkq{}. 

Due to the fact the \frameworkku{} is more precise (with respect to the utilization bound) when the exact
tests can be constructed, even though \frameworkku{} is more
restrictive, both are needed for different
cases. Both \frameworkku{} and \frameworkkq{} are general enough to cover a range of spectrum of
applications, ranging from uniprocessor systems to multiprocessor
systems.  For more information and comparisons, please refer to
\cite{DBLP:journals/corr/framework-compare}.

}{
Although the objective is similar to \frameworkku{}, 
\frameworkkq{} in this paper applies completely different criteria from
\frameworkku{} for testing purposes. In \frameworkku{}, all the
testings and formulations are based on \emph{only the higher-priority
  task utilizations}. In \frameworkkq{}, the testings are based
\emph{not only on the higher-priority task utilizations, but also on
  the higher-priority task execution times}.  The above difference in
the formulations results in completely different properties and
mathematical closed-forms.  The natural condition of \frameworkkq{} is
a \emph{quadratic form} for testing the schedulability, whereas the
natural condition of \frameworkku{} is a \emph{hyperbolic form} for
testing the schedulability or the response time of a task.

The  \frameworkkq{} and \frameworkku{} frameworks do not dominate each other, and should be applied for different cases (i.e., some task models are better handled by one framework than the other).\footnote{For completeness, another document has been prepared in \cite{DBLP:journals/corr/framework-compare} to present the similarity, the difference and the characteristics of these two frameworks in details.}
Since the formulation of \frameworkku{} is more restrictive than
\frameworkkq{} (due to using only higher-priority task utilizations
without referring to their execution times), there are cases, in which
formulating the higher-priority interference by using only task
utilizations for \frameworkku{} is troublesome or over-pessimistic. For such cases,
further introducing the upper bound of the execution time by using
\frameworkkq{} is more precise. 
}

\noindent\textbf{Contributions.}
The key contribution of this paper is a general schedulability and
response-time analysis framework, \frameworkkq{}, that can be easily applied to
analyze a number of complex real-time task models, on both
uniprocessor and multiprocessor systems. 
A key novelty of \frameworkkq{} that allows a rather general analysis framework is that we
do not specifically seek for the total utilization bound. Instead, we
look for the critical value in the specified sufficient schedulability
test while verifying the schedulability of task $\tau_k$. This
critical value of task $\tau_k$ gives the difficulty of task $\tau_k$
to be schedulable under the scheduling policy. 
We present several properties of \frameworkkq{}, which
provide a series of closed-form solutions to be adopted
for sufficient tests and worst-case response time analyses for
real-time task models, as long as a corresponding $k$-point last-release
schedulability test (Definition~\ref{def:kpoints}) or a $k$-point last-release
response-time analysis (Definition~\ref{def:kpoints-response}) can be
constructed.  \ifbool{techreport}{The generality of \frameworkkq{} is
supported by demonstrating that either new or better results compared to the
state-of-the-art can be easily obtained using \frameworkkq{}.}{Due to
the space constraint, we are only able to provide some simple examples in this paper. More comprehensive examples and applications can be found in the report in     \cite{DBLP:journals/corr/abs-k2q}. The detailed evaluations,
    compared to other approaches, are in
    \cite{DBLP:journals/corr/framework-compare}.} 
\ifbool{techreport}{
Examples include: 
\begin{itemize}
\item Several utilization-based schedulability and response analyses
  for uniprocessor sporadic task systems are provided in Section
  \ref{sec:sporadic}\ifbool{techreport}{}{ and with more complete
    results in Appendix B in \cite{DBLP:journals/corr/abs-k2q}.}  The
  utilization-based worst-case response-time analysis in
  Theorem~\ref{theorem:response-time-sporadic} in Section
  \ref{sec:sporadic} is identical to the response-time analysis by
  Bini et al. \cite{bini-RTSS2015} developed in parallel.
\item We improve the schedulability tests in multiprocessor global
  fixed-priority scheduling in Appendix C. A
  general condition is a quadratic bound. Specifically, we show that the
  speed-up (capacity augmentation) factor of global RM is
  $\frac{3+\sqrt{7}}{2}\approx 2.823$ for implicit-deadline sporadic
  task systems, which improves upon the existing best speed-up factor $3$ presented in \cite{DBLP:conf/opodis/BertognaCL05}. 
\item We provide, to the best of our knowledge, the first polynomial-time
  worst-case response time analysis for sporadic real-time tasks with
  jitters \cite{DBLP:conf/rtss/BaruahCM97,238595} in
  \ifbool{techreport}{Appendix D.}{Appendix D in \cite{DBLP:journals/corr/abs-k2q}.}
\item We also demonstrate how to convert
  exponential-time schedulability tests of generalized multi-frame
  task models
  \cite{DBLP:journals/rts/BaruahCGM99,DBLP:conf/rtcsa/TakadaS97} to
  polynomial-time tests by using the \frameworkkq{} framework in 
  \ifbool{techreport}{Appendix E.}{Appendix E in \cite{DBLP:journals/corr/abs-k2q}.}
\item The above results are for task-level fixed-priority scheduling
  policies. We further explore mode-level fixed-priority scheduling
  policies by
  studying the acyclic task model
  \cite{DBLP:journals/tc/AbdelzaherSL04} and the multi-mode task model
  \cite{DBLP:conf/rtas/DavisFPS14}.\ifbool{techreport}{\footnote{Although
    the focus in \cite{DBLP:conf/rtas/DavisFPS14} is for
    variable-rate-behaviour tasks, we will refer such a model as a
    multi-mode task model.}}{} We conclude a
  quadratic bound and a utilization bound $2-\sqrt{2}$ for RM
  scheduling policy. The utilization bound is the same as the result in
  \cite{DBLP:journals/tc/AbdelzaherSL04}. They can be further
  generalized to handle more generalized task models,
including the digraph task model \cite{DBLP:conf/rtas/StiggeEGY11},
the recurring real-time task model \cite{DBLP:conf/rtss/Baruah10}. 
   This is presented in
  \ifbool{techreport}{Appendix F.}{Appendix F in \cite{DBLP:journals/corr/abs-k2q}.}
\end{itemize}
}{
}

\ifbool{techreport}{The emphasis of this paper is to show the generality of the
\frameworkkq{} framework by demonstrating via several task models. The
tests and analytical results in the framework are with low
complexity, but can still be shown to provide good results through
speed-up factor or utilization bound analyses.  We also note a somehow surprising finding through developing this framework: 
 \emph{in the past, exponential-time schedulability tests were typically
  not recommended and most of time ignored, as this requires very high
  complexity. We have successfully shown in this paper that
  exponential-time schedulability tests may lead to good
  polynomial-time tests (almost automatically) by using the
  \frameworkkq{} framework. Therefore, this framework may also open the
possibility to re-examine some tests with exponential-time complexity to improve their 
applicability.}}{}


\section{Basic Task and Scheduling Models}
\label{sec:model}

This section presents the sporadic real-time task model, as the basis
for our presentations. Even though the framework targets at more
general task models, to ease the presentation flow, we will start with
the sporadic task models.  A sporadic task $\tau_i$ is released
repeatedly, with each such invocation called a job. The $j^{th}$ job
of $\tau_i$, denoted $\tau_{i,j}$, is released at time $r_{i,j}$ and
has an absolute deadline at time $d_{i,j}$. Each job of any task
$\tau_i$ is assumed to have execution time $C_i$. Here in this paper,
whenever we refer to the execution time of a job, we mean for the
worst-case execution time of the job, since all the analyses we use are safe by only considering the worst-case execution time.  The response time of a job is
defined as its finishing time minus its release time. Successive jobs
of the same task are required to be executed in sequence. Associated with
each task $\tau_i$ are a period $T_i$, which specifies the minimum
time between two consecutive job releases of $\tau_i$, and a deadline
$D_i$, which specifies the relative deadline of each such job, i.e.,
$d_{i,j}=r_{i,j}+D_i$. The worst-case response time of a task $\tau_i$
is the maximum response time among all its jobs. 
The utilization of a task $\tau_i$ is defined
as $U_i=C_i/T_i$.

A sporadic task system $\tau$ is an implicit-deadline
system if $D_i = T_i$ holds for each $\tau_i$. A sporadic task system
$\tau$ is a constrained-deadline system if $D_i \leq T_i$
holds for each $\tau_i$.  Otherwise, such a sporadic task system
$\tau$ is an arbitrary-deadline system.

A task is said \emph{schedulable} by a scheduling policy if all of its
jobs can finish before their absolute deadlines, i.e., the worst-case
response time of the task is no more than its relative deadline.  A
task system is said \emph{schedulable} by a scheduling policy if all
the tasks in the task system are schedulable. A \emph{schedulability
  test} expresses sufficient schedulability conditions to ensure the feasibility
of the resulting schedule by a scheduling policy.

Throughout the paper, we will focus on fixed-priority preemptive
scheduling. That is, each task is associated with a priority level\ifbool{techreport}{ (except in Appendix F).}{.}
For a uniprocessor system, the scheduler always dispatches the job
with the highest priority in the ready queue to be executed.  For a
multiprocessor system, we consider multiprocessor global scheduling on
$M$ identical processors, in which each of them has the same
computation power. For global multiprocessor scheduling, there is a
global queue and a global scheduler to dispatch the jobs. We consider
only global fixed-priority scheduling. At any time, the
$M$-highest-priority jobs in the ready queue are dispatched and
executed on these $M$ processors.

Note that the framework is not only limited to the above task and platform models. These terminologies are introduced only for the simplicity of presentation and illustrating some examples. \ifbool{techreport}{}{The applications with other platform and task models can be found\citetechreport{}.}

\ifbool{techreport}{
\noindent{\bf Speed-Up Factor and Capacity Augmentation Factor:}
To quantify the error of the schedulability tests or
the scheduling policies, the concept of resource augmentation by using
speed-up factors \cite{Phillips:stoc97} and the capacity augmentation
factors \cite{Li:ECRTS14} has been adopted.  For example, global DM in
general does not have good utilization bounds to schedule a set of
sporadic tasks on $M$ identical processors, due to ``Dhall's effect'' \cite{doi:10.1287/opre.26.1.127}.  However, if we constrain
the total utilization $\sum_{\tau_i} \frac{C_i}{M T_i} \leq
\frac{1}{b}$, the density $\frac{C_k+\sum_{\tau_i \in hp(\tau_k)}
  C_i}{M D_k} \leq \frac{1}{b}$ for each task $\tau_k$, and the
maximum utilization $\max_{\tau_i} \frac{C_i}{\min\{T_i, D_i\}} \leq
\frac{1}{b}$, it is possible to provide the schedulability guarantee
of global RM by setting $b$ to $3-\frac{1}{M}$
\cite{DBLP:conf/rtss/AnderssonBJ01,DBLP:conf/rtss/Baker03,DBLP:conf/opodis/BertognaCL05}. Such
a factor $b$ has been recently named as a \emph{capacity augmentation
  factor} \cite{Li:ECRTS14}.  Note that the capacity augmentation
bound was defined without taking this simple condition
$\frac{C_k+\sum_{\tau_i \in hp(\tau_k)} C_i}{M D_k} \leq \frac{1}{b}$
in \cite{Li:ECRTS14}, as they focus on implicit-deadline systems. For
constrained-deadline systems, adding such a new constraint is a
natural extension.

An algorithm ${\cal A}$ is with speed-up factor $b$: \emph{ If there exists a feasible schedule for the task
    system, it is schedulable by algorithm ${\cal A}$ by speeding up
    (each processor) to $b$ times as fast as in the original
    platform (speed).}
A sufficient schedulability test for scheduling
algorithm ${\cal A}$ is with speed-up factor $b$:
\emph{If the task system cannot pass the sufficient
    schedulability test, the task set is not schedulable by any
    scheduling algorithm if (each processor) is slowed down to
    $\frac{1}{b}$ times of the original platform speed.}
Note that if the
capacity augmentation factor is $b$, the speed-up factor is also
upper-bounded by $b$.
}{}

\vspace{-2mm}
\section{Analysis Flow}
\label{sec:flow}

The framework focuses on testing the schedulability and the response
time for a task $\tau_k$, under the assumption that the
required properties (i.e., worst-case response time or the
schedulability) of the higher-priority tasks are already verified and
provided. We will implicitly assume that all the higher-priority tasks
are already verified and the required properties are already obtained.
Therefore, this framework has to be applied for each of the given
tasks. To ensure whether a task system is schedulable by the given
scheduling policy, the test has to be applied for all the tasks.  Of
course, the results can be extended to test the schedulability of a
task system in linear time complexity or to allow on-line admission
control in constant time complexity if the schedulability condition
(or with some more pessimistic simplifications) is monotonic. Such
extensions are presented only for trivial cases.

We will only present the schedulability test of a certain task
$\tau_k$, that is analyzed, under the above assumption. For
notational brevity, in the framework presentation, we will implicitly
assume that there are $k-1$ tasks, say $\tau_1, \tau_2, \ldots,
\tau_{k-1}$ with higher-priority than task $\tau_k$. We will use
$hp(\tau_k)$ to denote the set of these $k-1$ higher-priority tasks,
when their orderings do not matter. Moreover, we only
consider the cases when $k \geq 2$, since $k=1$ is pretty trivial.

\section{\frameworkkq{}}
\label{sec:framework}

This section presents the basic properties of the \frameworkkq{}
framework for testing the schedulability of task $\tau_k$ in a given
set of real-time tasks (depending on the specific models given in each
application). Before presenting the
framework, we first give a simple example to explain the underlying
concepts by using an
implicit-deadline sporadic task system $\tau$, in which $D_i =
T_i$ for every $\tau_i \in \tau$. The exact schedulability test to
verify whether task $\tau_k$ can meet its deadline under
fixed-priority scheduling on uniprocessor systems is to check
\begin{equation}
\label{eq:TDA-implicit}
\exists t \mbox{ with } 0 < t \leq T_k {\;\; and \;\;} C_k +
\sum_{\tau_i \in hp(\tau_k)} \ceiling{\frac{t}{T_i}}C_i \leq t,
\end{equation}
where $hp(\tau_k)$ is the set of tasks with higher priority than
$\tau_k$. Instead of testing all the time points $t$ in the range of
$0$ and $T_k$, for a sufficient schedulability test, we can greedily
only consider to test the time points $(\ceiling{\frac{T_k}{T_i}}-1)T_i$ for
$\tau_i \in hp(\tau_k)$ and $t=T_k$. If $ C_k + \sum_{\tau_i \in
  hp(\tau_k)} \ceiling{\frac{t}{T_i}}C_i \leq t$ holds in one of those
$k$ tested time points, then we can conclude that $\tau_k$ can be feasibly
scheduled under this scheduling policy.

To implement to above testing concept, we need two definitions: 1)
Definition \ref{def:lease-release} defines the last release time
ordering so that we can formulate the problem with linear algebra, 2)
Definition \ref{def:kpoints} defines an abstracted schedulability test
that can be used to model general schedulability tests regardless of
the task and platform model.
\begin{definition}[Last Release Time Ordering]
\label{def:lease-release}
Let $\pi$ be the last release time ordering assignment as a bijective
function $\pi:  hp(\tau_k)
\rightarrow \setof{1, 2, \ldots,k-1}$ to define the last release time ordering
of task $\tau_j \in hp(\tau_k)$ in the window of interest. Last release time orderings are
numbered from $1$ to $k-1$, i.e., $|hp(\tau_k)|$, where 1 is the earliest and $k-1$ the
latest. \myendproof
\end{definition}

The last release time ordering is a very important property in the
whole framework. When testing the schedulability or analyzing the
worst-case response time of task $\tau_k$, we do not need the
priority ordering of the higher-priority tasks in
$hp(\tau_k)$. But, we need to know how to order the $k-1$ higher-priority tasks so that we can formulate the test with simple and linear arithmetics based on the total order.
For the rest of this paper, the ordering of the $k-1$ higher-priority tasks implicitly refers to their last release time ordering (except explanations regarding the last release time ordering when referring to Example \ref{example-3tasks}).
In the \frameworkkq{} framework, we are only interested
to test only $k$ time points. More precisely, we are only interested
to test whether task $\tau_k$ can be successfully executed before the
last release time of a higher-priority task in the testing
window. Therefore, the last release time ordering provides a total
order so that we can transform the schedulability tests into the following definition.

\begin{definition}
  \label{def:kpoints}
  
  A $k$-point last-release schedulability test under a given last release time ordering
  $\pi$ of the $k-1$ higher-priority tasks is a sufficient
  schedulability test of a fixed-priority scheduling policy, that verifies the existence of  $t_j$ with $j=1,2,\ldots,k$ such that 
  $0 \leq t_1 \leq t_2 \leq \cdots \leq t_{k-1}  \leq t_k$ and \begin{equation}
    \label{eq:precodition-schedulability}
    C_k + \sum_{i=1}^{k-1} \alpha_i t_i U_i + \sum_{i=1}^{j-1} \beta_i C_i \leq t_j, 
  \end{equation}
  where $C_k > 0$, for $i=1,2,\ldots,k-1$, $\alpha_i > 0$, $U_i > 0$, $C_i \geq 0$, and $\beta_i >0$ are dependent upon the setting
  of the task models and task $\tau_i$. \myendproof
\end{definition}

\begin{example}
  \label{example-1} {\bf Implicit-deadline task systems}: For an
  implicit-deadline sporadic task system $\tau$, suppose that we are
  interested to test whether task $\tau_k$ can meet its deadline or
  not under a fixed-priority scheduling algorithm on a uniprocessor
  platform. Let $|hp(\tau_k)|$ be $k-1$ and the tasks in $hp(\tau_k)$ be
  ordered by $(\ceiling{\frac{T_k}{T_i}}-1)T_i$ non-decreasingly,
  i.e., $t_1 \leq t_2 \leq \cdots \leq t_{k-1} \leq t_k = T_k$.  For a
  specific testing point at time $t_j$ for a certain $j=1,2,\ldots,k$,
  the function $\ceiling{\frac{t_j}{T_i}}C_i$ (to quantify the
  workload due to the jobs released by a higher-priority task $\tau_i \in
  hp(\tau_k)$) has two cases: 1) if $i < j$, due to the definition of
  $t_i$ as $(\ceiling{\frac{T_k}{T_i}}-1)T_i$ and $t_i \leq t_j \leq
  T_k$, we know that $\ceiling{\frac{t_j}{T_i}}C_i$ is upper bounded
  by $\ceiling{\frac{t_i}{T_i}}C_i+C_i = t_i U_i + C_i$; 2) if $i \geq
  j$, due to the definition of $t_i$ as
  $(\ceiling{\frac{T_k}{T_i}}-1)T_i$ and $t_j \leq t_i \leq T_k$, we
  know that $\ceiling{\frac{t_j}{T_i}}C_i$ is upper bounded by
  $\ceiling{\frac{t_i}{T_i}}C_i = t_i U_i$.\footnote{Since $t_i$ is an integer multiple of $T_i$, the property $\ceiling{\frac{t_i}{T_i}}C_i = t_i U_i$ holds.}

  By the above analysis, for a given $j=1,2,\ldots,k$, we know that
  $C_k + \sum_{i=1}^{k-1} \ceiling{\frac{t_j}{T_i}}C_i \leq C_k +
  \sum_{i=1}^{k-1} t_i U_i + \sum_{i=1}^{j-1} C_i$.  Therefore, we
  know that task $\tau_k$ is schedulable by the fixed-priority
  scheduling if there exists $j \in \setof{1,2,\ldots,k}$ such that
  \begin{align*}
    \small
    C_k + \sum_{i=1}^{k-1} t_i U_i + \sum_{i=1}^{j-1}  C_i \leq t_j.
  \end{align*}\normalsize  
  In other words, by the specific index rule of the tasks in
  $hp(\tau_k)$ and setting $\alpha_i=1$ and $\beta_i=1$ for every task
  $\tau_i$ in $hp(\tau_k)$, we reach a concrete example for
  Definition~\ref{def:kpoints}. \endproof
\end{example}


A concrete example is provided  here
for illustrating Example~\ref{example-1}.
\begin{example}\label{example-3tasks-v0}
  Consider that $k=3$ and $|hp(\tau_k)|$ is $2$. For the two tasks in
  $hp(\tau_k)$, let $C_1=2, U_1=0.2, T_1=10$ and $C_2=4, U_2=0.5,
  T_2=8$.  Suppose that $t_3=D_3=T_3=36$.  By the transformation in
  Example~\ref{example-1}, we know that $t_1=30$ and $t_2=32$. The last release time ordering $\pi$ of $\setof{\tau_1, \tau_2}$ follows the index, i.e., $\pi:\setof{\tau_1, \tau_2} \rightarrow \setof{1, 2}$. Moreover, $\alpha_1=\alpha_2=\beta_1=\beta_2=1$.
\hfill\myendproof
\end{example}

Similar to Definition~\ref{def:kpoints}, we can also define an
abstracted worst-case response time analysis as follows:
\begin{definition}
  \label{def:kpoints-response}
  A $k$-point last-release response time analysis is a safe response time
  analysis of a fixed-priority scheduling policy under a given last release time
  ordering $\pi$ of the $k-1$ higher-priority tasks by finding the
  maximum
  \begin{equation}
    \label{eq:precond-objective-k}
   t_k = C_k + \sum_{i=1}^{k-1} \alpha_i t_i U_i + \sum_{i=1}^{k-1} \beta_i C_i,
  \end{equation}
with $0 \leq t_1 \leq t_2 \leq \cdots \leq t_{k-1} \leq t_{k}$ and
  \begin{align}
    \label{eq:precond-objective-k2}
    C_k + \sum_{i=1}^{k-1} \alpha_i t_i U_i + \sum_{i=1}^{j-1} \beta_i C_i > t_j, & \forall j=1,2,\ldots,k-1,
  \end{align}
  where $C_k > 0$, $\alpha_i > 0$, $U_i > 0$, $C_i \geq 0$, and $\beta_i >0$ are dependent upon the setting
  of the task models and task $\tau_i$. \myendproof
\end{definition}

\begin{example}
  \label{example-response-time} {\bf Response-time for
    constrained-deadline task systems}: Suppose that $R_k$ is the
  exact worst-case response time for task $\tau_k$ and $R_k \leq T_k$ under uniprocessor
  fixed-priority scheduling. That is, by Eq.~\eqref{eq:TDA-implicit},
  $C_k + \sum_{\tau_i \in hp(\tau_k)} \ceiling{\frac{t}{T_i}}C_i > t$
  for any $0 < t < R_k$ and $C_k + \sum_{\tau_i \in hp(\tau_k)}
  \ceiling{\frac{R_k}{T_i}}C_i = R_k$. Similar to
  Example~\ref{example-1}, let $|hp(\tau_k)|$ be $k-1$ and the tasks in
  $hp(\tau_k)$ be ordered by $(\ceiling{\frac{R_k}{T_i}}-1)T_i$
  non-decreasingly, i.e., $t_1 \leq t_2 \leq \cdots \leq t_{k-1} \leq
  R_k$. With the same analysis in Example~\ref{example-1}, we know
  that $ C_k + \sum_{i=1}^{k-1} t_i U_i + \sum_{i=1}^{j-1} C_i > t_j$
  for $j=1,2,\ldots,k-1$ and $R_k \leq C_k + \sum_{i=1}^{k-1} t_i U_i
  + \sum_{i=1}^{k-1} C_i$. As a result, by the specific index rule of
  the tasks in $hp(\tau_k)$ and setting $\alpha_i=1$ and $\beta_i=1$
  for every task $\tau_i$ in $hp(\tau_k)$, we reach a concrete example
  for Definition~\ref{def:kpoints-response}.
\endproof
\end{example}


\subsection{Important Notes} 
Before presenting the analyses based on Definition~\ref{def:kpoints}
and Definition~\ref{def:kpoints-response}, we would like to first
explain the important assumptions and the flow to use the analytical
results.  Throughout the paper, we implicitly assume that $t_k > 0$
when Definition~\ref{def:kpoints} is used. Moreover, we
only consider non-trivial cases, in which $C_k >0$ and $0 < U_i \leq
1$ for $i=1,2,\ldots,k-1$. The definition of $t_k$
depends on how Definition~\ref{def:kpoints} is constructed based on
the original schedulability test, usually equal to the length of the
interval (of the points to be tested in the original schedulability test),
e.g., $t_k=T_k=D_k$ in Example~\ref{example-1}.  In most of the cases,
we can set $t_k$ as $D_k$. \ifbool{techreport}{
But, it can also be set to other cases, to
be demonstrated in Appendix C for global RM scheduling.  
}{But, it can also be set to other cases, which can be found\citetechreport{}. }

In Definition~\ref{def:kpoints}, the $k$-point last-release
schedulability test is a sufficient schedulability test that tests
only $k$ time points, defined by the $k-1$ higher-priority tasks and
task $\tau_k$. 
Similarly,  in Definition~\ref{def:kpoints-response}, a $k$-point
last-release response time analysis provides a safe response time
by only testing whether task $\tau_k$ has already finished earlier at
$k-1$ points, each defined by a higher-priority task.

In both cases in Definitions~\ref{def:kpoints} and
\ref{def:kpoints-response}, the last release time ordering $\pi$ is
assumed to be given. In some cases, this ordering can be easily
obtained. For such cases, all the lemmas in this section can be directly adopted.
However, in most of the cases in our demonstrated task
models, we have to test all possible last release time orderings and
take the worst case. Fortunately, we will show that finding the
worst-case ordering is not a difficult problem, which requires to sort
the $k-1$ higher-priority tasks under a simple criteria, in Lemmas~\ref{lemma:general-sorting} and~\ref{lemma:general-response-sorting}. 
Therefore, for such cases, the lemmas in this section have to be adopted by combining with Lemma~\ref{lemma:general-sorting}~or~\ref{lemma:general-response-sorting}.

We first assume that the corresponding coefficients $\alpha_i$ and
$\beta_i$ in
Definitions~\ref{def:kpoints}~and~\ref{def:kpoints-response} are
given. How to derive them will be discussed in the following sections.
Clearly their values are highly dependent upon the task models and the
scheduling policies.  Provided that these coefficients $\alpha_i$,
$\beta_i$, $C_i$, $U_i$ for every higher-priority task $\tau_i \in
hp(\tau_k)$ are given, we analyze (1) the response time by finding the
extreme case for a given $C_k$ (under
Definition~\ref{def:kpoints-response}), or (2) the schedulability by
finding the extreme case for a given $C_k$ and $D_k$. Therefore, the
\frameworkkq{} framework provides utilization-based schedulability
analyses and response time analyses automatically if the
corresponding parameters $\alpha_i$ and $\beta_i$ can be defined to
ensure that the tests in Definitions~\ref{def:kpoints}
and~\ref{def:kpoints-response} are safe.

\frameworkkq{} can be used by a wide range of applications, as long
as the users can properly specify the corresponding task properties
$C_i$ and $U_i$ and the constant coefficients $\alpha_i$ and $\beta_i$
of every higher-priority task $\tau_i$. More precisely, the formulation in
Definitions~\ref{def:kpoints} and \ref{def:kpoints-response} does not
actually care what $C_i$ and $U_i$ actually mean. When sporadic
task models are considered, we will use these two terms as they were
defined in Section~\ref{sec:model}, i.e., $C_i$ stands for the
execution time and  $U_i$ is $\frac{C_i}{T_i}$. When we consider more
general cases, such as the generalized multi-frame and multi-mode task
models, we have to properly define the values of $U_i$ and $C_i$ to
apply the framework.

The use cases of \frameworkkq{} can be achieved by using
the known schedulability tests (that are in the form of pseudo
polynomial-time or exponential-time tests) or some simple modifications of the existing
results. \ifbool{techreport}{We will provide the explanations of the correctness of the
selection of the parameters, $\alpha_i, \beta_i, C_i, U_i$ for a
higher-priority task $\tau_i$ to support the correctness of the
results.}{}  
Such a flow actually leads to the elegance and the
generality of the framework, which works as long as
Definition~\ref{def:kpoints} (Definition~\ref{def:kpoints-response},
respectively) can be successfully constructed for the sufficient
schedulability test (response time, respectively) of task $\tau_k$ in
a fixed-priority scheduling policy. The procedure is illustrated in Figure~\ref{fig:framework}.
With the availability of the \frameworkkq{} framework, the quadratic
bounds or utilization bounds can be automatically derived as
long as the safe upper bounds $\alpha$ and $\beta$ can be safely
derived, regardless of the task model or the platforms. 

\ifbool{techreport}{
We are not going to present how to
\emph{systematically and automatically} derive these parameters to be applied for the
\frameworkkq{} framework. For most of the typical schedulability tests
and response time analyses in real-time systems, such a derivation procedure is similar to the
automatic parameter generation for the \frameworkku{} in \cite{DBLP:journals/corr/abs-k2u-automatic}.
}
{
}

\begin{figure}[t]
	\begin{center}
      \begin{tikzpicture}[scale=0.72,transform shape]
        \path \practica {1}{\underline{\bf Demonstrated Applications:}
          \begin{tabular}{ll}
          Sec. 5:& Arbitrary-deadline sporadic tasks\\
          Sec. 5:& Multiprocessor RM\\
          App. D\ifbool{techreport}{}{ \cite{DBLP:journals/corr/abs-k2q}}:&
          Periodic tasks with jitters\\
          App. E\ifbool{techreport}{}{ \cite{DBLP:journals/corr/abs-k2q}}:&
          Generalized multiframe\\
          App. F\ifbool{techreport}{}{ \cite{DBLP:journals/corr/abs-k2q}}:&
          Acyclic and Multi-Mode Models\\
          \end{tabular}
        };
        \path (p1.west)+(2.2,-3.8) node(p2)[materia, rounded
        rectangle,text width=7em]{{\bf $U_i, \forall i < k$\\ $C_i,
            \forall i < k$\\$\alpha_i, \forall i<k$\\$\beta_i, \forall
            i <k$\\$C_k$\\ $t_k$ {\scriptsize(for Lemmas \ref{lemma:framework-general-schedulability}-\ref{lemma:framework-totalU-constrained})}}}; 
        \path (p2.north)+(1.5,0.45) node[text width=16em]{Derive
          parameters\\by \underline{Definitions 2 or 3}};
        \path (p2.east)+(2,0) node(p3)[practica,fill=blue!20,text width=5em,text centered]{\large{\bf $k^2Q$\\ framework}};

        \path (p3.east)+(0.4,+4) node(p4)[practica,fill=green!30,minimum width=6em,text width=5em,text centered]{Quadratic bound}; 
        \path (p4.south)+(2.3,-1.5) node(p5)[practica,fill=green!30,minimum width=6em,text width=5em,text centered]{Other utilization bounds}; 
        \path (p5.south)+(0,-1.65)
        node(p6)[practica,fill=green!30,minimum width=6em,text
        width=5em,text centered]{Response-time test}; 
        
        \path [line, ->] (p1.west) -- +(-0.3,0) node[black, rotate=90,
        yshift=0.3cm, xshift=-1.6cm]{\footnotesize{define the
            $\pi$ ordering or use Lemma 2 or 7}} -- + (-0.3, -3.6) -- (p2.west);
        \path [line, ->] (p2.east) -- (p3.west);
        \path [line, ->] (p3.east)+(0,0.3) -- node[rotate=80,yshift=0.3cm,black]{Lemma \ref{lemma:framework-general-schedulability}} (p4.south);
        \path [line, ->] (p3.east)+(0,0) -- node[rotate=60,yshift=0.3cm,black]{Lemmas \ref{lemma:framework-constrained-schedulability}-\ref{lemma:framework-totalU-constrained}} (p5.west);
        \path [line, ->] (p3.east)+(0,-0.3) -- node[rotate=5,yshift=0.3cm,black]{Lemma \ref{lemma:framework-general-response}} (p6.west);
      \end{tikzpicture}    
	\end{center}
\vspace{-2mm}
\caption{The \frameworkkq{} framework. }
\label{fig:framework}
\end{figure}

\subsection{Schedulability Test Framework}

This section provides five important lemmas for deriving the
utilization-based schedulability test based on
Definition~\ref{def:kpoints}. Lemma~\ref{lemma:framework-general-schedulability}
is the most general test, whereas
Lemmas~\ref{lemma:framework-constrained-schedulability},
\ref{lemma:framework-totalU-exclusive},~and~\ref{lemma:framework-totalU-constrained} work for certain special
cases when $\beta_i C_i \leq \beta U_i t_k$ for any higher-priority
task $\tau_i$. Lemma~\ref{lemma:general-sorting} gives the worst-case last release time ordering, 
which can be used when the last release time ordering for testing task $\tau_k$ is unknown.
  
\begin{lemma}
\label{lemma:framework-general-schedulability}
For a given $k$-point last-release schedulability test, defined in
Definition~\ref{def:kpoints}, of a scheduling 
algorithm,
in which $0 < \alpha_i$, and $0 < \beta_i$ for any
$i=1,2,\ldots,k-1$, $0 < t_k$, $\sum_{i=1}^{k-1}\alpha_i U_i \leq 1$, and $\sum_{i=1}^{k-1}
\beta_i C_i \leq t_k$, task $\tau_k$ is schedulable by the
fixed-priority scheduling
algorithm if the following condition holds
\begin{equation}
\label{eq:schedulability-general}
\frac{C_k}{t_k} \leq 1 - \sum_{i=1}^{k-1}\alpha_i U_i - \frac{\sum_{i=1}^{k-1} (\beta_i C_i - \alpha_i U_i (\sum_{\ell=i}^{k-1}  \beta_\ell C_\ell) )}{t_k}.
\end{equation}
\end{lemma}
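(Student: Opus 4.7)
The plan is a proof by contradiction: assume the test of Definition~\ref{def:kpoints} fails at every $j$, and show that this forces the negation of~\eqref{eq:schedulability-general}. Concretely, suppose for every $j = 1, \ldots, k$ we have
\[
C_k + \sum_{i=1}^{k-1} \alpha_i t_i U_i + \sum_{i=1}^{j-1} \beta_i C_i \;>\; t_j .
\]
Abbreviating $A := C_k + \sum_{i=1}^{k-1} \alpha_i t_i U_i$ and $S_j := \sum_{i=1}^{j-1}\beta_i C_i$, this is just the system $t_j < A + S_j$ for $j = 1, \ldots, k$.

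The main move is a two-stage substitution that first decouples the $t_i$'s from each other and only afterwards invokes the $j = k$ inequality. First I would substitute the $k-1$ bounds $t_i < A + S_i$ (for $i \leq k-1$, using $\alpha_i U_i > 0$) into the defining sum of $A$ itself, and collect the $A$-terms on the left; this yields
\[
A\Bigl(1 - \sum_{i=1}^{k-1}\alpha_i U_i\Bigr) \;<\; C_k \;+\; \sum_{i=1}^{k-1} \alpha_i U_i\, S_i .
\]
Second, assuming for the moment $\sum_i \alpha_i U_i < 1$, I would apply the remaining $j = k$ inequality $A > t_k - S_k$ to the left-hand side (the direction is preserved because $1 - \sum_i \alpha_i U_i \geq 0$ by hypothesis) and rearrange. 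The re-indexing identity
\[
S_k \sum_{i=1}^{k-1}\alpha_i U_i \;-\; \sum_{i=1}^{k-1}\alpha_i U_i\, S_i \;=\; \sum_{i=1}^{k-1}\alpha_i U_i \sum_{\ell=i}^{k-1}\beta_\ell C_\ell
\]
then converts the resulting bound on $C_k$ into exactly the negation of~\eqref{eq:schedulability-general}, which is the desired contradiction.

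The only subtle point is the boundary case $\sum_{i=1}^{k-1}\alpha_i U_i = 1$, where division by $1 - \sum_i \alpha_i U_i$ is inadmissible. Swapping the order of summation shows that the right-hand side of~\eqref{eq:schedulability-general} then collapses to $-\bigl(\sum_\ell \beta_\ell C_\ell\,(1 - \sum_{i \leq \ell}\alpha_i U_i)\bigr)/t_k$, which is $\leq 0$ under the lemma's hypotheses, whereas $C_k/t_k > 0$; so the premise of the lemma cannot be satisfied and the statement is vacuously true. The hardest part, I expect, is simply identifying the correct order of operations: plugging $t_j < A + S_j$ directly into~\eqref{eq:schedulability-general} keeps the $t_i$'s coupled through $A$, so one must first use the $k-1$ ``inner'' inequalities to collapse everything onto a bound for $A$ alone, and only then bring in the outer ($j = k$) inequality.
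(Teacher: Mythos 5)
Your argument is correct, and it reaches the paper's bound by a genuinely more elementary route. The paper also argues by contrapositive from the negated system~\eqref{eq:lp-init-constraints}, but it then treats the $t_i$ as decision variables, introduces a slack variable, and solves the resulting linear program~\eqref{eq:lp-framework-general} via the extreme point theorem: it exhibits the extreme point $t_{i+1}^*-t_i^*=\beta_i C_i$, shows all other extreme points are worse, and reads off the minimum $C_k^*$. You instead eliminate the $t_i$ directly: feeding the $k-1$ inner inequalities $t_i < A+S_i$ back into $A=C_k+\sum_{i=1}^{k-1}\alpha_i U_i t_i$ gives $A\bigl(1-\sum_{i=1}^{k-1}\alpha_i U_i\bigr) < C_k+\sum_{i=1}^{k-1}\alpha_i U_i S_i$, and the $j=k$ inequality then yields exactly the negation of~\eqref{eq:schedulability-general} after the re-indexing identity. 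Your chain never uses the hypotheses $\sum_{i=1}^{k-1}\beta_i C_i\le t_k$ or $0\le t_1\le\cdots\le t_k$ (the paper needs the former only to certify feasibility of its extreme point), so your proof establishes the lemma under strictly weaker assumptions, and your treatment of the boundary case $\sum_{i=1}^{k-1}\alpha_i U_i=1$ is sound (indeed the chain survives multiplication by zero, so the case split is not even necessary). What the paper's heavier LP machinery buys in exchange is the additional information that the bound is the exact optimum of the relaxed problem -- i.e.\ that the extreme point is attained and the sufficient condition cannot be improved within this formulation -- which your one-directional derivation does not show; but that extra information is not needed for the statement of Lemma~\ref{lemma:framework-general-schedulability} itself.
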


\begin{proof}
  We prove this lemma by showing that the condition in
  Eq.~\eqref{eq:schedulability-general} leads to the satisfactions of the
  schedulability conditions listed in
  Eq.~(\ref{eq:precodition-schedulability}) by using contrapositive.
  By taking the negation of the schedulability condition in
  Eq.~(\ref{eq:precodition-schedulability}), we know that if task
  $\tau_k$ is \emph{not schedulable} by the scheduling policy, then
  for each $j=1,2,\ldots, k$
  \begin{equation}
    \label{eq:lp-init-constraints}
    C_k + \sum_{i=1}^{k-1} \alpha_i t_i U_i + \sum_{i=1}^{j-1} \beta_i C_i > t_j.  
  \end{equation}

  To enforce the condition in
  Eq.~\eqref{eq:lp-init-constraints}, we are going to
  show that $C_k$ must have some lower bound, denoted as $C_k^*$. Therefore, if $C_k$ is
  no more than this lower bound, then task $\tau_k$ is schedulable by the
  scheduling policy.  For the rest of the proof, we replace $>$ with $\geq$ in
  Eq.~\eqref{eq:lp-init-constraints}, as the infimum and the minimum
  are the same when presenting the inequality with $\geq$.
 The unschedulability for satisfying Eq.~\eqref{eq:lp-init-constraints} implies that $C_k > C_k^*$, where $C_k^*$ is defined in the optimization problem:

\begin{subequations}\label{eq:lp-init0}
\footnotesize  \begin{align}
    \mbox{min\;\;} & C_k^*\\
    \mbox{s.t.\;\;} &     C_k^* + \sum_{i=1}^{k-1} \alpha_i t_i^*
    U_i + \sum_{i=1}^{j-1} \beta_i C_i \geq t_j^*,&\forall
    j=1,2,\ldots,
    k-1,   \\
  &t_1^* \geq 0 & \\
  &t_j^* \geq t_{j-1}^*,&\forall j=2,3,\ldots, k-1,\\
    & C_k^* + \sum_{i=1}^{k-1} \alpha_i t_i^* U_i + \sum_{i=1}^{k-1} \beta_i C_i \geq t_k,&   
  \end{align}    
  \end{subequations}
  where $t^*_1, t^*_2, \ldots, t^*_{k-1}$ and $C_k^*$ are variables,
  $\alpha_i$, $\beta_i$, $U_i$, and $C_i$ are constants, and $t_k$ is
  a given positive constant.  Moreover, it is obvious that relaxing
  the constraint $t_{j}^* \geq t_{j-1}^*$ for $j=2,3,\ldots,k-1$ by
  using $t_j^* \geq 0$ does not increase the corresponding objective
  function in the linear programming. 
Therefore, we have

  \begin{subequations}\label{eq:lp-init}
\footnotesize  \begin{align}
    \mbox{min\;\;} & C_k^* \label{eq:precodition-objective}\\
    \mbox{s.t.\;\;} &     C_k^* + \sum_{i=1}^{k-1} \alpha_i t_i^*
    U_i + \sum_{i=1}^{j-1} \beta_i C_i \geq t_j^*,&\forall
    j=1,2,\ldots,
    k-1,     \label{eq:precodition-schedulability-negation-0}\\
  &t_j^* \geq 0,&\forall j=1,2,\ldots, k-1,     \label{eq:precodition-schedulability-negation-1}\\
    & C_k^* + \sum_{i=1}^{k-1} \alpha_i t_i^* U_i + \sum_{i=1}^{k-1} \beta_i C_i \geq t_k.&     \label{eq:precodition-schedulability-negation-2}
  \end{align}    
  \end{subequations}

  Let $s \geq 0$ be a slack variable such that $C_k^* = t_k + s - (\sum_{i=1}^{k-1} \alpha_i t_i^*U_i + \sum_{i=1}^{k-1} \beta_i C_i)$.
  Therefore, we can replace the objective function and the constraints
  with the above equality of $C_k^*$. The objective function (i.e.,
  Eq.~\eqref{eq:precodition-objective}) is
  to find the minimum value of $t_k +s - (\sum_{i=1}^{k-1} \alpha_i
  t_i^*U_i + \sum_{i=1}^{k-1} \beta_i C_i)$ such that
  Eq.~\eqref{eq:precodition-schedulability-negation-0} holds, which is
  equivalent to

{\footnotesize
  \begin{align}
   & t_k + s - (\sum_{i=1}^{k-1} \alpha_i t_i^*U_i + \sum_{i=1}^{k-1} \beta_i C_i) + \sum_{i=1}^{k-1} \alpha_i t_i^*U_i + \sum_{i=1}^{j-1} \beta_i C_i\nonumber\\
    =\; & t_k+ s- \sum_{i=j}^{k-1} \beta_i C_i 
    \geq t_j^*, \;\;\;\forall j=1,2,\ldots, k-1. \label{eq:precodition-schedulability-negation}
  \end{align}
  }



 For notational brevity, let $t_k^*$ be $t_k+s$.
 Therefore, the linear programming in Eq.~\eqref{eq:lp-init} can be rewritten as follows:
\begin{subequations}
    \label{eq:lp-framework-general}
  \begin{align}\footnotesize
    \mbox{min } &  
    t_k^* - (\sum_{i=1}^{k-1} \alpha_i U_it_i^* + \sum_{i=1}^{k-1}\beta_i C_i)\\
    \mbox{s.t.} \;\;&
t_k^*- \sum_{i=j}^{k-1} \beta_i C_i \geq t_j^*,
      & \forall 1 \leq j \leq k - 1, \label{eq:lp-framework-general-constraints}\\
&t_j^* \geq 0
      & \forall 1 \leq j \leq k - 1. \label{eq:lp-framework-general-boundaryconstraints}\\
&t_k^* \geq t_k \label{eq:lp-framework-general-slack-constraints}
  \end{align}    
  \end{subequations}
  
  The remaining proof is to solve the above linear programming to obtain the minimum $C_k^*$.
Our proof strategy is to solve the linear programming analytically as a function of $t_k^*$. This can be imagined as if $t_k^*$ is given. At the end, we will prove the optimality by considering all possible $t_k^* \geq t_k$.
  This involves three steps:
  \begin{itemize}
  \item Step 1: we analyze certain properties of optimal solutions based on the extreme point theorem for linear programming \cite{luenberger2008linear} under the assumption that $t_k^*$ is given as a constant, i.e., $s$ is known.
  \item Step 2: we present a specific solution in an \emph{extreme point}, as a function of $t_k^*$.
  \item Step 3: we prove that the above extreme point solution gives the minimum $C_k^*$  if $\sum_{i=1}^{k-1}\alpha_i U_i \leq 1$.
  \end{itemize}

 {\bf [Step 1:]} After specifying the value
  $t_k^*$ as a given constant, the new linear programming without the
  constraint in
  Eq.~\eqref{eq:lp-framework-general-slack-constraints} has only
  $k-1$ variables and $2(k-1)$ constraints.
Thus, according to the extreme
  point theorem for linear programming \cite{luenberger2008linear},
  the linear constraints form a polyhedron of feasible solutions. The
  extreme point theorem states that either there is no feasible
  solution or one of the extreme points in the polyhedron is an
  optimal solution when the objective of the linear programming is
  finite. To satisfy Eqs.~\eqref{eq:lp-framework-general-constraints} and~\eqref{eq:lp-framework-general-boundaryconstraints}, we know that 
  $t_j^* \leq t_k^*$ for $j=1,2,\ldots,k-1$, due to $t_i^* \geq 0$, $0 < \beta $, and $C_i \geq 0$ for $i=j, j+1, \ldots,k-1$. As a result, the objective of the above linear programming is finite 
  since a feasible solution has to satisfy $t_i^* \leq t_k^*$ for $i=1,2,\ldots,k-1$,.

According to the extreme point theorem, one of the extreme points is
the optimal solution of Eq.~\eqref{eq:lp-framework-general}.  There
are $k-1$ variables with $2k-2$ constraints in
Eq.~\eqref{eq:lp-framework-general}. An extreme point must have at
least $k-1$ \emph{active} constraints in
Eqs.~\eqref{eq:lp-framework-general-constraints}
and~\eqref{eq:lp-framework-general-boundaryconstraints}, in which
their $\geq$ are set to equality $=$.

{\bf [Step 2:]} 
One special extreme point solution by setting $t_j^* > 0$ is to put $t_k^* -  \sum_{i=j}^{k-1} \beta_i C_i = t_j^*$ for every $j=1,2,\ldots,k-1$, i.e.,
\begin{equation}
\label{eq:periodrelation}
\forall 1 \leq i \leq k - 1, \quad t_{i+1}^* - t_i^* = \beta_i C_i,
\end{equation}
which implies that
\begin{equation}
\label{eq:3rdperiodrelation}
t_k^* - t_i^* = \sum_{\ell=i}^{k-1} (t_{\ell+1}^*- t_\ell^*) = \sum_{\ell=i}^{k-1}  \beta_\ell C_\ell
\end{equation}
The above extreme point solution is always feasible in the linear programming due to the assumption that  
$\sum_{j=1}^{k-1}  \beta_j C_j \leq t_k \leq t_k^*$.
  Therefore,  in this extreme point solution, the objective function of Eq.~\eqref{eq:lp-framework-general} by rephrasing based on the condition in Eq.~\eqref{eq:3rdperiodrelation} is 
  {\small \begin{align}
    \label{eq:fina-first-lemma}
& t_k^* - \sum_{i=1}^{k-1} \left(\alpha_i U_i t_i^* + \beta_i C_i \right) \\
= & t_k^* - \sum_{i=1}^{k-1} \left(\alpha_i U_i \left(t_k^*-\sum_{\ell=i}^{k-1}  \beta_\ell C_\ell\right) + \beta_i C_i \right)\\
= & t_k^* - \left(\sum_{i=1}^{k-1}\alpha_i U_i t_k^* + \sum_{i=1}^{k-1} \beta_i C_i -\sum_{i=1}^{k-1} \alpha_i U_i \left(\sum_{\ell=i}^{k-1}  \beta_\ell C_\ell\right) \right)
  \end{align}}
which means that $C_k^* \geq  t_k^* (1-\sum_{i=1}^{k-1}\alpha_i U_i) - \sum_{i=1}^{k-1} (\beta_i C_i - \alpha_i U_i (\sum_{\ell=i}^{k-1}  \beta_\ell C_\ell) )$.

{\bf [Step 3:]}
The rest of the proof shows that other feasible extreme point
solutions (that allow $t_j^*$ to be $0$ for some higher-priority task
$\tau_j$) are with worse objective values for
Eq.~\eqref{eq:lp-framework-general}.  Under the assumption that
$\sum_{i=1}^{k-1} \beta_i C_i \leq t_k\leq t_k^*$, if $t_j^*$ is set to $0$,
there are two cases: (1) $t_k^*- \sum_{i=j}^{k-1} \beta_i C_i > 0$ or
(2) $t_k^*- \sum_{i=j}^{k-1} \beta_i C_i = 0$. In the former case, we
can simply set $t_j^*$ to $t_k^*- \sum_{i=j}^{k-1} \beta_i C_i$ to
improve the objective function without introducing any violation of
the constraints. In the latter case, the value of $t_j^*$ can only be
set to $0$ in any feasible solutions. Therefore, we conclude that any
other feasible extreme point solutions for Eq.~\eqref{eq:lp-framework-general}
are worse.

Note that the above solution of $C_k^*$ is still a function of $t_k^*$. We  need to find the minimization of $C_k^*$ with respect to $t_k^*$ based on the fact $t_k^* \geq t_k$.
Due to the assumption that $1-\sum_{i=1}^{k-1}\alpha_i U_i \geq 0$ and $t_k^* \geq t_k$, we know that $t_k^* (1-\sum_{i=1}^{k-1}\alpha_i U_i) \geq t_k (1-\sum_{i=1}^{k-1}\alpha_i U_i)$. Therefore, $C_k^* =  t_k (1-\sum_{i=1}^{k-1}\alpha_i U_i) - \sum_{i=1}^{k-1} (\beta_i C_i - \alpha_i U_i (\sum_{\ell=i}^{k-1}  \beta_\ell C_\ell) )$ when $1-\sum_{i=1}^{k-1}\alpha_i U_i \geq 0$ and $\sum_{i=1}^{k-1} \beta_i C_i \leq t_k$, which concludes the proof.
\end{proof}

Lemma~\ref{lemma:framework-general-schedulability} can be applied only
when the last release time ordering of the $k-1$ higher-priority tasks
is given. We demonstrate the importance of the last release time
ordering by using the following example.\footnote{To demonstrate the impact of the last release time ordering, we use the original task indexes before applying $\pi_1$ or $\pi_2$ whenever referring to Example~\ref{example-3tasks}.}

\begin{example}\label{example-3tasks}
  Consider that $k=3$ and $|hp(\tau_k)|$ is $2$. For the two tasks in
  $hp(\tau_k)$, let $C_1=2, U_1=0.2, T_1=10$ and $C_2=4, U_2=0.5,
  T_2=8$.  Suppose that $t_3=D_3=T_3=36$.  By the transformation in
  Example~\ref{example-1}, we know that $\alpha_i=1$ and $\beta_i=1$
  for $i=1,2$. 

There are two last release time orderings. Suppose that  
$\pi_1:\setof{\tau_1, \tau_2} \rightarrow \setof{1, 2}$ and
$\pi_2:\setof{\tau_1, \tau_2} \rightarrow \setof{2, 1}$. That is, the
last release time ordering is $\tau_1, \tau_2$ in $\pi_1$, and the
last release time ordering is $\tau_2, \tau_1$ in $\pi_2$.
\hfill\myendproof
\end{example}

Now, we can use Lemma~\ref{lemma:framework-general-schedulability}
based on $\pi_1$ and $\pi_2$:
\begin{compactitem}
\item For $\pi_1$, the schedulability condition in
  Lemma~\ref{lemma:framework-general-schedulability} shows that task
  $\tau_3$ in Example~\ref{example-3tasks} can meet the deadline if $C_3 \leq t_3\cdot(1-U_1-U_2) -
  (C_1-U_1(C_1+C_2) + C_2-U_2C_2)=0.3t_3 - 2.8 = 8$.
\item For $\pi_2$, the schedulability condition in
  Lemma~\ref{lemma:framework-general-schedulability} shows that task
  $\tau_3$ in Example~\ref{example-3tasks} can meet the deadline if $C_3 \leq t_3\cdot(1-U_2-U_1) -
  (C_2-U_2(C_2+C_1) + C_1-U_1C_1)=0.3t_3 - 2.6 = 8.2$.
\end{compactitem}

The immediate question is whether both  $C_3 \leq 8$ based on
$\pi_1$ and $C_3 \leq 8.2$ based on $\pi_2$ are safe. When $t_k=36$,
the transformation in Example~\ref{example-1} in fact adopts the last
release time ordering $\pi_1$. Therefore,
Lemma~\ref{lemma:framework-general-schedulability} is only safe under
$\pi_1$ in this example. As a result, the test in
Lemma~\ref{lemma:framework-general-schedulability} for the above
example is only valid when we apply $\pi_1$.

However, in practice, we usually do not know how these tasks are indexed according to the required last release in the window of interest. It may seem at first glance that we need to test all the possible orderings. Fortunately, with the following lemma, we can safely consider only one specific last release time ordering of the $k-1$ higher-priority tasks. 
\begin{lemma}
  \label{lemma:general-sorting}
  The worst-case ordering $\pi$ of the $k-1$ higher-priority tasks under the schedulability condition in Eq.~\eqref{eq:schedulability-general} in Lemma~\ref{lemma:framework-general-schedulability} is to order the tasks in a non-increasing order of $\frac{\beta_i C_i}{\alpha_i U_i}$,
 in which $0 < \alpha_i$ and $0 < \beta_i$ for any $i=1,2,\ldots,k-1$, $0 < t_k$.
\end{lemma}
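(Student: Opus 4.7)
The plan is to show that the permutation sorted in non-increasing $\beta_i C_i / \alpha_i U_i$ is the ordering that \emph{minimizes} the RHS of Eq.~\eqref{eq:schedulability-general}, which by definition is the worst case for the sufficient schedulability condition. First I would separate the RHS into an ordering-independent part and an ordering-dependent part: if we re-label the tasks so that position $i$ corresponds to the task assigned there by $\pi$, the terms $1-\sum_{i=1}^{k-1}\alpha_i U_i$ and $(\sum_{i=1}^{k-1}\beta_i C_i)/t_k$ are invariant under any such relabeling, so minimizing RHS is equivalent to minimizing the single quantity $Q(\pi) := \sum_{i=1}^{k-1}\alpha_i U_i \sum_{\ell=i}^{k-1}\beta_\ell C_\ell$. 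Expanding and separating the diagonal terms $\ell=i$ gives $Q(\pi) = \sum_{i=1}^{k-1}\alpha_i U_i \beta_i C_i + \sum_{1 \leq i < \ell \leq k-1} \alpha_i U_i \beta_\ell C_\ell$; the first summand is itself ordering-independent, so it suffices to minimize the strict off-diagonal sum $R(\pi) := \sum_{1 \leq i < \ell \leq k-1} \alpha_i U_i \beta_\ell C_\ell$.

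Next I would use an adjacent-swap argument on $R(\pi)$. Fix $\pi$ and let $\pi'$ be the permutation obtained by swapping the tasks at consecutive positions $p$ and $p+1$; call the two tasks $A$ (at position $p$ under $\pi$) and $B$ (at position $p+1$ under $\pi$). A direct bookkeeping shows that the contributions to $R$ indexed by $(i,p)$ and $(i,p+1)$ for $i<p$ cancel pairwise between $\pi$ and $\pi'$, and those indexed by $(p,\ell)$ and $(p+1,\ell)$ for $\ell>p+1$ cancel pairwise as well; only the pair $(p,p+1)$ contributes to the difference, yielding $R(\pi)-R(\pi') = \alpha_A U_A \beta_B C_B - \alpha_B U_B \beta_A C_A$. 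This quantity is non-negative exactly when $\frac{\beta_A C_A}{\alpha_A U_A} \leq \frac{\beta_B C_B}{\alpha_B U_B}$, so swapping weakly decreases $R$ whenever the task with the larger ratio is moved to the earlier position.

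Finally, a standard inversion/bubble-sort argument closes the proof: any ordering that is not sorted in non-increasing $\beta_i C_i / \alpha_i U_i$ contains at least one adjacent inversion which, by the previous step, can be removed without increasing $R$, hence without increasing $Q$ or the RHS of Eq.~\eqref{eq:schedulability-general}. Iterating terminates at the sorted ordering, which therefore attains the global minimum of the RHS and is the worst-case ordering claimed by the lemma. The main obstacle I anticipate is the careful accounting in the pairwise-swap step: one must verify that every cross term of $R$ involving exactly one of the positions $p, p+1$ really does cancel in pairs between $\pi$ and $\pi'$, so that only the $(p,p+1)$ term governs the sign of $R(\pi)-R(\pi')$; once this cancellation is in hand, the rest is a routine exchange argument.
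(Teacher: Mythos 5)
Your proposal is correct and follows essentially the same route as the paper's proof in Appendix~A: both reduce the claim to minimizing the ordering-dependent sum $\sum_{i=1}^{k-1}\alpha_i U_i\bigl(\sum_{\ell=i}^{k-1}\beta_\ell C_\ell\bigr)$ and then apply an adjacent-transposition (exchange) argument whose sign is governed by comparing $\frac{\beta_i C_i}{\alpha_i U_i}$ for the two swapped tasks. Your extra step of splitting off the diagonal terms and verifying the pairwise cancellation of the cross terms is just a more explicit version of the paper's observation that tasks other than the swapped pair contribute identically before and after the swap.
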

\begin{proof}
This lemma is proved by showing that the schedulability condition in
  Lemma~\ref{lemma:framework-general-schedulability}, i.e., $1 -
  \sum_{i=1}^{k-1}\alpha_i U_i - \frac{\sum_{i=1}^{k-1} \beta_i
    C_i}{t_k} + \frac{\alpha_i U_i (\sum_{\ell=i}^{k-1} \beta_\ell C_\ell)}{t_k}$, is minimized, when the $k-1$ higher-priority tasks are indexed in a non-increasing order of $\frac{\beta_i C_i}{\alpha_i U_i}$.  Suppose that there are two adjacent tasks $\tau_{h}$ and $\tau_{h+1}$ with $\frac{\beta_h C_h}{\alpha_h U_h} < \frac{\beta_{h+1} C_{h+1}}{\alpha_{h+1} U_{h+1}}$.  Let us now examine the difference of $\frac{\sum_{i=1}^{k-1} \alpha_i U_i (\sum_{\ell=i}^{k-1} \beta_\ell C_\ell) }{t_k}$ by swapping the index of task $\tau_{h}$ and task $\tau_{h+1}$.

  It can be easily observed that the other tasks $\tau_i$ with $i\neq
  h$ and $i\neq h+1$ do not change their corresponding values
  $\alpha_i U_i (\sum_{\ell=i}^{k-1} \beta_\ell C_\ell)$ in both
  orderings (before and after swapping $\tau_{h}$ and $\tau_{h+1}$).
  The difference in the term $\alpha_h U_h (\sum_{\ell=h}^{k-1}
  \beta_\ell C_\ell) + \alpha_{h+1} U_{h+1} (\sum_{\ell=h}^{k-1}
  \beta_\ell C_\ell) $ before and after swapping tasks $\tau_{\ell}$
  and $\tau_{\ell+1}$  (before - after)  is 
  \begin{align*}\footnotesize
& \left( (\alpha_h U_h \beta_{h+1} C_{h+1} - \alpha_{h+1} U_{h+1} \beta_{h} C_{h}\right) \\
 = & \alpha_h\alpha_{h+1}U_h U_{h+1} \left( \frac{\beta_{h+1} C_{h+1}}{\alpha_{h+1} U_{h+1}}-\frac{\beta_h C_h}{\alpha_h U_h}  \right) > 0.
  \end{align*}\normalsize
Therefore, we reach the conclusion that swapping $\tau_h$ and $\tau_{h+1}$ in the ordering makes the schedulabilty condition more stringent. By applying the above swapping repetitively, we reach the conclusion that ordering the tasks 
in a non-increasing order of $\frac{\beta_i C_i}{\alpha_i U_i}$ has the most stringent schedulability condition in Eq.~\eqref{eq:schedulability-general}.  
\end{proof}

We again use the configuration in Example~\ref{example-3tasks} to
demonstrate the rationale behind Lemma~\ref{lemma:general-sorting}. In
this example, let us consider that $t_3=T_3=23$. When $t_k=23$, the
transformation in Example~\ref{example-1} in fact adopts the last
release time ordering $\pi_2$, i.e., $\tau_3$ is schedulable if $C_3
\leq 0.3t_3-2.6=4.3$. The schedulability condition based on the last
release time ordering $\pi_1$, i.e., $\tau_3$ is schedulable if $C_3
\leq 0.3t_3-2.8=4.1$, is always worse than that based on $\pi_2$ by
Lemma~\ref{lemma:general-sorting}. Therefore, it is always safe to use
$\pi_1$, even though it can be sometimes more pessimistic, e.g., when
$t_3$ is $23$.

\subsection{Different Utilization Bounds}
The analysis in Lemma~\ref{lemma:framework-general-schedulability} uses the execution time and the utilization of the tasks in $hp(\tau_k)$ to build an upper bound of $C_k/t_k$ for  schedulability tests. It is also very convenient in real-time systems to build schedulability tests only based on utilization of the tasks. We explain how to achieve that in the following lemmas under the assumptions that $0 < \alpha_i \leq \alpha$, and $0 < \beta_i C_i \leq \beta U_i t_k$ for any $i=1,2,\ldots,k-1$. 
These lemmas are useful when we are interested to derive utilization bounds, speed-up factors, resource augmentation factors, etc., for a given scheduling policy by defining the coefficients $\alpha$ and $\beta$ according to the scheduling policies independently from the detailed parameters of the tasks. 
 Since the property repeats in all the statements, we make a formal definition before presenting the lemmas.
\begin{definition}
  \label{def:alpha-upper-bound}
  Lemmas~\ref{lemma:framework-constrained-schedulability} to
  \ref{lemma:framework-totalU-constrained} are based on the
  following $k$-point last-release schedulability test of a scheduling
  algorithm, defined in Definition~\ref{def:kpoints}, in which $0 <
  \alpha_i \leq \alpha$, and $0 < \beta_i C_i \leq \beta U_i t_k$ for
  any $i=1,2,\ldots,k-1$, $0 < t_k$, $\alpha\sum_{i=1}^{k-1}U_i \leq
  1$, and $\beta\sum_{i=1}^{k-1}U_i \leq 1$.
\end{definition}

\begin{lemma}
\label{lemma:framework-constrained-schedulability}
For a given $k$-point last-release schedulability test of a scheduling 
algorithm, with the properties in Definition~\ref{def:alpha-upper-bound}, 
task $\tau_k$ is schedulable by the scheduling
algorithm if the following condition holds
{\small \begin{align}
\label{eq:schedulability-constrained}
\frac{C_k}{t_k} \leq &1 - (\alpha+\beta)\sum_{i=1}^{k-1} U_i + \alpha\beta\sum_{i=1}^{k-1} U_i (\sum_{\ell=i}^{k-1}  U_\ell)\\
= &1 - (\alpha+\beta)\sum_{i=1}^{k-1} U_i + 0.5\alpha\beta\left((\sum_{i=1}^{k-1} U_i)^2 +(\sum_{i=1}^{k-1} U_i^2)\right)
\label{eq:schedulability-constrained-2}
\end{align}}
\end{lemma}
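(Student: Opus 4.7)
The plan is to reduce to Lemma~\ref{lemma:framework-general-schedulability} and show that, under the stronger hypotheses $\alpha_i \le \alpha$ and $\beta_i C_i \le \beta U_i t_k$, the right-hand side of the constrained bound Eq.~\eqref{eq:schedulability-constrained} is always no larger than the right-hand side of the general bound Eq.~\eqref{eq:schedulability-general}. Then whenever $C_k/t_k$ meets the constrained bound it automatically meets the general bound, and $\tau_k$ is schedulable by Lemma~\ref{lemma:framework-general-schedulability}.

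First I would verify the preconditions of Lemma~\ref{lemma:framework-general-schedulability}. From $\alpha_i \le \alpha$ one has $\sum_{i=1}^{k-1}\alpha_i U_i \le \alpha\sum_{i=1}^{k-1} U_i$, and from $\beta_i C_i \le \beta U_i t_k$ one has $\sum_{i=1}^{k-1}\beta_i C_i \le \beta t_k \sum_{i=1}^{k-1} U_i$. In the non-trivial regime where the constrained bound is positive (otherwise the conclusion is vacuous since $C_k>0$), these sums are bounded by $1$ and $t_k$ respectively, so Lemma~\ref{lemma:framework-general-schedulability} applies.

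Next, I would treat the right-hand side of Eq.~\eqref{eq:schedulability-general} as a function
\[
G(x,y)=1-\sum_{i=1}^{k-1}x_i U_i-\sum_{i=1}^{k-1}y_i+\sum_{i=1}^{k-1} x_i U_i\sum_{\ell=i}^{k-1}y_\ell
\]
of the free parameters $x_i=\alpha_i\in[0,\alpha]$ and $y_i=\beta_i C_i/t_k\in[0,\beta U_i]$, with the $U_i$ fixed. Direct differentiation yields
\[
\frac{\partial G}{\partial x_j}=U_j\Bigl(\sum_{\ell=j}^{k-1}y_\ell-1\Bigr),\qquad
\frac{\partial G}{\partial y_j}=\sum_{i=1}^{j}x_i U_i-1,
\]
and both quantities are non-positive throughout the feasible box by the preconditions of Lemma~\ref{lemma:framework-general-schedulability} just verified (which give $\sum_{\ell=j}^{k-1}y_\ell\le\sum_{\ell=1}^{k-1}y_\ell\le 1$ and $\sum_{i=1}^{j}x_i U_i\le\sum_{i=1}^{k-1}x_i U_i\le 1$). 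Hence $G$ is coordinate-wise non-increasing on the box, so its minimum is attained at the upper corner $x_i=\alpha$, $y_i=\beta U_i$; substituting these values into $G$ reproduces exactly the right-hand side of Eq.~\eqref{eq:schedulability-constrained}, which therefore is a lower bound on $G$ for every admissible choice of $\alpha_i$ and $\beta_i C_i$.

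Finally, the alternative form in Eq.~\eqref{eq:schedulability-constrained-2} follows from the elementary identity $\sum_{i=1}^{k-1}U_i\sum_{\ell=i}^{k-1}U_\ell=\tfrac{1}{2}\bigl((\sum_{i=1}^{k-1}U_i)^2+\sum_{i=1}^{k-1}U_i^2\bigr)$, obtained by splitting the double sum into its diagonal and off-diagonal parts. The main delicate step is the sign analysis of the partial derivatives, since that is what converts the pointwise hypotheses $\alpha_i\le\alpha$ and $\beta_i C_i\le\beta U_i t_k$ into a global lower bound on $G$; once the monotonicity is in hand, the remaining work is only a substitution and a routine algebraic identity.
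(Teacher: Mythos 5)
Your route is genuinely different from the paper's. The paper proves this lemma by re-running the entire linear-programming argument of Lemma~\ref{lemma:framework-general-schedulability} with $\beta U_i t_k^*$ substituted for $\beta_i C_i$ throughout, so the quadratic form falls out of the LP's extreme-point solution directly. You instead take Lemma~\ref{lemma:framework-general-schedulability} as a black box and show that the constrained right-hand side is a pointwise lower bound on the general right-hand side $G(x,y)$ over the box $x_i\in[0,\alpha]$, $y_i\in[0,\beta U_i]$, via the sign of $\partial G/\partial x_j$ and $\partial G/\partial y_j$. Your derivative computations, the evaluation of $G$ at the upper corner, and the diagonal/off-diagonal identity behind Eq.~\eqref{eq:schedulability-constrained-2} are all correct, and the reduction is cleaner in that it reuses the general lemma rather than repeating its proof.

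There is, however, one step that does not hold as written: you justify both the preconditions of Lemma~\ref{lemma:framework-general-schedulability} and the box-wide non-positivity of the partial derivatives by asserting that positivity of the constrained bound forces $\alpha\sum_{i=1}^{k-1}U_i\le 1$ and $\beta\sum_{i=1}^{k-1}U_i\le 1$. That implication is false: with $\alpha=\beta=4$, $\sum_i U_i=1$ and $\sum_i U_i^2\to 0$, the right-hand side of Eq.~\eqref{eq:schedulability-constrained-2} equals $1-8+8=1>0$ while $\alpha\sum_i U_i=4$. The distinction matters because the hypotheses of Lemma~\ref{lemma:framework-general-schedulability} bound $\sum_i x_iU_i$ and $\sum_\ell y_\ell$ only at the \emph{actual} point $(\alpha_i,\beta_iC_i/t_k)$, whereas your monotone path to the corner passes through points where these sums grow to $\alpha\sum_iU_i$ and $\beta\sum_iU_i$; if either exceeds $1$ the derivatives change sign and the corner need not be the minimizer (indeed, for $k=2$ with $\alpha U_1=\beta U_1=2$ the constrained bound exceeds the general bound, and the stated conclusion itself fails). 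The fix is simply to add the hypotheses $\alpha\sum_{i=1}^{k-1}U_i\le 1$ and $\beta\sum_{i=1}^{k-1}U_i\le 1$, under which your monotonicity argument goes through verbatim. To be fair, the paper's own statement omits these conditions as well, and its one-line proof needs exactly the same assumptions when the substitution $\beta_iC_i\mapsto\beta U_it_k^*$ is pushed through the feasibility argument of the LP; so the gap is inherited from the paper rather than introduced by you, but your write-up should not present the false implication as the justification.
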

\begin{proof}
  The condition in Eq.~\eqref{eq:schedulability-constrained} comes by reformulating the proof of Lemma~\ref{lemma:framework-general-schedulability} with $\beta U_i t_k^*$ instead of $\beta_i C_i$. All the procedures remain the same, and, therefore, $\beta_i C_i$ for task $\tau_i$ in the right-hand side of Eq.~\eqref{eq:schedulability-general} can be replaced by $\beta U_i$.

   We focus on the condition in Eq.~\eqref{eq:schedulability-constrained-2} by showing that $\sum_{i=1}^{k-1} U_i (\sum_{\ell=i}^{k-1}  U_\ell) = 0.5\left((\sum_{i=1}^{k-1} U_i)^2 +(\sum_{i=1}^{k-1} U_i^2)\right)$. This condition clearly holds when $k=2$ since $U_1^2 = 0.5(U_1^2+U_1^2)$. We consider $k \geq 3$. This is due to
   \begin{align*}&   \sum_{i=1}^{k-1} U_i (\sum_{\ell=i}^{k-1}  U_\ell) =      \sum_{i=1}^{k-1} U_i^2 + \sum_{i=1}^{k-2} U_i (\sum_{\ell=i+1}^{k-1}  U_\ell) \\
=^1\;\; &\sum_{i=1}^{k-1} U_i^2 + 0.5\left(\left(\sum_{i=1}^{k-1} U_i\right)^2 -  \sum_{i=1}^{k-1} U_i^2\right)\\
=\;\;\; & 0.5\left((\sum_{i=1}^{k-1} U_i)^2 +(\sum_{i=1}^{k-1} U_i^2)\right),
   \end{align*}
where $=^1$ follows from the fact $\sum_{i=1}^{k-2} U_i (\sum_{\ell=i+1}^{k-1}  U_\ell) = \sum_{i=2}^{k-1} U_i (\sum_{\ell=1}^{i-1}  U_\ell) = 0.5\left(\left(\sum_{i=1}^{k-1} U_i\right)^2 -  \sum_{i=1}^{k-1} U_i^2\right)$.
\end{proof}

Lemma~\ref{lemma:framework-constrained-schedulability} provides a
schedulability test based on a quadratic form by using only the
utilization of the higher-priority tasks with the properties in
Definition~\ref{def:alpha-upper-bound}. The following two lemmas are
applicable for testing the utilization bound(s), i.e., the summation
of the task utilization. 

\begin{lemma}
\label{lemma:framework-totalU-exclusive}
For a given $k$-point last-release schedulability test of a scheduling 
algorithm, with the properties in Definition~\ref{def:alpha-upper-bound}, 
task $\tau_k$ is schedulable by the scheduling
algorithm if 
\begin{equation}\small
\label{eq:schedulability-totalU-exclusive}
\sum_{i=1}^{k-1}U_i \leq \left(\frac{k-1}{k}\right)\left( \dfrac{\alpha+\beta-\sqrt{(\alpha+\beta)^2-2 \alpha\beta (1-\frac{C_k}{t_k})\frac{k}{k-1}}}{\alpha\beta}\right).
\end{equation}
\end{lemma}
\begin{proof}
  This can be formally proved by using the Lagrange Multiplier
  Method. However, it can also be proved by using a simpler
  mathematical observation. Suppose that $x=\sum_{i=1}^{k-1} U_i$ is
  given. For given $\alpha, \beta,$ and $x$, we know that
  Eq.~\eqref{eq:schedulability-constrained-2} becomes
  $1-(\alpha+\beta)x+0.5\alpha\beta (x^2+\sum_{i=1}^{k-1}U_i^2)$. That
  is, only the last term $0.5\alpha\beta (\sum_{i=1}^{k-1}U_i^2)$
  depends on how $U_i$ values are actually assigned. Moreover,
  $\sum_{i=1}^{k-1} U_i^2$ is a well-known convex function with
  respect to $U_1, U_2, \ldots, U_{k-1}$. That is, $\rho U_i^2 +
  (1-\rho) U_j^2 \geq \left(\rho U_i+ (1-\rho)U_j\right)^2$ for any $0
  \leq \rho \leq 1$. Therefore, $\sum_{i=1}^{k-1} U_i^2$ is minimized
  when $U_1=U_2=\cdots=U_{k-1}=\frac{x}{k-1}$.

Hence, what we have to do is to find the infimum $x$ such that the
condition in Eq.~\eqref{eq:schedulability-constrained-2} does not
hold. That is,
\begin{align*}
  \mbox{infimum } & x \\
  \mbox{s. t. } &\frac{C_k}{t_k} > 1- (\alpha+\beta)x+0.5\alpha\beta \left(x^2 +
\frac{x^2}{k-1}\right).
\end{align*}
 This means that as long as $\sum_{i=1}^{k-1} U_i$ is no more
than such infimum $x$, the condition in
Eq.~\eqref{eq:schedulability-constrained-2} always holds and the
schedulability can be guaranteed. 
Provided that $\frac{C_k}{t_k}$ is given, we can simply solve the above problem by finding the $x$ with 
$0 = 1-\frac{C_k}{t_k} - (\alpha+\beta)x+0.5\alpha\beta \frac{k}{k-1} x^2$. There are two roots in the above quadratic equation. 
The smaller root, i.e., the right-hand side of Eq.~\eqref{eq:schedulability-totalU-exclusive}, is the infimum by definition.
\end{proof}

\begin{lemma}
\label{lemma:framework-totalU-constrained}
For a given $k$-point last-release schedulability test of a scheduling 
algorithm, with the properties in Definition~\ref{def:alpha-upper-bound}, 
provided that $\alpha+\beta \geq 1$, 
then task $\tau_k$ is schedulable by the scheduling
algorithm if  
{\small
\begin{align}
&\frac{C_k}{t_k} + \sum_{i=1}^{k-1} U_i \leq \nonumber\\
&\begin{cases}\label{eq:schedulability-totalU-constrained}
  \left(\frac{k-1}{k}\right)\left( \dfrac{\alpha+\beta-\sqrt{(\alpha+\beta)^2-2 \alpha\beta \frac{k}{k-1}}}{\alpha\beta}\right), &
  \begin{array}{l}
    \mbox{ if  } k > \frac{(\alpha+\beta)^2-1}{\alpha^2+\beta^2-1} \\    
    \mbox{ and  }\alpha^2+\beta^2 > 1
  \end{array}\\
1 + \frac{(k-1)((\alpha+\beta-1) - \frac{1}{2}(\alpha+\beta)^2+0.5) }{k\alpha\beta} & \mbox{ otherwise}
\end{cases}
 \end{align} 
}
\end{lemma}
\begin{proof}
  The proof is similar to the proof of Lemma~\ref{lemma:framework-totalU-exclusive}, but slightly more involved. We detail the proof in Appendix A.
\end{proof}

\ifbool{techreport}{
By the fact that $\sqrt{(\alpha+\beta)^2-2 \alpha\beta \frac{k}{k-1}}
= \sqrt{(\alpha+\beta)^2-2 \alpha\beta - 2\alpha\beta\frac{1}{k-1}}$,
which is an increasing function with respect to $k$, and the fact that
$\frac{k-1}{k}$ is a decreasing function with respect to $k$, we know
that the right-hand side of
Eq.~\eqref{eq:schedulability-totalU-constrained} (when
$\alpha^2+\beta^2 > 1$) decreases with
respect to $k$. Similarly, the right-hand side of
Eq.~\eqref{eq:schedulability-totalU-exclusive} also decreases with
respect to $k$. Therefore, for evaluating the utilization bounds, it
is alway safe to take $k\rightarrow \infty$ as a safe upper bound. The
right-hand side of Eq.~\eqref{eq:schedulability-totalU-exclusive}
converges to $\frac{\alpha+\beta-\sqrt{\alpha^2+\beta^2+2
    \alpha\beta\frac{C_k}{t_k}}}{\alpha\beta}$ when $k\rightarrow
\infty$.  The right-hand side of
Eq.~\eqref{eq:schedulability-totalU-constrained}  (when
$\alpha^2+\beta^2 > 1$) converges to
$\frac{\alpha+\beta-\sqrt{\alpha^2+\beta^2}}{\alpha\beta}$ when
$k\rightarrow \infty$.
}

\subsection{Response Time Analysis Framework}

We now further discuss the utilization-based response-time analysis
framework. 
\begin{lemma}
\label{lemma:framework-general-response}
For a given $k$-point response time analysis, defined in
Definition~\ref{def:kpoints-response}, of a scheduling 
algorithm,
in which $0 < \alpha_i \leq \alpha$, $0 < \beta_i \leq \beta$ for any
$i=1,2,\ldots,k-1$, $0 < t_k$ and $\sum_{i=1}^{k-1}
\alpha_i U_i < 1$, the response time to execute $C_k$ for task
$\tau_k$ is at most
\begin{equation}
\label{eq:schedulability-general-response}
\frac{C_k+ \sum_{i=1}^{k-1} \beta_i C_i - \sum_{i=1}^{k-1} \alpha_i U_i (\sum_{\ell=i}^{k-1} \beta_{\ell} C_{\ell})}{1-\sum_{i=1}^{k-1} \alpha_i U_i}.
\end{equation}
\end{lemma}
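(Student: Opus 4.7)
The plan is to mirror the linear-programming strategy used in the proof of Lemma~\ref{lemma:framework-general-schedulability}, but adapted so that we now maximize $t_k$ rather than minimize $C_k^*$. Starting from Definition~\ref{def:kpoints-response}, the objective $t_k = C_k + \sum_{i=1}^{k-1} \alpha_i t_i U_i + \sum_{i=1}^{k-1} \beta_i C_i$ is a linear function of $t_1,\ldots,t_{k-1}$ with strictly positive coefficients $\alpha_i U_i$, so maximizing $t_k$ is equivalent to maximizing $\sum_{i=1}^{k-1}\alpha_i U_i t_i$. The constraints are $0 \leq t_j$ together with the $k-1$ upper bounds $t_j \leq C_k + \sum_{i=1}^{k-1} \alpha_i t_i U_i + \sum_{i=1}^{j-1} \beta_i C_i$ (the strict inequalities in Eq.~\eqref{eq:precond-objective-k2} are safely replaced by $\leq$ when taking a supremum). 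Under the assumption $\sum_{i=1}^{k-1} \alpha_i U_i < 1$, the resulting LP has a finite optimum, so by the extreme-point theorem at least $k-1$ constraints are active at any optimal extreme point.

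Next, I would single out the specific extreme point in which \emph{all} of the $k-1$ upper-bound constraints are tight, i.e.
\begin{equation*}
t_j = C_k + \sum_{i=1}^{k-1} \alpha_i t_i U_i + \sum_{i=1}^{j-1} \beta_i C_i, \qquad \forall j=1,\ldots,k-1.
\end{equation*}
Subtracting consecutive equalities gives the telescoping relation $t_{j+1} - t_j = \beta_j C_j$, hence $t_j = t_k - \sum_{\ell=j}^{k-1} \beta_\ell C_\ell$, exactly analogous to Eqs.~\eqref{eq:periodrelation}--\eqref{eq:3rdperiodrelation}. Substituting this back into the defining equation for $t_k$ and solving for $t_k$ (dividing by $1 - \sum_{i=1}^{k-1}\alpha_i U_i > 0$) yields precisely the closed form in Eq.~\eqref{eq:schedulability-general-response}. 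Before substituting, it should be checked that the candidate $t_j$'s are nonnegative and monotone, which follows from $\beta_i C_i \geq 0$ and from $t_k$ being large enough that $\sum_{\ell=1}^{k-1}\beta_\ell C_\ell \leq t_k$; this latter condition is automatic because $t_k \geq C_k + \sum_{i=1}^{k-1}\beta_i C_i$ from the defining equation.

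The final step is to show this extreme point actually attains the maximum, so that the RHS of Eq.~\eqref{eq:schedulability-general-response} is indeed an upper bound on $t_k$. Any competing extreme point must drop one of the upper-bound constraints and instead activate $t_j = 0$. As in Step~3 of the proof of Lemma~\ref{lemma:framework-general-schedulability}, either $t_j$ was already forced to $0$ (so the two solutions agree on this coordinate) or we can strictly increase $t_j$ up to its upper bound; because $\alpha_j U_j > 0$, this strictly increases the objective $\sum_{i=1}^{k-1}\alpha_i U_i t_i$, contradicting optimality of the competitor. Hence the fully-tight extreme point is optimal, and the bound in Eq.~\eqref{eq:schedulability-general-response} follows.

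The main obstacle, and the only place where real care is needed, is the last optimality argument: one has to rule out extreme points that zero out some $t_j$ without violating the recursion that the remaining tight constraints impose on neighbouring coordinates. This is handled exactly as in the two-case dichotomy at the end of Step~3 of Lemma~\ref{lemma:framework-general-schedulability}, so the proof is essentially a direct transcription of that argument in the ``maximize'' direction.
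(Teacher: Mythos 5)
Your proposal is correct and follows essentially the same route as the paper's proof in Appendix~A: formulate the relaxed linear program, invoke the extreme point theorem, identify the fully tight solution with the telescoping relation $t_{j+1}^*-t_j^*=\beta_j C_j$, and solve for the closed form using $1-\sum_{i=1}^{k-1}\alpha_i U_i>0$. The only (harmless) difference is that the paper drops the constraints $t_j^*\geq 0$ altogether, so its relaxation has a unique extreme point and no competing candidates to rule out, whereas you retain those constraints and dismiss the zero-coordinate extreme points by the Step-3-style exchange argument; both are valid.
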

\begin{proof}
  The proof is similar to the proof of
  Lemma~\ref{lemma:framework-general-schedulability}.  The detailed
  proof is in Appendix A.
\end{proof}

We use the same example in Example~\ref{example-3tasks} by setting
$C_3=8$ to demonstrate how to use
Lemma~\ref{lemma:framework-general-response}.  By the transformation
in Example~\ref{example-response-time}, we know that $\alpha_i=1$ and
$\beta_i=1$ for $i=1,2$.  Now, we can use
Lemma~\ref{lemma:framework-general-response} based on $\pi_1$ and
$\pi_2$ (defined in Example~\ref{example-3tasks}) to calculate the
worst-case response time:
\begin{compactitem}
\item For $\pi_1$, the response-time analysis in
  Lemma~\ref{lemma:framework-general-response} shows that the response
  time of task $\tau_3$ in Example~\ref{example-3tasks} is upper
  bounded by $\frac{C_3 + C_1+C_2 - U_1(C_1+C_2) - U_2C_2}{1-U_1-U_2}
  = \frac{14-0.2\times 6 - 0.5 \times 4}{0.3} = 36$.
\item For $\pi_2$, the response-time analysis in
  Lemma~\ref{lemma:framework-general-response} shows that the response
  time of task $\tau_3$ in Example~\ref{example-3tasks} is upper
  bounded by $\frac{C_3 + C_1+C_2 - U_2(C_2+C_1) - U_1C_1}{1-U_1-U_2}
  = \frac{14-0.5\times 6 - 0.2 \times 2}{0.3} = 35\frac{1}{3}$. 
\end{compactitem}

Not all the last release time orderings are safe for the worst-case
response time analysis. Fortunately, similar to
Lemma~\ref{lemma:general-sorting}, we can safely consider only one
specific last release time ordering of the $k-1$ higher-priority tasks
as shown in the following lemma.
\begin{lemma}
  \label{lemma:general-response-sorting}
  The worst-case ordering $\pi$ of the $k-1$ higher-priority tasks
  under the response bound in
  Eq.~\eqref{eq:schedulability-general-response} in Lemma~\ref{lemma:framework-general-response} is to order the tasks in a non-increasing order of $\frac{\beta_i C_i}{\alpha_i U_i}$,
 in which $0 < \alpha_i$ and $0 < \beta_i$ for any $i=1,2,\ldots,k-1$, $0 < t_k$.
\end{lemma}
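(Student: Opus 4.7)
The plan is to observe that the denominator $1-\sum_{i=1}^{k-1}\alpha_i U_i$ and the additive piece $C_k + \sum_{i=1}^{k-1}\beta_i C_i$ of the numerator in Eq.~\eqref{eq:schedulability-general-response} are both invariant under any permutation of the tasks in $hp(\tau_k)$. Consequently, maximizing the response-time bound reduces to \emph{minimizing} the ordering-dependent quantity
\begin{equation*}
S(\pi) \;=\; \sum_{i=1}^{k-1} \alpha_i U_i \sum_{\ell=i}^{k-1} \beta_\ell C_\ell \;=\; \sum_{1 \le i \le \ell \le k-1} (\alpha_i U_i)(\beta_\ell C_\ell),
\end{equation*}
where the indices are now understood as positions in the ordering $\pi$.

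Set $a_i = \alpha_i U_i$ and $b_i = \beta_i C_i$ for brevity. I would proceed by a standard adjacent-swap (exchange) argument. Fix any ordering and consider exchanging the two tasks placed at adjacent positions $j$ and $j{+}1$. A bookkeeping check shows that every term of $S$ involving an index outside $\{j,j{+}1\}$ is preserved under the swap: for every $i < j$ the contribution $a_i(b_{[j]}+b_{[j+1]})$ is symmetric in the two positions, and for every $\ell > j+1$ the contribution $(a_{[j]}+a_{[j+1]})\,b_{[\ell]}$ is likewise symmetric. The only terms that change are the three ``internal'' ones, which go from $a_{[j]}b_{[j]} + a_{[j]}b_{[j+1]} + a_{[j+1]}b_{[j+1]}$ to $a_{[j+1]}b_{[j+1]} + a_{[j+1]}b_{[j]} + a_{[j]}b_{[j]}$, yielding a net change of $a_{[j+1]}b_{[j]} - a_{[j]}b_{[j+1]}$.

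Hence the swap strictly decreases $S$ exactly when $\tfrac{b_{[j]}}{a_{[j]}} < \tfrac{b_{[j+1]}}{a_{[j+1]}}$, i.e.\ whenever the ratio $\tfrac{\beta_i C_i}{\alpha_i U_i}$ at the later position exceeds the one at the earlier position. Applying such improving swaps repeatedly (as in bubble sort) brings any ordering into the non-increasing order of $\tfrac{\beta_i C_i}{\alpha_i U_i}$ without ever increasing $S$; this ordering therefore minimizes $S$ and hence maximizes the response-time bound, which is the claim.

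The main delicate step is the bookkeeping that isolates the three ``internal'' terms from all cross terms under the adjacent swap; once that is done, the rest is a routine rearrangement argument. I would also note that this proof is structurally parallel to the proof of Lemma~\ref{lemma:general-sorting}: both objectives share the same ordering-dependent double sum $\sum_{i \le \ell} (\alpha_i U_i)(\beta_\ell C_\ell)$ entering with the same sign, so the same sorting criterion arises.
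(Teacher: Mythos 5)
Your proof is correct and follows essentially the same route as the paper: the paper's proof likewise notes that only the double sum $\sum_{i}\alpha_i U_i\bigl(\sum_{\ell=i}^{k-1}\beta_\ell C_\ell\bigr)$ depends on the ordering and then invokes the adjacent-swap argument from the proof of Lemma~\ref{lemma:general-sorting}, which is exactly the exchange computation you carry out explicitly (your net change $a_{[j+1]}b_{[j]}-a_{[j]}b_{[j+1]}$ matches the paper's swap difference). The only difference is that you inline the swap argument rather than citing it, which is fine.
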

\begin{proof}
  The ordering of the $k-1$ higher-priority tasks in the indexing rule
  only matters for the term $\sum_{i=1}^{k-1} \alpha_i U_i
  (\sum_{\ell=i}^{k-1} \beta_{\ell} C_{\ell})$, which was already
  proved in the proof of Lemma~\ref{lemma:general-sorting} to be
  minimized by ordering the tasks in a non-increasing order of
  $\frac{\beta_i C_i}{\alpha_i U_i}$. Clearly, the minimization of
  $\sum_{i=1}^{k-1} \alpha_i U_i
  (\sum_{\ell=i}^{k-1} \beta_{\ell} C_{\ell})$ also leads to the maximization of
  Eq.~\eqref{eq:schedulability-general-response}, which concludes
  the proof.
\end{proof}

As a result, thanks to the help of
Lemma~\ref{lemma:general-response-sorting}, we can conclude that
$\pi_1$ in the example in this subsection is a safe last release time
ordering to use Lemma~\ref{lemma:framework-general-response} for the
worst-case response time analysis.

\ifbool{techreport}{
\begin{table*}[t]
\renewcommand{\arraystretch}{1.2}
  \centering
  \scalebox{0.8}{
\begin{tabular}{|p{8cm}|c|c|p{4cm}|}
    \hline
    \hline
    Model & $\alpha_i$ & $\beta_i$ & c.f.\\ 
    \hline
    \hline
  Uniprocessor Sporadic Tasks& $\alpha_i=1$ & $\beta_i = 1$ &
  Theorems \ref{theorem:schedulability-sporadic-arbitrary} and
  \ref{theorem:response-time-sporadic}\\
    \hline
   Multiprocessor Global RM/DM for Sporadic Tasks & $\alpha_i=\frac{1}{M}$ &
   $\beta_i = \frac{1}{M}$ &
   Theorems~\ref{thm:multiprocessor-GRM-M-1-carry-k2q},~\ref{thm:multiprocessor-GRM-M-1-carry-k2q-V2},~\ref{thm:multiprocessor-grm-sporadic-tight},and~\ref{thm:multiprocessor-gdm-sporadic-tight}\\
    \hline 
  Uniprocessor Periodic Tasks with Jitters& $\alpha_i=1$ & $\beta_i = 1$ &
  Theorem~\ref{theorem:response-time-PJ}\\
    \hline
  Uniprocessor Generalized Multi-Frame, Acyclic, and Mode-Change
  Tasks& $\alpha_i=1$ & $\beta_i = 1$ & 
  Theorems~\ref{theorem:response-time-sporadic-gmf},
  \ref{theorem:schedulability-mode-change}, and \ref{theorem:schedulability-mode-change-rm}.\\
    \hline 
    \hline
  \end{tabular}
}
  \caption{\small The $\alpha_i$ and $\beta_i$ parameters in our demonstrated task models.}
  \label{tab:alpha-beta}
\end{table*}
}{}

\section{Applications by Using Sporadic Task Models}
\label{sec:sporadic}

This section demonstrates how to use the \frameworkkq{} framework to
derive utilization-based schedulability and response-time
analyses for sporadic task systems in uniprocessor and multiprocessor
systems. As sporadic real-time task models are the simplest scenarios
that can demonstrate how to use \frameworkkq{}, the content here is
merely for explaining how to use the framework, but not for
demonstrating the generality or superiority of \frameworkkq{}.

\subsection{Uniprocessor Constrained-Deadline Systems} 

\begin{theorem}
\label{theorem:sporadic-constrained-pessimistic}
Task $\tau_k$ in a sporadic task system with constrained deadlines is
schedulable by the fixed-priority scheduling algorithm if
$\sum_{i=1}^{k-1}\frac{C_i}{D_k} \leq 1$ and
\begin{equation}
\frac{C_k}{D_k} \leq 1-\sum_{i=1}^{k-1}U_i-\sum_{i=1}^{k-1}\frac{C_i}{D_k}+\frac{\sum_{i=1}^{k-1} U_i (\sum_{\ell=i}^{k-1}  C_\ell)}{D_k},
\end{equation}
in which the $k-1$ higher-priority tasks in $hp_1(\tau_k)$ are indexed
in a non-increasing order of $T_i$.
\end{theorem}
\begin{proof}
  This comes from
  Lemma~\ref{lemma:framework-general-schedulability}~and~\ref{lemma:general-sorting}
  based on the setting $\alpha_i=1$ and $\beta_i=1$ to satisfy
  Definition~\ref{def:kpoints}.\footnote{If $T_i > D_k$ for a certain
    task $\tau_i$ in $hp(\tau_k)$, we can simply set $t_i$ to $0$.}
\end{proof}

\begin{theorem}
\label{theorem-rm}
Task $\tau_k$ in a sporadic constrained-deadline task system with is
schedulable by the rate-monotonic (RM) scheduling algorithm if
\begin{equation}\label{eq:rm-qb0}
 \frac{C_k}{D_k} \leq 1- 2\sum_{i=1}^{k-1}U_i +0.5\left((\sum_{i=1}^{k-1} U_i)^2 +(\sum_{i=1}^{k-1} U_i^2)\right)
\end{equation}
or
\begin{equation}\label{eq:rm-qb1}
 \sum_{i=1}^{k-1}U_i \leq \left(\frac{k-1}{k}\right)\left(2-\sqrt{4-\frac{2k (1- \frac{C_k}{D_k} )}{k-1}}\right)
\end{equation} 
or
\begin{equation}\label{eq:rm-qb2}
\frac{C_k}{D_k} + \sum_{i=1}^{k-1}U_i \leq 
  \begin{cases}
\left(\frac{k-1}{k}\right)\left(2-\sqrt{4-\frac{2k}{k-1}}\right) &\mbox{ if } k > 3\\ 
1 -\frac{k-1 }{2k} &\mbox{ if } k \leq 3
  \end{cases}
\end{equation}
\end{theorem}
\begin{proof}
  Under RM scheduling, we know that $C_i = U_i T_i \leq U_i
  T_k$. Therefore, $\alpha$ can be set to $1$ and $\beta$
  can be set to $1$ in
  Definition~\ref{def:alpha-upper-bound}. Eq.~\eqref{eq:rm-qb0} is due
  to Lemma~\ref{lemma:framework-constrained-schedulability},
  Eq.~\eqref{eq:rm-qb1} is due to
  Lemma~\ref{lemma:framework-totalU-exclusive}, and
  Eq.~\eqref{eq:rm-qb2} is due to
  Lemma~\ref{lemma:framework-totalU-constrained}.
\end{proof}

\ifbool{techreport}{
The above result in Theorem~\ref{theorem-rm} leads to the utilization
bound $2-\sqrt{2}$ for implicit-deadline sporadic task systems under
RM scheduling.  This analysis is less precise than the Liu and Layland
bound $\ln{2}\approx 0.693$, a simple implication by using
\frameworkku{}. However, if we are allowed to change the execution
time and period of a task for different job releases (called acyclic
task model in \cite{DBLP:journals/tc/AbdelzaherSL04}), then the tight
utilization bound $2-\sqrt{2}$ can be easily achieved by using
\frameworkkq{}, detailed in Appendix F\citetechreport{}.
}
{
}
\subsection{Uniprocessor Arbitrary-Deadline Systems}  

\ifbool{techreport}{ 
For a specified fixed-priority scheduling algorithm, let $hp(\tau_k)$
be the set of tasks with higher priority than $\tau_k$. We now
classify the task set $hp(\tau_k)$ into two subsets:
\begin{itemize}
\item $hp_1(\tau_k)$ consists of the higher-priority tasks with periods
  smaller than $D_k$.
\item $hp_2(\tau_k)$ consists of the higher-priority tasks with periods
  larger than or equal to $D_k$.
\end{itemize}
}
{
}

The
exact schedulability analysis for arbitrary-deadline task sets under
fixed-priority scheduling has been developed in
\cite{DBLP:conf/rtss/Lehoczky90}. The schedulability analysis is to
use a \emph{busy-window} concept to evaluate the worst-case response
time. That is, we release all the higher-priority tasks together with
task $\tau_k$ at time $0$ and all the subsequent jobs are released as
early as possible by respecting to the minimum inter-arrival time. The
busy window finishes when a job of task $\tau_k$ finishes before the
next release of a job of task $\tau_k$. It has been shown in
\cite{DBLP:conf/rtss/Lehoczky90} that the worst-case response time of
task $\tau_k$ can be found in one of the jobs of task $\tau_k$ in the
busy window. 

For the $h$-th job of task $\tau_k$ in the busy window, the finishing
time $R_{k,h}$ is the minimum $t$ such that
\[ 
h C_k + \sum_{i=1}^{k-1} \ceiling{\frac{t}{T_i}}C_i \leq t, 
\] 
and, hence, its response time is $R_{k,h}-(h-1)T_k$. The busy window
of task $\tau_k$ finishes on the $h$-th job if $R_{k,h} \leq h
T_k$.

\ifbool{techreport}{ 
We can create a virtual
sporadic task $\tau_k'$ with execution time $C_k' =
\ceiling{\frac{D_k}{T_k}}C_k + \sum_{\tau_i \in hp_2(\tau_k)} C_i$,
relative deadline $D_k'=D_k$, and period $T_k'=D_k$. 
For notational brevity, suppose that there are $k^*-1$ tasks in
$hp_1(\tau_k)$. 
We have then the
following theorem. 

\begin{theorem}
\label{theorem:schedulability-sporadic-arbitrary}
Task $\tau_k$ in a sporadic task system is
schedulable by the fixed-priority scheduling algorithm if
$\sum_{i=1}^{k^*-1}\frac{C_i}{D_k} \leq 1$ and
\begin{equation}
\label{eq:schedulability-sporadic-any-arbitrary-a}
\frac{C_k'}{D_k} \leq 1-\sum_{i=1}^{k^*-1}U_i-\sum_{i=1}^{k^*-1}\frac{C_i}{D_k}+\frac{\sum_{i=1}^{k^*-1} U_i (\sum_{\ell=i}^{k^*-1}  C_\ell)}{D_k},
\end{equation}
in which $C_k' = \ceiling{\frac{D_k}{T_k}}C_k + \sum_{\tau_i \in
  hp_2(\tau_k)} C_i$, and the $k^*-1$ higher-priority tasks in $hp_1(\tau_k)$ are indexed
in a non-decreasing order of $\left(\ceiling{\frac{D_k}{T_i}}-1\right)T_i$.
\end{theorem}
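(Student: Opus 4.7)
The plan is to reduce the theorem to a direct application of Lemma~\ref{lemma:framework-general-schedulability} by encoding the classical busy-window condition for arbitrary-deadline sporadic tasks from \cite{DBLP:conf/rtss/Lehoczky90} into the $k^*$-point last-release form of Definition~\ref{def:kpoints} with $\alpha_i = \beta_i = 1$. The key device is the virtual task already introduced before the statement, with execution budget $C_k' = \ceiling{D_k/T_k}C_k + \sum_{\tau_i \in hp_2(\tau_k)} C_i$: it lumps together the at most $\ceiling{D_k/T_k}$ jobs of $\tau_k$ that can be active in a synchronous busy window of length at most $D_k$ together with the at-most-one-job contribution of each $\tau_i \in hp_2(\tau_k)$ (whose period is $\geq D_k$, hence contributes a single release per window of length $D_k$). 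If this aggregated demand can be cleared by some time $t \leq D_k$, the busy window ends by $D_k$ and every deadline of $\tau_k$ is satisfied.

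The interference from the remaining $k^*-1$ tasks in $hp_1(\tau_k)$ is encoded next. Let $\tilde{t}_i := (\ceiling{D_k/T_i}-1)T_i$, the last release instant of $\tau_i$ in $[0,D_k)$, and relabel the $k^*-1$ tasks so that $\tilde{t}_1 \leq \tilde{t}_2 \leq \cdots \leq \tilde{t}_{k^*-1}$, which is the ordering stated in the theorem. I would then set $t_j = \tilde{t}_j$ for $j < k^*$ and $t_{k^*} = D_k$, and bound the total demand at time $\tilde{t}_j^-$: a task with $i < j$ has $\tilde{t}_i < \tilde{t}_j$, so by $\tilde{t}_j^-$ it has already released all $\ceiling{D_k/T_i}$ of its jobs and contributes exactly $\tilde{t}_iU_i + C_i$ units; a task with $i \geq j$ has $\tilde{t}_i \geq \tilde{t}_j$, so its final job has not yet been released and a bound of $(\ceiling{D_k/T_i}-1)C_i = \tilde{t}_iU_i$ suffices. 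Adding $C_k'$ produces the upper bound $C_k' + \sum_{i=1}^{k^*-1}\tilde{t}_iU_i + \sum_{i=1}^{j-1}C_i$, so finding any $j$ for which this quantity is at most $\tilde{t}_j$ certifies that the demand clears by $\tilde{t}_j \leq D_k$. This is exactly the $k^*$-point test of Definition~\ref{def:kpoints} with $\alpha_i = \beta_i = 1$ applied to the virtual task $(C_k', D_k)$.

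Invoking Lemma~\ref{lemma:framework-general-schedulability} with these parameters is then mechanical: the precondition $\sum\beta_iC_i \leq t_{k^*}$ becomes the stated hypothesis $\sum_{i=1}^{k^*-1}C_i/D_k \leq 1$; the utilization precondition $\sum\alpha_iU_i \leq 1$ is implicit, since otherwise the right-hand side of Eq.~\eqref{eq:schedulability-sporadic-any-arbitrary-a} is dominated by the negative term $1 - \sum U_i$ and the test is vacuous; and Eq.~\eqref{eq:schedulability-general} collapses, after expanding $\beta_iC_i - \alpha_iU_i\sum_{\ell=i}^{k-1}\beta_\ell C_\ell$ into $C_i - U_i\sum_{\ell=i}^{k^*-1}C_\ell$ and rearranging, to Eq.~\eqref{eq:schedulability-sporadic-any-arbitrary-a} verbatim.

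The main obstacle is conceptual rather than algebraic: one must verify that the ordering prescribed in the theorem is truly the one Definition~\ref{def:kpoints} requires here. The $t_i$ in this application are not free parameters (so Lemma~\ref{lemma:general-sorting} does not apply); they are physically fixed by the periods, and the dichotomy $i<j$ versus $i\geq j$ in the demand accounting crucially uses that $t_1 \leq \cdots \leq t_{k^*-1}$ coincides with $\tilde{t}_1 \leq \cdots \leq \tilde{t}_{k^*-1}$. Concretely, the only nontrivial step in the bound is showing $\lceil \tilde{t}_j / T_i \rceil \leq \ceiling{D_k/T_i}-1$ for $i \geq j$, which follows immediately from $\tilde{t}_j \leq \tilde{t}_i = (\ceiling{D_k/T_i}-1)T_i$. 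Once this is in place, the reduction to Lemma~\ref{lemma:framework-general-schedulability} completes the proof.
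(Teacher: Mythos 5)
Your proposal is correct and follows exactly the route of the paper's (very terse) proof: reduce to the busy-window test of finishing within $D_k$, absorb the $\ceiling{D_k/T_k}$ jobs of $\tau_k$ and the single jobs of $hp_2(\tau_k)$ into $C_k'$, set $t_i=\left(\ceiling{D_k/T_i}-1\right)T_i$ ordered non-decreasingly to satisfy Definition~\ref{def:kpoints} with $\alpha_i=\beta_i=1$, and apply Lemma~\ref{lemma:framework-general-schedulability}. Your write-up merely supplies the demand-accounting and precondition checks that the paper leaves implicit, and these are all sound.
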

\begin{proof}
  The analysis is based on the observation to test whether the busy
  window can finish within interval length $D_k$, which was also
  adopted in \cite{conf:/rtns09/Davis} and
  \cite{ChenHLRTSS2015}. By setting $t_i =\left
    (\ceiling{\frac{D_k}{T_i}}-1\right)T_i$, and indexing the tasks in
  a non-decreasing order of $t_i$ leads to the satisfaction of
  Definition~\ref{def:kpoints} with $\alpha_i=1$ and $\beta_i=1$.
\end{proof}

Analyzing the schedulability by using
Theorem~\ref{theorem:schedulability-sporadic-arbitrary} can be good if
$\frac{D_k}{T_k}$ is small. However, as the busy window may be
stretched when $\frac{D_k}{T_k}$ is large, we further present how to
safely estimate the worst-case response time. 
}{
}
Suppose that $t_j = \left(\ceiling{\frac{R_{k,h}}{T_j}}-1\right)T_j$
for a higher-priority task $\tau_j$. We index the tasks such that the
last release ordering $\pi$ of the $k-1$ higher-priority tasks is with
$t_j \leq t_{j+1}$ for $j=1,2,\ldots,k-2$. Therefore, we know that
$R_{k,h}$ is upper bounded by finding the maximum
  \begin{equation}
    \label{eq:sporadic-arbitrary-objective-k}
   t_k = hC_k + \sum_{i=1}^{k-1} t_i U_i + \sum_{i=1}^{k-1} C_i,
  \end{equation}
with $0 \leq t_1 \leq t_2 \leq \cdots \leq t_{k-1} \leq t_{k}$ and
  \begin{align}
    \label{eq:sporadic-arbitrary-objective-k2}
    hC_k + \sum_{i=1}^{k-1}  t_i U_i + \sum_{i=1}^{j-1}  C_i > t_j, & \forall j=1,2,\ldots,k-1.
  \end{align}
Therefore, the above derivation of $R_{k,h}$ satisfies
Definition~\ref{def:kpoints-response} with $\alpha_i=1$, and
$\beta_i=1$ for any higher-priority task $\tau_i$. However, it should
be noted that the last release time ordering $\pi$ is actually unknown
since $R_{k,h}$ is unknown. Therefore, we have to apply
Lemma~\ref{lemma:general-response-sorting} for such cases to obtain
the worst-case release time ordering, i.e., the $k-1$ higher-priority tasks are ordered
  in a non-increasing order of their periods.

\begin{lemma}
  \label{lemma:finishing-time-sporadic-h}
  Suppose that $\sum_{i=1}^{k-1}  U_i \leq 1$. Then, for any
  $h \geq 1$ and $C_k > 0$, we have
  \begin{equation}
    \label{eq:R-k-h}
     R_{k,h} \leq \frac{hC_k+ \sum_{i=1}^{k-1}  C_i - \sum_{i=1}^{k-1} U_i (\sum_{\ell=i}^{k-1}  C_{\ell})}{1-\sum_{i=1}^{k-1}  U_i},
  \end{equation}
  where the $k-1$ higher-priority tasks are ordered
  in a non-increasing order of their periods.
\end{lemma}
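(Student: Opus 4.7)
The plan is to recognize that Lemma~\ref{lemma:finishing-time-sporadic-h} is essentially a direct application of the framework lemmas that were just established, so almost all of the work has already been done in the paragraph preceding the lemma statement. I will proceed in three short steps.

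First, I will point out that the discussion immediately above Lemma~\ref{lemma:finishing-time-sporadic-h} already verifies that the quantity $R_{k,h}$ fits Definition~\ref{def:kpoints-response} exactly: taking $t_i = \left(\lceil R_{k,h}/T_i\rceil - 1\right)T_i$ for each higher-priority task $\tau_i$, the inequalities in Eqs.~\eqref{eq:sporadic-arbitrary-objective-k}--\eqref{eq:sporadic-arbitrary-objective-k2} match the template of a $k$-point last-release response-time analysis in which the effective execution time of $\tau_k$ is $hC_k$ (instead of $C_k$) and the coefficients are $\alpha_i=\beta_i=1$ for every $\tau_i\in hp(\tau_k)$. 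So the hypotheses of Lemma~\ref{lemma:framework-general-response} are satisfied, since the assumption $\sum_{i=1}^{k-1}U_i \leq 1$ (strictly less than $1$ in the non-degenerate case needed to make the denominator positive) plays the role of $\sum_{i=1}^{k-1}\alpha_i U_i < 1$.

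Second, I will plug $\alpha_i=\beta_i=1$ and the replacement $C_k \leftarrow hC_k$ into Eq.~\eqref{eq:schedulability-general-response} of Lemma~\ref{lemma:framework-general-response}. This immediately yields the right-hand side of Eq.~\eqref{eq:R-k-h}, establishing the upper bound on $R_{k,h}$.

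Third, I need to justify the claimed indexing order, namely non-increasing periods. Since the last-release time ordering $\pi$ is not known a priori (it depends on the unknown $R_{k,h}$), I invoke Lemma~\ref{lemma:general-response-sorting}, which says the worst case occurs when the $k-1$ higher-priority tasks are sorted in a non-increasing order of $\frac{\beta_i C_i}{\alpha_i U_i}$. With $\alpha_i = \beta_i = 1$ and $U_i = C_i/T_i$, this ratio reduces to $T_i$, so the worst-case ordering is a non-increasing order of periods, exactly as stated. Combining the bound from step two with this worst-case ordering completes the proof.

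The only subtle point -- and the one I would phrase carefully -- is the translation step: confirming that using $hC_k$ in place of $C_k$ is legitimate in the framework, and that the $t_k$ that appears as the objective of Definition~\ref{def:kpoints-response} corresponds to the finishing time $R_{k,h}$ of the $h$-th job in the busy window. This is already argued in the paragraph above the lemma via Eqs.~\eqref{eq:sporadic-arbitrary-objective-k}--\eqref{eq:sporadic-arbitrary-objective-k2}, so no additional obstacle remains beyond a clean reference to those equations. There are no heavy calculations to perform.
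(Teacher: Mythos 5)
Your proposal is correct and follows essentially the same route as the paper: the paper's proof likewise cites the preceding discussion (with $\alpha_i=\beta_i=1$ and $hC_k$ in place of $C_k$) and applies Lemma~\ref{lemma:framework-general-response} together with Lemma~\ref{lemma:general-response-sorting}, whose ratio $\frac{\beta_i C_i}{\alpha_i U_i}=T_i$ gives the non-increasing-period ordering. The only cosmetic difference is that the paper disposes of the boundary case $\sum_{i=1}^{k-1}U_i=1$ explicitly by noting that Eq.~\eqref{eq:R-k-h} then reads $R_{k,h}\leq\infty$, which is vacuously a safe bound, whereas you flag it as the degenerate case without spelling that out.
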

\begin{proof}
  This comes from the above discussions with $\alpha_i=1$, $\beta_i=1$
  by applying Lemmas~\ref{lemma:framework-general-response} and
  \ref{lemma:general-response-sorting} when $\sum_{i=1}^{k-1} 
  U_i < 1$. The case when $\sum_{i=1}^{k-1} U_i = 1$ has a
  safe upper bound $R_{k,h}=\infty$ in Eq.~\eqref{eq:R-k-h}.
\end{proof}

\begin{theorem}
  \label{theorem:response-time-sporadic}
  Suppose that $\sum_{i=1}^{k} U_i \leq 1$. The worst-case
  response time of task $\tau_k$ is at most
  \begin{equation}
    \label{eq:R-k}
    R_{k} \leq \frac{C_k+ \sum_{i=1}^{k-1}  C_i - \sum_{i=1}^{k-1} U_i (\sum_{\ell=i}^{k-1}  C_{\ell})}{1-\sum_{i=1}^{k-1}  U_i},
  \end{equation}
  where the $k-1$ higher-priority tasks are ordered
  in a non-increasing order of their periods.
\end{theorem}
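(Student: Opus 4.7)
The plan is to combine the busy-window characterization of the worst-case response time with the per-job bound already established in Lemma~\ref{lemma:finishing-time-sporadic-h}, and then argue that among all jobs in the busy window, the first one yields the largest response time whenever $\sum_{i=1}^{k} U_i \leq 1$.

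First, I would recall from the discussion preceding Theorem~\ref{theorem:schedulability-sporadic-arbitrary} that, by Lehoczky's busy-window analysis, the worst-case response time $R_k$ equals $\max_{h \geq 1} \bigl(R_{k,h} - (h-1)T_k\bigr)$, where the maximum ranges over the jobs $h$ inside the synchronous busy window of $\tau_k$. Then, for each such $h$, Lemma~\ref{lemma:finishing-time-sporadic-h} gives
\begin{equation*}
R_{k,h} - (h-1)T_k \leq \frac{hC_k + \sum_{i=1}^{k-1} C_i - \sum_{i=1}^{k-1} U_i\bigl(\sum_{\ell=i}^{k-1} C_\ell\bigr)}{1 - \sum_{i=1}^{k-1} U_i} - (h-1)T_k,
\end{equation*}
provided the higher-priority tasks are ordered in non-increasing order of periods, which is exactly the ordering demanded by Lemma~\ref{lemma:general-response-sorting} applied to the coefficients $\alpha_i = \beta_i = 1$.

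The key step is then to show that, under the hypothesis $\sum_{i=1}^{k} U_i \leq 1$, the right-hand side above is non-increasing in $h$, so the maximum is attained at $h = 1$. A direct subtraction gives
\begin{equation*}
\bigl(R_{k,h+1} - hT_k\bigr) - \bigl(R_{k,h} - (h-1)T_k\bigr) \leq \frac{C_k}{1 - \sum_{i=1}^{k-1} U_i} - T_k,
\end{equation*}
and the assumption $\sum_{i=1}^{k} U_i \leq 1$ rearranges exactly to $C_k \leq T_k\bigl(1 - \sum_{i=1}^{k-1} U_i\bigr)$, making this difference non-positive. Consequently $R_k \leq R_{k,1} - 0 \cdot T_k$, and substituting $h=1$ in the bound of Lemma~\ref{lemma:finishing-time-sporadic-h} produces exactly Eq.~\eqref{eq:R-k}.

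I expect the main subtlety, rather than a technical obstacle, to be justifying that the per-job bound from Lemma~\ref{lemma:finishing-time-sporadic-h} remains valid with the same task ordering for every $h$: the ordering is chosen by Lemma~\ref{lemma:general-response-sorting} to be non-increasing in $T_i$, independent of $h$ or $R_{k,h}$, so the bound applies uniformly across all jobs in the busy window. A minor edge case is when $\sum_{i=1}^{k-1} U_i = 1$, in which case the denominator vanishes; here the hypothesis forces $C_k = 0$, and either the theorem holds vacuously or the bound is understood in the limiting sense consistent with Lemma~\ref{lemma:finishing-time-sporadic-h}.
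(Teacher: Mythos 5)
Your proposal is correct and follows essentially the same route as the paper: apply Lemma~\ref{lemma:finishing-time-sporadic-h} to each job $h$ in the busy window, observe that the resulting bound on $R_{k,h}-(h-1)T_k$ changes by $\frac{C_k}{1-\sum_{i=1}^{k-1}U_i}-T_k \leq 0$ per increment of $h$ under the hypothesis $\sum_{i=1}^{k}U_i\leq 1$, and conclude the maximum occurs at $h=1$ (the paper phrases this via the first-order derivative in $h$ and splits into the cases $\sum_{i=1}^{k}U_i<1$ and $=1$, which is equivalent to your finite-difference argument). Your added remarks on the uniformity of the task ordering across $h$ and on the degenerate denominator are consistent with the paper's treatment.
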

\begin{proof}
  This can be proved by showing that $R_{k,h}-(h-1)T_k$ is maximized
  when $h$ is $1$, where $R_{k,h}$ is derived by using
  Lemma~\ref{lemma:finishing-time-sporadic-h}.  The first-order
  derivative of $R_{k,h}-(h-1)T_k$ with respect to $h$ is
  $\frac{C_k}{1-\sum_{i=1}^{k-1} U_i} - T_k = \frac{C_k -
    (1-\sum_{i=1}^{k-1} U_i) T_k }{1-\sum_{i=1}^{k-1} U_i}$.
  There are two cases:

\noindent{\bf Case 1:} If $\sum_{i=1}^{k} U_i < 1$, then $\frac{C_k -
    (1-\sum_{i=1}^{k-1} U_i) T_k }{1-\sum_{i=1}^{k-1} U_i} < \frac{C_k -
    U_k T_k }{1-\sum_{i=1}^{k-1} U_i} = 0$. Therefore,
  $R_{k,h}-(h-1)T_k$ is a decreasing function of $h$. Therefore, the
  response time is maximized when $h$ is $1$.

  \noindent{\bf Case 2:} If $\sum_{i=1}^{k} U_i = 1$, then we know
  that $\frac{C_k - (1-\sum_{i=1}^{k-1} U_i) T_k }{1-\sum_{i=1}^{k-1}
    U_i} = 0$. Therefore, $R_{k,h}-(h-1)T_k$ remains the same
  regardless of $h$.

  Therefore, for both cases, the worst-case response time of task
  $\tau_k$ can be safely bounded by Eq.~\eqref{eq:R-k}. Moreover,
  since the worst case happens when $h=1$, we do not have to check the
  length of the busy window, and we reach
  our conclusion.
\end{proof}

\ifbool{techreport}{ 

\begin{corollary}
  \label{corollary:arbitrary-response-schedulability}
  Task $\tau_k$ in a sporadic task system is schedulable by the
  fixed-priority scheduling algorithm if $\sum_{i=1}^{k} U_i \leq 1$
  and 
  \begin{equation}
    \label{eq:R-D-k}
    \frac{C_k}{D_k} \leq 1-\sum_{i=1}^{k-1}  U_i -
    \frac{\sum_{i=1}^{k-1}  C_i}{D_k} + \frac{\sum_{i=1}^{k-1} U_i (\sum_{\ell=i}^{k-1}  C_{\ell})}{D_k},
  \end{equation}
  where the $k-1$ higher-priority tasks are ordered
  in a non-increasing order of their periods.
\end{corollary}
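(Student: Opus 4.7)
The plan is to obtain this schedulability condition as an immediate consequence of Theorem \ref{theorem:response-time-sporadic}. Task $\tau_k$ is schedulable precisely when its worst-case response time satisfies $R_k \leq D_k$, so I would simply combine the response-time upper bound from Eq.~\eqref{eq:R-k} with the deadline requirement $R_k \leq D_k$ and rearrange algebraically to isolate $C_k/D_k$ on the left-hand side.

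Concretely, the steps are: (i) Observe that the assumption $\sum_{i=1}^{k} U_i \leq 1$, together with the standing non-triviality assumption $C_k > 0$ (so $U_k > 0$), yields $\sum_{i=1}^{k-1} U_i \leq 1 - U_k < 1$; hence the denominator $1 - \sum_{i=1}^{k-1} U_i$ appearing in Eq.~\eqref{eq:R-k} is strictly positive, which both makes Theorem \ref{theorem:response-time-sporadic} applicable and allows multiplication by this denominator while preserving the inequality direction. (ii) Chain Theorem \ref{theorem:response-time-sporadic} with the deadline condition to obtain
\[
C_k + \sum_{i=1}^{k-1} C_i - \sum_{i=1}^{k-1} U_i\Bigl(\sum_{\ell=i}^{k-1} C_\ell\Bigr) \leq D_k\Bigl(1-\sum_{i=1}^{k-1} U_i\Bigr).
\]
(iii) Divide both sides by $D_k$ and move the higher-priority terms to the right-hand side, which produces exactly Eq.~\eqref{eq:R-D-k}. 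The ordering requirement (non-increasing periods for the $k-1$ higher-priority tasks) is inherited verbatim from Theorem \ref{theorem:response-time-sporadic} and does not need to be re-established.

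There is no real obstacle in this argument: the corollary is a purely algebraic rearrangement of the response-time bound established in Theorem \ref{theorem:response-time-sporadic}, with the only subtle point being the need to verify positivity of the denominator before cross-multiplying. The substantive work, namely applying the framework lemmas \ref{lemma:framework-general-response} and \ref{lemma:general-response-sorting} through the busy-window reduction in Lemma \ref{lemma:finishing-time-sporadic-h} and Theorem \ref{theorem:response-time-sporadic}, has already been completed upstream.
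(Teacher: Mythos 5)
Your proposal is correct and is exactly the argument the paper intends: the corollary is stated without a separate proof precisely because it follows from Theorem~\ref{theorem:response-time-sporadic} by imposing $R_k \leq D_k$ and rearranging, with the ordering inherited unchanged. Your extra care in noting that $C_k>0$ forces $U_k>0$ and hence $1-\sum_{i=1}^{k-1}U_i>0$, so the cross-multiplication is valid, is a worthwhile detail the paper leaves implicit.
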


\noindent{\bf Remarks:} 
The utilization-based worst-case response-time analysis in  
Theorem~\ref{theorem:response-time-sporadic} is analytically tighter  
than the best known result, $R_{k} \leq \frac{C_k+ \sum_{i=1}^{k-1}
  C_i - \sum_{i=1}^{k-1} U_i C_i}{1-\sum_{i=1}^{k-1} U_i}$, by Bini et  
al. \cite{bini2009response}. 
 Lehoczky \cite{DBLP:conf/rtss/Lehoczky90} also provides the total
utilization bound of RM scheduling for arbitrary-deadline systems. The
analysis in \cite{DBLP:conf/rtss/Lehoczky90} is based on the Liu and
Layland analysis \cite{liu1973scheduling}. The resulting utilization
bound is a function of $\Delta=\max_{\tau_i}\{\frac{D_i}{T_i}\}$. When
$\Delta$ is $1$, it is an implicit-deadline system. The utilization
bound in \cite{DBLP:conf/rtss/Lehoczky90} has a closed-form when
$\Delta$ is an integer. However, calculating the utilization bound for
non-integer $\Delta$ is done asymptotically for $k=\infty$ with
a complicated analysis. 
}{
\noindent{\bf Some Remarks:} In parallel, Bini et
al. \cite{bini-RTSS2015} have recently also developed a similar
worst-case response time bound, i.e., their Theorem 1 in
\cite{bini-RTSS2015} is very similar to our
Theorem~\ref{theorem:response-time-sporadic}. Although the proofs are
completely different, we reach the same bound. They also show that the
response-time bound is the tightest continuous function upper bounding
the exact response time of sets of tasks with full utilization when
there are only two tasks in the system. Note that we can obtain different
utilization-based tests by exploiting different properties in the
\frameworkkq{} framework. 
}

\subsection{Multiprocessor Implicit-Deadline Systems}  

We now present how to use \frameworkkq{} to analyze the schedulability
for implicit-deadline sporadic task systems under global
rate-monotonic (global RM) scheduling. Here, we start from the
pseudo-polynomial-time schedulability test by Guan et
al. \cite{DBLP:conf/rtss/GuanSYY09} that we only have to consider
$M-1$ tasks with carry-in jobs, for constrained-deadline (hence, also
for implicit-deadline) task sets.  More precisely, we can define two
different time-demand functions, depending on whether task $\tau_i$ is
with a carry-in job or not:\footnote{This is an over-approximation of
  the linear function used by Guan et
  al. \cite{DBLP:conf/rtss/GuanSYY09}.}
\begin{equation}
  \label{eq:W_i-carryin}
W_i^{carry}(t) =
\begin{cases}
  C_i & 0 < t < C_i\\
  C_i + \ceiling{\frac{t-C_i}{T_i}}C_i & otherwise,
\end{cases}
\end{equation}
and
\begin{equation}
  \label{eq:W_i-normal}
W_i^{normal}(t) = \ceiling{\frac{t}{T_i}}C_i.
\end{equation}
Moreover, we can further over-approximate $W_i^{carry}(t)$, since  $W_i^{carry}(t) \leq W_i^{normal}(t)+C_i$. Therefore, a sufficient schedulability test for testing task $\tau_k$ with $k > M$ for global RM is to verify whether 
\begin{equation}
  \label{eq:grm-multiprocessor-M-1-carryin}
\exists 0 < t \leq T_k, C_k + \frac{(\sum_{\tau_i \in {\bf T}'} C_i) +  (\sum_{i=1}^{k-1}W_i^{normal}(t)) }{M} \leq t.  
\end{equation}
for all ${\bf T}' \subseteq hp(\tau_k)$ with $|{\bf T}'| = M-1$.

This leads to the following theorem by using
Lemma~\ref{lemma:framework-general-schedulability}. 

\begin{theorem}
\label{thm:multiprocessor-GRM-M-1-carry-k2q}
Task $\tau_k$ in a sporadic implicit-deadline task system is
schedulable by global RM on $M$ processors if $\sum_{i=1}^{k-1}
  C_i \leq M T_k$ and
\begin{equation}\small
\label{eq:schedulability-GRM-M-1-carry-k2q}
 U_k \leq 1 -   \frac{\sum_{\tau_i \in {\bf
      T}'} C_i}{M T_k} -  \sum_{i=1}^{k-1} \frac{U_i}{M} -
\frac{\sum_{i=1}^{k-1} C_i}{M T_k}+\frac{\sum_{i=1}^{k-1} ( U_i
  \sum_{\ell=i}^{k-1} C_\ell )}{M^2 T_k}.
\end{equation}
by indexing the $k-1$ higher-priority tasks in a non-decreasing order
of $(\ceiling{\frac{T_k}{T_i}}-1)T_i$ for every $\tau_i \in hp(\tau_k)$ and by putting the $M-1$
higher-priority tasks with the largest execution times into ${\bf
  T}'$.
\end{theorem}
\begin{proof}
  It is not necessary to enumerate all ${\bf T}'\subseteq {\bf T}$
  with $|{\bf T}'| = M-1$ if we can construct the task set ${\bf T}'
  \subseteq hp(\tau_k)$ with the maximum $\sum_{\tau_i \in {\bf T}'}
  C_i$.  To use \frameworkkq{}, we are certain about which tasks
  should be put into the carry-in task set ${\bf T}'$ by assuming that
  $C_i$ and $T_i$ are both given. That is, we simply have to put the
  $M-1$ higher-priority tasks with the largest execution times into
  ${\bf T}'$. This can be imagined as if we increase the execution
  time of task $\tau_k$ from $C_k$ to $C_k' = C_k + \frac{\sum_{\tau_i
      \in {\bf T}'} C_i}{M}$. Moreover, we have $\alpha_i =
  \frac{1}{M}$ and $\beta_i = \frac{1}{M}$ for every task $\tau_i \in
  hp(\tau_k)$ in this case. 

  Therefore, based on the test in
  Eq.~\eqref{eq:grm-multiprocessor-M-1-carryin}, we have the 
  last release time ordering defined by indexing the $k-1$
  higher-priority tasks in a non-decreasing order of
  $(\ceiling{\frac{T_k}{T_i}}-1)T_i$ for every $\tau_i \in
  hp(\tau_k)$. By adopting
  Lemma~\ref{lemma:framework-general-schedulability} with
  $\alpha_i=\frac{1}{M}$ and $\beta_i = \frac{1}{M}$, we know that
  task $\tau_k$ is schedulable by global RM if $\sum_{i=1}^{k-1}
  \frac{C_i}{M} \leq T_k$ and 
\begin{equation}\small
 \frac{C_k + \sum_{\tau_i \in {\bf
      T}'} \frac{C_i}{M}}{T_k} \leq 1 -  \sum_{i=1}^{k-1} \frac{U_i}{M} -
\frac{\sum_{i=1}^{k-1} C_i}{M T_k}+\frac{\sum_{i=1}^{k-1} ( U_i
  \sum_{\ell=i}^{k-1} C_\ell )}{M^2 T_k}.
\end{equation}
  By reorganizing the above inequality, we reach the conclusion.
\end{proof}

We can always take the pessimistic last release time ordering in
Lemma~\ref{lemma:general-sorting}, for concluding the following theorem.

\begin{theorem}
\label{thm:multiprocessor-GRM-M-1-carry-k2q-V2}
Task $\tau_k$ in a sporadic implicit-deadline task system is
schedulable by global RM on $M$ processors if the condition in
Eq.~\eqref{eq:schedulability-GRM-M-1-carry-k2q} holds by indexing the
$k-1$ higher-priority tasks in a non-increasing order of $T_i$, for
every $\tau_i \in hp(\tau_k)$.
\end{theorem}
\begin{proof}
  This is proved based on the same argument in Theorem~\ref{thm:multiprocessor-GRM-M-1-carry-k2q} by adopting Lemmas~\ref{lemma:framework-general-schedulability}~and~\ref{lemma:general-sorting}.
\end{proof}

We can of course revise the statement in
Theorems~\ref{thm:multiprocessor-GRM-M-1-carry-k2q}~and~\ref{thm:multiprocessor-GRM-M-1-carry-k2q-V2}
by adopting Lemma~\ref{lemma:framework-constrained-schedulability} and
Lemma~\ref{lemma:framework-totalU-exclusive} to construct
schedulability tests by using only the utilization of the higher-priority tasks.

{\bf Evaluation Results} We conduct experiments using synthesized task
sets for evaluating the tests in
Theorem~\ref{thm:multiprocessor-GRM-M-1-carry-k2q} and
Theorem~\ref{thm:multiprocessor-GRM-M-1-carry-k2q-V2}.  We first
generated a set of sporadic tasks. The cardinality of the task set was
$5$ times the number of processors, i.e., 40 tasks on 8 multiprocessor
systems.  The UUniFast-Discard method~\cite{davis2011improved} was
adopted to generate a set of utilization values with the given goal.
We used the approach suggested by Davis et
al.~\cite{davis2008efficient} to generate the task periods according
to a uniform distribution in the range of the logarithm of the task periods (i.e.,
log-uniform distribution).  The order of magnitude $p$ to
control the period values between the largest and smallest periods is
parameterized in evaluations, (e.g., $1-10ms$ for $p=1$, $1-100ms$ for
$p=2$, etc.).  We evaluate these tests in uniprocessor systems with $p
\in [1,2,3]$.  The execution time
was set accordingly, i.e., $C_{i}=T_iU_i$.  Tasks' relative deadlines
were equal to their periods.

The evaluated tests for $n$ tasks in ${\bf T}$ with $n \geq M$ are:
\begin{itemize}
\item \emph{BCL}: the linear-time test in Theorem 4 in~\cite{bertogna2006new}.
\item \emph{FF}: the pseudo-polynomial-time forced-forward (FF) analysis in Eq. (5) in \cite{DBLP:journals/rts/BaruahBMS10}.
\item \emph{BAK}: the $O(n^3)$ test in Theorem 11 in~\cite{baker2006analysis}.
\item \emph{Guan}: the pseudo-polynomial-time response time analysis\cite{DBLP:conf/rtss/GuanSYY09}.
\item \emph{QB-BC} (from \frameworkkq{}):
  Eq.~\eqref{eq:schedulability-GRM-M-1-carry-k2q} in
  Theorem~\ref{thm:multiprocessor-GRM-M-1-carry-k2q}. This requires to
  sort the higher-priority tasks to define the proper last release
  ordering and the $M-1$ carry-in jobs; therefore, the time complexity is $O(n^2 \log n)$ for a
  task set with $n$ tasks.
\item \emph{QB-BC2} (from \frameworkkq{}):
  Eq.~\eqref{eq:schedulability-GRM-M-1-carry-k2q} in
  Theorem~\ref{thm:multiprocessor-GRM-M-1-carry-k2q-V2} by always
  using the worst-case release time ordering, which is the reverse
  order of the given priority assignment. The schedulability test can
  be implemented in $O(n\log M)$ time complexity by using proper data
  structures, provided that the RM priority order is given.\footnote{The time complexity is mainly due to the
    calculation of ${\bf T}'$ to get the $M-1$ tasks with the maximum
    carry-in execution time since the other operations can be done in
    $O(1)$ time complexity by using proper data structures to
    calculate the values when we intend to test task $\tau_{k+1}$
    after task $\tau_k$. Specifically, due to the predefined last
    release time ordering, when we intend to test task $\tau_{k+1}$
    after task $\tau_k$, we only have to insert task $\tau_k$ to be
    indexed as $1$ and updating from $\frac{\sum_{i=1}^{k-1} ( U_i
      \sum_{\ell=i}^{k-1} C_\ell )}{M^2 T_k}$  to $\frac{\sum_{i=1}^{k} ( U_i
      \sum_{\ell=i}^{k} C_\ell )}{M^2 T_k}$ (under the new ordering) takes only constant time
    complexity.  Finding task set ${\bf T}'$ can be implemented by
    using a min heap to store the $M-1$ tasks in ${\bf T}'$. When we move
    from testing task $\tau_k$ (when $k \geq M$) to task $\tau_{k+1}$,
    we need to compare whether $C_k$ is larger than the minimum
    execution time of the tasks in the heap. If no, we keep the same
    task set ${\bf T}'$; if yes, we pop out the task with the minimum
    execution time in the heap, and insert task $\tau_k$ into the
    heap.  By using the heap, this operation requires time complexity
    $O(\log M)$. Calculating $C_{k+1}'$ from $C_k$ with
    the help of the heap can be done in $O(1)$ time complexity. }
\end{itemize}

Figure~\ref{fig:mul-2} depicts the result of the performance
comparison. In all the cases, we can see that QB-BC is superior to all
the other polynomial-time tests.  QB-BC2 is slightly worse than QB-BC
but the time complexity is lower.  Since QB-BC and QB-BC2 are designed
from a more pessimistic test than the analysis by Guan et
al. \cite{DBLP:conf/rtss/GuanSYY09} in pseudo-polynomial time, they 
are worse. But, we note that
there is a significant gap in time complexity between QB-BC, QB-BC2,
and Guan.  Overall, the tests derived by using the \frameworkkq{}
framework perform reasonably well with their low time complexity.

\begin{figure*}[t]
  \centering
  \includegraphics[width=0.9\textwidth]{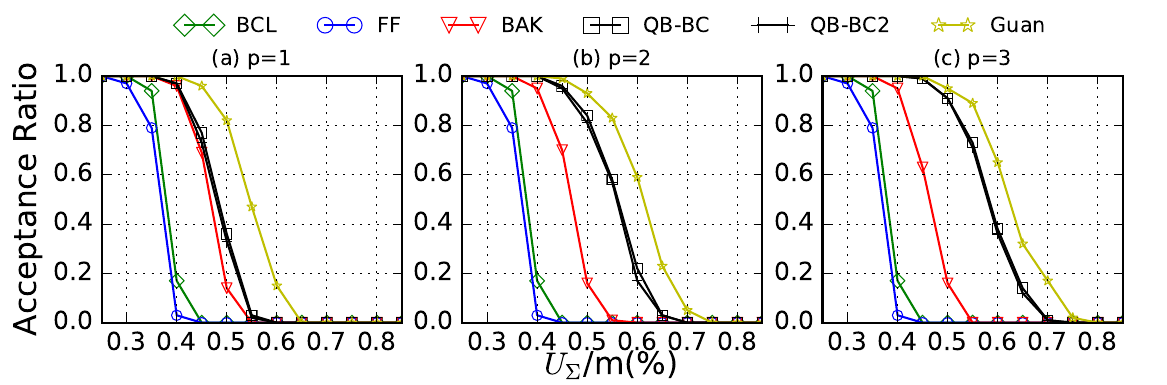}
  \caption{Acceptance ratio comparison on implicit-deadline 8 multiprocessor systems.}
  \label{fig:mul-2}
\end{figure*}

\section{Conclusion and Extensions}

In this paper, we present a general response-time analysis and
schedulability-test framework, called \frameworkkq{}. Thanks to the independence
upon the task and platform models in the framework, \frameworkkq{} can be viewed as a
``block-box'' interface that can result in sufficient
utilization-based analyses for a wide range of applications in
real-time systems under fixed-priority scheduling.  We believe that
the \frameworkkq{} framework has high potential to be adopted to solve
several other problems for analyzing other task models in real-time
systems with fixed-priority scheduling.  The framework can be used,
once the corresponding $k$-point last-release scheduling test or
response time analysis can be constructed.

Moreover, our proposed frameworks,
\frameworkku{} and \frameworkkq{},
provide a solid mathematical foundation for deriving polynomial-time utilization-based
schedulability tests and response time analyses almost
\emph{automatically}. That is, utilization-based analyses are almost
automatically derived if the schedulability tests can be formulated
in the scope of the frameworks.
\ifbool{techreport}{
  We have demonstrated several applications in this paper. Some models
  have introduced pretty high dynamics, but we can still handle the
  response time analysis and schedulability test with proper
  constructions so that the \frameworkkq{} framework is applicable.
Therefore, with the presented approach, some difficult schedulability
test and response time analysis problems may be solved by building a good (or
exact) exponential-time test and using the approximation in the
\frameworkkq{} framework.  With the quadratic and hyperbolic
expressions, \frameworkkq{} and \frameworkku{} frameworks can be used
to provide many quantitive features to be measured, like the total
utilization bounds, speed-up factors, etc., not only for uniprocessor
scheduling but also for multiprocessor scheduling.
}{
  Due to the space limitation, we are only able to summarize some results
  herein, but not to provide details. The detailed evaluations,
  compared to other approaches are in
  \cite{DBLP:journals/corr/framework-compare}. There are more
  applications presented in the report
  \cite{DBLP:journals/corr/abs-k2q}, including
\begin{compactitem}
\item more explorations for uniprocessor fixed-priority scheduling in
  Appendix B\citetechreport{}.  
\item an extension to adopt the force-forward schedulability analysis for multiprocessor global fixed-priority scheduling in
  Appendix C\citetechreport{}.  
\item the first polynomial-time worst-case response time analysis, to
  the best of our knowledge, for sporadic real-time tasks with jitters
  Appendix D\citetechreport{}.  
\item a demonstration on converting
  exponential-time schedulability tests of generalized multi-frame
  task models
  \cite{DBLP:journals/rts/BaruahCGM99,DBLP:conf/rtcsa/TakadaS97} to
  polynomial-time tests in 
  Appendix E\citetechreport{}.
\item mode-level fixed-priority scheduling
  policies by
  studying the acyclic task model
  \cite{DBLP:journals/tc/AbdelzaherSL04} and the multi-mode task model
  \cite{DBLP:conf/rtas/DavisFPS14} in Appendix F\citetechreport{}.
\end{compactitem}
}

When adopting \frameworkkq{} for schedulability tests, we assume that
$t_k$ is specified in
Lemma~\ref{lemma:framework-general-schedulability}. In this paper, we
do not explore how to configure the best value of $t_k$ and its last
release time ordering $\pi$ such that the resulting quadratic form is
the best. Therefore, the combination of \frameworkkq{}/\frameworkku{}
and the tunable approach by Bini and Buttazzo
\cite{DBLP:journals/tc/BiniB04} can be an interesting future research
direction, as this can potentially balance the schedulability test and
the time complexity for concrete applications. Essentially, this
combination is to search the proper settings of different $t_k$ values
such that the associated last release time ordering $\pi$ can be 
less pessimistic, as demonstrated by several cases regarding Example~\ref{example-3tasks} in Section~\ref{sec:framework}.

 \begin{spacing}{0.9}
   \noindent{\small {\bf Acknowledgement}: This paper has been
     supported by NSF grant CNS 1527727 and DFG, as part of the Collaborative Research Center
     SFB876 (http://sfb876.tu-dortmund.de/), and the priority program
     "Dependable Embedded Systems" (SPP 1500 -
     http://spp1500.itec.kit.edu). \ifbool{techreport}{}{We would also like to thank the
     anonymous reviewers for their valuable comments to improve the
     presentation of the paper.}
  }
 \end{spacing}

\footnotesize
\vspace{-0.1in}
\begin{spacing}{0.98}
\def\IEEEbibitemsep{-1pt}
\bibliographystyle{abbrv}
\bibliography{ref,real-time}
\end{spacing}
\normalsize

\ifbool{techreport}{}{\vspace{-0.2in}}
\section*{Appendix A: Proofs}

 

  

\begin{appProof}{Lemma \ref{lemma:framework-totalU-constrained}}
 Similar to the proof of Lemma~\ref{lemma:framework-totalU-exclusive}, we only have to consider the cases when $U_i$ is set to $\frac{x}{k-1}$ to make the schedulability condition the most difficult, where $x=\sum_{i=1}^{k-1}U_i$. Suppose that $\frac{C_k}{t_k}$ is $y$. Then, we are looking for the infimum $x+y$ such that $y > 1- (\alpha+\beta)x+0.5\alpha\beta (x^2 + \frac{x^2}{k-1})$.

  To solve this, we start with $y = 1- (\alpha+\beta)x+0.5\alpha\beta (x^2 + \frac{x^2}{k-1})$. Our objective becomes to minimize $H(x)=x+1- (\alpha+\beta)x+0.5\alpha\beta (x^2 + \frac{x^2}{k-1})$. By finding $\dfrac{d H(x)}{d x} = 1-(\alpha+\beta)+\frac{k\alpha\beta x}{k-1}=0$, we know that $x=\frac{(k-1)(\alpha+\beta-1)}{k \alpha\beta}$. Therefore,
  \begin{align}
    y =& 1 - x\left(\alpha+\beta - \frac{0.5 k \alpha\beta \frac{(k-1)(\alpha+\beta-1)}{k \alpha\beta}}{k-1}\right) \nonumber\\
   = & 1-\frac{(k-1)(\alpha+\beta-1)}{k \alpha\beta}(0.5(\alpha+\beta+1)) \nonumber\\
   = & 1-\frac{0.5(k-1)\left((\alpha+\beta)^2-1\right)}{k\alpha\beta}.\label{eq:y-and-k}
  \end{align}
Since $\alpha+\beta \geq 1$, we know that $x \geq 0$.
  Whether we should take the above solution only depends on whether $y \geq 0$ or not. If $y \geq 0$, then we can conclude the solution directly; otherwise, if $y < 0$, we should set $y$ to $0$. That is, by reorganizing Eq.~\eqref{eq:y-and-k} (under the assumption $\alpha> 0$ and $\beta > 0$), examining whether $y < 0$ is equivalent to testing
$(1-\frac{1}{k})\left((\alpha+\beta)^2-1\right) > 2\alpha\beta$, which implies to test whether $\alpha^2+\beta^2-1> \frac{1}{k}\left((\alpha+\beta)^2-1\right)$. If $\alpha^2+\beta^2 \leq 1$, then  $y \geq 0$ since $\alpha^2+\beta^2-1 \leq 0 \leq \frac{1}{k}\left((\alpha+\beta)^2-1\right)$ due to the assumption $\alpha+\beta \geq 1$.
Therefore, there are two cases:\\
\noindent {\bf Case 1:} If $\alpha^2+\beta^2 > 1$ and $k >
\frac{(\alpha+\beta)^2-1}{\alpha^2+\beta^2-1}$, then, for such a case $y$ derived from Eq.~\eqref{eq:y-and-k} is negative. We should set $y$ to $0$. The remaining procedure here is the same as in solving the quadratic equation in the proof of Lemma~\ref{lemma:framework-totalU-exclusive} by setting $\frac{C_k}{t_k}$ to $0$. This leads to the first condition in Eq.~\eqref{eq:schedulability-totalU-constrained}.\\
\noindent{\bf Case 2:} If  $\alpha^2+\beta^2 \leq1$ or $k \leq
\frac{(\alpha+\beta)^2-1}{\alpha^2+\beta^2-1}$,
then, we have the
conclusion that $y \geq 0$ and $x \geq 0$. We just have to sum up the above derived $x$
and $y$.  This leads to the second condition in
Eq.~\eqref{eq:schedulability-totalU-constrained} directly.
\end{appProof}

\begin{appProof}{Lemma \ref{lemma:framework-general-response}}
 Definition \ref{def:kpoints-response} leads to the following optimization problem:

  \begin{subequations}\label{eq:lp-init-response0}
{\footnotesize  \begin{align}
    \mbox{sup\;\;} & C_k + \sum_{i=1}^{k-1} \alpha_i t_i^* U_i + \sum_{i=1}^{k-1} \beta_i C_i \\
    \mbox{such that\;\;} &     C_k + \sum_{i=1}^{k-1} \alpha_i t_i^*
    U_i + \sum_{i=1}^{j-1} \beta_i C_i > t_j^* ,&\forall j=1,\ldots,
    k-1,  \\
&t_j^* \geq 0,&\forall j=1,\ldots, k-1,     
  \end{align}    }
  \end{subequations}
\noindent where $t^*_1, t^*_2, \ldots, t^*_{k-1}$ and are variables,
  $\alpha_i$, $\beta_i$, $U_i$, $C_i$ for higher-priority task
  $\tau_i$ and $C_k$ are constants.
  For the rest of the proof, we replace $>$ with $\geq$ in
  Eq.~(\ref{eq:lp-init-response0}), as the supermum
  and the maximum are the same when presenting the inequality with
  $\geq$.  We can also further drop the condition $t_j^* \geq 0$, which just makes the resulting solution more pessimistic. This
  results in the following linear programming, which has a safe upper bound of
  Eq.~\eqref{eq:lp-init-response0},

  \begin{subequations}\label{eq:lp-init-response}
{\footnotesize  \begin{align}
    \mbox{maximize\;\;} & C_k + \sum_{i=1}^{k-1} \alpha_i t_i^* U_i + \sum_{i=1}^{k-1} \beta_i C_i \\
    \mbox{such that\;\;} &     C_k + \sum_{i=1}^{k-1} \alpha_i t_i^*
    U_i + \sum_{i=1}^{j-1} \beta_i C_i \geq t_j^* ,\forall j=1,\ldots,
    k-1.    \label{eq:lp-constraint-response}
  \end{align}    }
  \end{subequations}

  The linear programming in Eq.~(\ref{eq:lp-init-response}) (by
  replacing $>$ with $\geq$ and supremum with maximum) has $k-1$
  variables and $k-1$ constraints. Like the proof of Lemma~\ref{lemma:framework-general-schedulability}, we again adopt the extreme point
  theorem for linear programming \cite{luenberger2008linear} to solve
  the linear programming. Suppose that $t_1^\dagger, t_2^\dagger, \ldots,
  t_{k-1}^\dagger$ is a feasible solution for the linear programming in
  \eqref{eq:lp-init-response} and $\hat{t} = \max\{t_1^\dagger, t_2^\dagger, \ldots,
  t_{k-1}^\dagger\}$. By the satisfaction of
  Eq.~\eqref{eq:lp-constraint-response}, we know that
  \begin{align*}\small
    C_k + \hat{t} \sum_{i=1}^{k-1} \alpha_i U_i + \sum_{i=1}^{k-1} \beta_i C_i
   \geq C_k + \sum_{i=1}^{k-1} \alpha_i t_i^\dagger U_i +
   \sum_{i=1}^{k-1} \beta_i C_i \geq \hat{t}.
  \end{align*}
  As a result, we have $\hat{t} \leq \frac{C_k +
    \sum_{i=1}^{k-1}\beta_i C_i}{1-\sum_{i=1}^{k-1} \alpha_i
    U_i}$. That is, any feasible solution of
  Eq.~\eqref{eq:lp-init-response} has $t_j^* \leq \frac{C_k +
    \sum_{i=1}^{k-1}\beta_i C_i}{1-\sum_{i=1}^{k-1} \alpha_i U_i}$ for
  any $j=1,2,\ldots,k-1$. Under the assumption that $\sum_{i=1}^{k-1}
  \alpha_i U_i < 1$ and $0 \leq \sum_{i=1}^{k-1} \beta_i C_i$, 
 the  above linear programming has a bounded objective function.

  The only extreme point solution is to put $C_k + \sum_{i=1}^{k-1}
  \alpha_i t_i^* U_i + \sum_{i=1}^{j-1} \beta_i C_i = t_j^*$ for every
  $j=1,2,\ldots,k-1$. Since the objective function is bounded, by the
  extreme point theorem \cite{luenberger2008linear}, we know
  that this extreme point solution is the optimal solution for the
  linear programming in Eq.~\eqref{eq:lp-init-response}. For such a solution, we know that
\begin{equation}
\label{eq:tj-response-time}
\forall j=2,3,\ldots,k-1, \qquad t_j^* - t_{j-1}^*= \beta_{j-1} C_{j-1}.
\end{equation}
and 
\begin{equation}
\label{eq:t1-response-time}  
t_1^* = C_k + \sum_{i=1}^{k-1} \alpha_i U_i (t_1^* + \sum_{\ell=0}^{i-1} \beta_{\ell} C_{\ell}),
\end{equation} 
where $\beta_0$ and $C_0$ are defined as $0$ for notational brevity. 
Therefore, we know that
\begin{equation}
\label{eq:t1-response-time-final} 
t_1^* = \frac{C_k+\sum_{i=1}^{k-1} \alpha_i U_i (\sum_{\ell=0}^{i-1} \beta_{\ell} C_{\ell})}{1-\sum_{i=1}^{k-1} \alpha_i U_i}.
\end{equation}
Clearly, the above extreme point solution is always feasible when
$\sum_{i=1}^{k-1} \alpha_i U_i < 1$.
Therefore,  in this extreme point solution, the objective function of
the linear programming is

{\footnotesize  \begin{align}
    \label{eq:fina-response-lemma}
t_1^* + \sum_{i=1}^{k-1} \beta_i C_i =& \frac{C_k+\sum_{i=1}^{k-1} \alpha_i U_i (\sum_{\ell=0}^{i-1} \beta_{\ell} C_{\ell})}{1-\sum_{i=1}^{k-1} \alpha_i U_i} + \sum_{i=1}^{k-1} \beta_i C_i \\
= & \frac{C_k+ \sum_{i=1}^{k-1} \beta_i C_i - \sum_{i=1}^{k-1} \alpha_i U_i (\sum_{\ell=i}^{k-1} \beta_{\ell} C_{\ell})}{1-\sum_{i=1}^{k-1} \alpha_i U_i}
  \end{align}}
  which concludes the proof.  
\end{appProof}

\ifbool{techreport}{}{\end{document}}

\section*{Appendix B: Quadratic Bound for Uniprocessor
  Constrained-Deadline Tasks}

To verify the schedulability of a
(constrained-deadline) sporadic real-time task $\tau_k$ under fixed-priority scheduling
in uniprocessor systems, the time-demand analysis (TDA) developed in
\cite{DBLP:conf/rtss/LehoczkySD89} can be adopted.
 That is, if
\begin{equation}
  \label{eq:exact-test-constrained-deadline}
\exists t \mbox{ with } 0 < t \leq D_k {\;\; and \;\;} C_k +
\sum_{\tau_i \in hp(\tau_k)} \ceiling{\frac{t}{T_i}}C_i \leq t,
\end{equation}
then task $\tau_k$ is schedulable under the fixed-priority scheduling algorithm, where $hp(\tau_k)$ is the set of tasks with higher priority than $\tau_k$, $D_k$, $C_k$, and $T_i$ represent $\tau_k$'s relative deadline, worst-case execution time, and period, respectively. 
For a constrained-deadline task $\tau_k$, the schedulability test in
Eq.~\eqref{eq:exact-test-constrained-deadline}  is equivalent to the
verification of the existence of $0 < t \leq D_k$ such that
\begin{equation}
  \label{eq:exact-test-constrained-deadline-2}
 C_k + \sum_{\tau_i \in hp_2(\tau_k)} C_i + \sum_{\tau_i \in hp_1(\tau_k)} \ceiling{\frac{t}{T_i}}C_i \leq t.  
\end{equation}
We can then create a virtual sporadic task $\tau_k'$ with execution
time $C_k'=C_k + \sum_{\tau_i \in hp_2(\tau_k)} C_i$, relative
deadline $D_k'=D_k$, and period $T_k'=D_k$. It is
clear that the schedulability test to verify the schedulability of
task $\tau_k'$ under the interference of the higher-priority tasks
$hp_1(\tau_k)$ is the same as that of task $\tau_k$ under the
interference of the higher-priority tasks $hp(\tau_k)$.
For notational brevity, suppose that there are $k^*-1$ tasks in
$hp_1(\tau_k)$.

\begin{theorem}
\label{theorem:sporadic-general-v2}
Task $\tau_k$ in a sporadic task system with constrained deadlines is
schedulable by the fixed-priority scheduling algorithm if
$\sum_{i=1}^{k^*-1}\frac{C_i}{D_k} \leq 1$ and
\begin{equation}
\label{eq:schedulability-sporadic-any-a}
\frac{C_k'}{D_k} \leq 1-\sum_{i=1}^{k^*-1}U_i-\sum_{i=1}^{k^*-1}\frac{C_i}{D_k}+\frac{\sum_{i=1}^{k^*-1} U_i (\sum_{\ell=i}^{k^*-1}  C_\ell)}{D_k},
\end{equation}
in which the $k^*-1$ higher-priority tasks in $hp_1(\tau_k)$ are indexed
in a non-decreasing order of $\left(\ceiling{\frac{D_k}{T_i}}-1\right)T_i$.
\end{theorem}
\begin{proof}
Setting $t_i =\left (\ceiling{\frac{D_k}{T_i}}-1\right)T_i$, and indexing the
tasks in a non-decreasing order of $t_i$ leads to the satisfaction of
Definition~\ref{def:kpoints} with $\alpha_i=1$ and $\beta_i=1$.
\end{proof}

\begin{corollary}
\label{corollary-rm}
Task $\tau_k$ in a sporadic task system with implicit deadlines is
schedulable by the RM scheduling algorithm if
Lemmas~\ref{lemma:framework-general-schedulability},~\ref{lemma:framework-constrained-schedulability},
\ref{lemma:framework-totalU-exclusive},
or~\ref{lemma:framework-totalU-constrained} holds by setting
$\frac{C_k}{t_k}$ as $U_k$, $\alpha=1$, and $\beta=1$.
\end{corollary}

The above result in Corollary~\ref{corollary-rm} leads to the
utilization bound $2-\sqrt{2}$ (by using
Lemma~\ref{lemma:framework-totalU-constrained} with $\alpha=1$ and
$\beta=1$) for RM scheduling, which is worse than the existing
Liu and Layland bound $\ln{2}$ \cite{liu1973scheduling}. 

\section*{Appendix C: Multiprocessor DM/RM Scheduling}

This part demonstrates how to use the \frameworkkq{} framework for multiprocessor
global fixed-priority scheduling. We consider that the system has $M$
identical processors. For global fixed-priority scheduling, there is a
global queue and a global scheduler to dispatch jobs. We
demonstrate the applicability for constrained-deadline and
implicit-deadline sporadic systems under global fixed-priority
scheduling. Specifically, we will present how to apply the
framework to obtain speed-up and capacity augmentation factors for
global DM and global RM.

The success of the scheme depends on a corresponding
exponential-time test. Here we will use the property to be presented
in Lemma~\ref{lemma:multiprocessor-grm-sporadic-pushforward}, based on
the \emph{forced-forward} algorithm proposed by Baruah et
al. \cite{DBLP:journals/rts/BaruahBMS10} to characterize the workload
of higher-priority tasks. The method in
\cite{DBLP:journals/rts/BaruahBMS10} to analyze fixed-priority
scheduling is completely different from ours, as they rely on the
demand bound functions of the tasks. 

The following
lemma provides a sufficient test based on the observations by Baruah
et al. \cite{DBLP:journals/rts/BaruahBMS10}.
The construction of the following lemma is based 
on a minor change of the forced-forward algorithm.

\begin{lemma}
\label{lemma:multiprocessor-grm-sporadic-pushforward} 
Let $\Delta_k^{\max}$ be $\max_{j=1}^{k-1} \{U_j, \frac{C_k}{D_k}\}$.   
Task $\tau_k$ in a sporadic task system with constrained deadlines  is
schedulable by a global fixed-priority (workload conserving)
scheduling algorithm on
$M$ processors if
\begin{align*}
  \forall y \geq 0,  \left(\forall 0 \leq \omega_i \leq T_i, \forall \tau_i
  \in hp(\tau_k)\right), \exists t \mbox{ with } 0 < t \leq D_k+y\\  \mbox{ such that \;\;} 
  \Delta_k^{\max} \cdot(D_k+y) +
 \frac{ \sum_{i=1}^{k-1} \omega_i\cdot U_i+ \ceiling{\frac{t-\omega_i }{T_i}} C_i
}{M}\leq t.
\end{align*}
\end{lemma}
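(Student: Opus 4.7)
The plan is to argue by contraposition. Assume the sufficient condition holds for every admissible $y$ and $\omega_i$, yet some job of $\tau_k$ misses its deadline; without loss of generality take this job to be released at time $0$ with deadline at $D_k$. Following the forced-forwarding construction of Baruah et al.~\cite{DBLP:journals/rts/BaruahBMS10}, I extend the problem window $[0, D_k]$ backward to $[-y, D_k]$ for the smallest $y \geq 0$ such that throughout this interval the $M$ processors remain continuously occupied by work whose priority is at least as high as that of $\tau_k$ (this is where the work-conserving assumption enters, and the existence of such $y$ in a deadline-miss scenario is standard). This $y$ will play the role of the universally quantified $y$ in the lemma's hypothesis.

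Next I would bound the higher-priority workload in the window. For each $\tau_i \in hp(\tau_k)$ the forced-forwarding algorithm produces a phase parameter $\omega_i \in [0, T_i]$ (encoding how $\tau_i$'s releases sit relative to $-y$) under which the cumulative work contributed by $\tau_i$ to any prefix $[-y,-y+t]$ of the window is upper bounded by $\omega_i U_i + \lceil (t-\omega_i)/T_i\rceil C_i$: the leading summand captures the partial carry-in job pushed forward into the initial sub-interval of length $\omega_i$, while the ceiling term counts the subsequent full releases of $\tau_i$ falling inside the prefix.

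I would then close the argument by processor balance. Because all $M$ processors are continuously busy with qualifying work, the total work executed in any prefix $[-y,-y+t]$ equals exactly $Mt$, which splits as the execution $c(t)$ of $\tau_k$ itself plus the higher-priority workload $\sum_i W_i(t)$. Since $\tau_k$ misses its deadline we have $c(t) \leq C_k$ throughout, and $\Delta_k^{\max} \geq C_k/D_k$ yields $c(t) \leq C_k \leq \Delta_k^{\max}(D_k+y) \leq M\,\Delta_k^{\max}(D_k+y)$ (the last step uses $M\geq 1$). Substituting the forced-forwarding upper bound,
\[
Mt \;=\; c(t) + \sum_{i=1}^{k-1} W_i(t)\;\leq\;M\,\Delta_k^{\max}(D_k+y) + \sum_{i=1}^{k-1}\Bigl(\omega_i U_i + \Bigl\lceil\tfrac{t-\omega_i}{T_i}\Bigr\rceil C_i\Bigr),
\]
and dividing by $M$ produces precisely the negation of the lemma's inequality for every $t \in (0, D_k + y]$. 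This contradicts the hypothesis that some such $t$ satisfies the inequality, and the contrapositive is complete.

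The main obstacle is establishing the forced-forwarding workload bound in the second step: getting $\omega_i U_i + \lceil (t-\omega_i)/T_i\rceil C_i$ rather than the coarser $C_i + \lceil t/T_i\rceil C_i$ is precisely what later lets the framework extract $U_i$-based quadratic results. Doing so requires a careful adaptation of the construction in \cite{DBLP:journals/rts/BaruahBMS10} — verifying that the pushed carry-in admits the averaged $\omega_i U_i$ bound and that consistent choices of $\omega_i \in [0, T_i]$ exist simultaneously across all higher-priority tasks for the fixed $y$. Once this inequality is in place the remainder is a routine processor-balance calculation.
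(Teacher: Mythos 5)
There is a genuine gap, and it sits exactly where you flag it yourself --- but the problem begins one step earlier, in how you define the extended window. You take $y$ to be the smallest value such that \emph{all $M$ processors are continuously occupied by work of priority at least that of $\tau_k$} throughout $[-y,D_k]$. On a multiprocessor such a $y$ need not exist at all: even in a deadline-miss scenario, the job of $\tau_k$ may itself be executing during parts of $[0,D_k]$ while some processors are idle, and no backward extension repairs that. What is standard is only the weaker fact that all $M$ processors are busy with higher-priority work \emph{whenever $\tau_k$'s job is not executing}. The paper's proof does not use a fully-busy window at all. It chains backward through jobs $J_2, J_3, \ldots, J_{\ell^*}$, where $J_\ell$ is chosen because it has executed \emph{strictly less than} $(z_{\ell-1}-z_\ell)\cdot \hat{U}_\ell$ units over $[z_\ell, z_{\ell-1})$ ($\hat{U}_\ell$ being the utilization of the task generating $J_\ell$). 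The resulting window $(z_{\ell^*}, z_0]$ has not-all-busy time bounded by $\sum_\ell \sigma_\ell < (z_0-z_{\ell^*})\Delta_k^{\max}$, and this quantity --- not a bound on $c(t)\leq C_k$ as in your processor-balance step --- is what the term $\Delta_k^{\max}(D_k+y)$ absorbs, via a surrogate job $J_1'$ released at $z_{\ell^*}$ with execution requirement $(z_0-z_{\ell^*})\Delta_k^{\max}$.

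The second, decisive gap is the carry-in bound $\omega_i U_i$, which you explicitly defer as ``the main obstacle.'' This bound is the entire content of the lemma beyond the original forced-forwarding argument, and your window definition provides no mechanism to obtain it: if the window is merely ``all processors busy,'' a carry-in job of $\tau_i$ may have executed arbitrarily little before the window starts, so its residue can be as large as $C_i$, not $\omega_i U_i$. The paper gets the bound from the \emph{termination condition} of its revised algorithm: when no further job $J_\ell$ can be chosen, every remaining carry-in job $J_i^\flat$ (arriving at $r_i^\flat < z_{\ell^*}$, unfinished at $z_{\ell^*}$) must already have executed at least $(z_{\ell^*}-r_i^\flat)\cdot U_i$, for otherwise it would itself qualify as the next $J_\ell$. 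Hence its residue is at most $C_i - (z_{\ell^*}-r_i^\flat)U_i = (d_i^\flat - z_{\ell^*})U_i = \omega_i U_i$ with $\omega_i = r_i^\flat + T_i - z_{\ell^*} \in [0,T_i]$, which simultaneously fixes consistent $\omega_i$ for all higher-priority tasks. Without replacing your busy-window extension by this per-task-utilization stopping rule, the averaged carry-in bound cannot be established and the proof does not go through.
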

\begin{proof}
  This is proved by contrapositive. If task $\tau_k$ is not
  schedulable by the global fixed-priority scheduling, we will show that there exist $y \geq 0$ and $0 \leq \omega_i \leq T_i$ such that for all $0 < t \leq D_k+y$, the condition $\Delta_k^{\max} \cdot (D_k+y) +
 \frac{ \sum_{i=1}^{k-1} \omega_i\cdot U_i+ \ceiling{\frac{t-\omega_i }{T_i}} C_i
}{M} > t$ holds. The proof is mainly based on the forced-forward algorithm for the analysis of global DM by Baruah et al. in \cite{DBLP:journals/rts/BaruahBMS10}, by making some further annotations.

  If $\tau_k$ is not schedulable by global DM, let $z_0$ be the first time at which
  task $\tau_k$ misses its absolute deadline, i.e., $z_0$.  Let $z_1$
  be the arrival time of this job of task $\tau_k$.  For notational
  brevity, let this job be $J_1$, which arrives at time $z_1$ and has
  not yet been finished at time $z_0$. By definition, we know that
  $z_0-z_1$ is $D_k$. Due to the fixed-priority and
  workload-conserving scheduling policy and the constrained-deadline setting, removing (1) all the other jobs of task $\tau_k$ (except the one arriving at time $z_1$), (2) all the jobs arriving no earlier than $z_0$,  and (3) lower-priority jobs does not change the unschedulability of job $J_1$. Therefore, the rest of the proof only considers the jobs from $\tau_1, \tau_2, \ldots, \tau_k$.

  Now, we expand the window of interest by using a slightly
  different algorithm from that proposed in
  \cite{DBLP:journals/rts/BaruahBMS10}, also illustrated with the
  notation in Figure~\ref{fig:force-forward}, as in Algorithm~\ref{alg:feasibility-analysis2}. The difference is only in the setting
``strictly less than $(z_{\ell-1}-z_\ell)\cdot \hat{U}_\ell$ units'', whereas the setting in \cite{DBLP:journals/rts/BaruahBMS10} uses 
``strictly less than $(z_{\ell-1}-z_\ell)\cdot s$ units'' for a
certain $s$.
For notationaly brevity, $\hat{U}_\ell$ is the utilization of the task that generates
job $J_\ell$.

 \begin{algorithm}[h]
   \caption{(Revised) Forced-Forward Algorithm}
   \label{alg:feasibility-analysis2}
    \begin{algorithmic}[1]\footnotesize
        \FOR {$\ell\leftarrow2,3,...$}
          \STATE let $J_\ell$ denote a job that 
          \begin{compactitem}[-]
          \item arrives at some time-instant $z_\ell<z_{\ell-1}$;
          \item has an absolute deadline after $z_{\ell-1}$;
          \item has not completed execution by $z_{\ell-1}$; and
          \item has executed for strictly less than $(z_{\ell-1}-z_\ell)\cdot \hat{U}_\ell$
	 units over the interval $[z_\ell,z_{\ell-1})$, where $\hat{U}_\ell$ is the utilization of the task that generates job $J_\ell$.
          \end{compactitem}

          \IF {there is no such a job}
          \STATE $\ell\leftarrow (\ell-1)$; break;
          \ENDIF
      \ENDFOR
     \end{algorithmic}
     \end{algorithm}

     Suppose that the forced-forward algorithm terminates with $\ell$ equals to $\ell^*$. We now examine the schedule in the interval $(z_{\ell^*}, z_0]$. Since $J_1, J_2, \ldots, J_{\ell^*}$ belong to $\tau_1, \tau_2, \ldots, \tau_k$, we know that $\hat{U}_\ell \leq \Delta_k^{\max}$ for $\ell=1,2,\ldots,\ell^*$. Let $\sigma_\ell$ be the total length of the time interval over $(z_\ell, z_{\ell-1}]$ during which $J_\ell$ is executed. By the choice of $J_\ell$, it follows that
\ifbool{techreport}{\[}{$}
     \sigma_\ell < (z_{\ell-1}-z_{\ell}) \cdot \hat{U}_\ell \leq (z_{\ell-1}-z_{\ell}) \cdot \Delta_k^{\max}.
\ifbool{techreport}{\]}{$}
     Moreover, all the $M$ processors execute other higher-priority jobs (than $J_\ell$) at any other time points in the interval $(z_\ell, z_{\ell-1}]$ at which $J_\ell$ is not executed.
     Therefore, we know that the maximum amount of time from $z_{\ell^*}$ to $z_0$, in which not all the $M$ processors execute certain jobs, is at most
     \begin{align*}\small
       \sum_{\ell=1}^{\ell^*} \sigma_\ell &<      
     \sum_{\ell=1}^{\ell^*} (z_{\ell-1}-z_{\ell}) \cdot \Delta_k^{\max} = (z_0-z_{\ell^*}) \cdot \Delta_k^{\max}.       
   \end{align*}

  Up to here, the treatment is almost identical to that in ``Observation 1'' in \cite{DBLP:journals/rts/BaruahBMS10}. The following analysis becomes different as we do not intend to use the demand bound function.
   Now, we replace job $J_1$ with another job $J_1'$ with \emph{inflated execution time} in the above schedule, where $J_1'$ is released at time $z_{\ell^*}$ with absolute deadline $z_0$ and execution time $(z_0-z_{\ell^*}) \cdot \Delta_k^{\max}$. According to the above analysis, $J_1'$ cannot be finished before $z_{0}$ in the above schedule.
   For each task $\tau_i$ for $i=1,2,\ldots,k-1$, in the above schedule, there may be one \emph{carry-in} job, denoted as $J_i^\flat$, of $\tau_i$ (under the assumption of the schedulability of a higher-priority task $\tau_i$) that arrives at time $r_i^\flat$ with $r_i^\flat < z_{\ell^*}$ and $r_i^\flat+T_i > z_{\ell^*}$. Let $d_i^\flat$ be the next released time of task $\tau_i$ after $z_{\ell^*}$, i.e., $d_i^\flat = r_i^\flat + T_i$.

According to the termination condition in the construction of $z_{\ell^*}$, if $J_i^\flat$ exists, we know that at least $(z_{\ell^*}-r_i^\flat)\cdot U_i$ amount of execution time has been executed before $z_{\ell^*}$, and the remaining execution time of job $J_i^\flat$ to be executed after $z_{\ell^*}$ is at most
   $(d_i^\flat - z_{\ell^*})\cdot U_i$. If $J_i^\flat$ does not exist, then $d_i^\flat$ is set to $z_{\ell^*}$ for notational brevity. Therefore, the amount of workload $W_i'(t)$ of all the released jobs of task $\tau_i$ for $i=1,2,\ldots,k-1$ to be executed in time interval $(z_\ell^*, z_\ell^*+t)$ is at most
   \begin{equation}
     \label{eq:1}
   W_i'(t)  = (d_i^\flat - z_{\ell^*})\cdot U_i+ \ceiling{\frac{t-(d_i^\flat - z_{\ell^*}) }{T_i}} C_i.     
   \end{equation}
   
   The assumption of the unschedulability of job $J_1'$ (due to the
   unschedulability of job $J_1$) under the global fixed-priority
   scheduling implies that $J_1'$ cannot finish its computation at any time between $z_{\ell^*}$ and $z_0$. This leads to the following (necessary) condition 
  $\Delta_k^{\max} (z_0-z_{\ell^*}) +
  \frac{ \sum_{\tau_i \in hp(\tau_k)}W_i'(t) }{M} >  t$ for all $0 < t \leq z_{\ell^*}-z_0$ for the unschedulability of job $J_1'$.   

  Therefore, by the existence of $y=z_{1}-z_{\ell^*}$ (with $y \geq 0$) and $\omega_i = d_i^\flat - z_{\ell^*}$  (with $0 \leq \omega_i \leq T_i$) for $i=1,2,\ldots, k-1$ to enforce the above necessary condition, we reach the conclusion of the proof by contrapositive.  That is, task $\tau_k$ is schedulable if, for all $y \geq 0$ and any combination of $0 \leq \omega_i \leq T_i$ for $i=1,2,\ldots, k-1$, there exists $0 < t \leq D_k+y$ with $\Delta_k^{\max} (D_k+y) +
  \frac{ \sum_{\tau_i \in hp(\tau_k)} \omega_i\cdot U_i+ \ceiling{\frac{t-\omega_i}{T_i}} C_i }{M}  \leq t$.
\end{proof}

The schedulability condition in
Lemma~\ref{lemma:multiprocessor-grm-sporadic-pushforward} may look
at the first glance strange. We briefly explain the logical meaning (but informally) here.  Suppose we
would like to know whether a job of task $\tau_k$ arrived at time
$r_k$ can be finished before/at time $r_k+D_k$. To better quantify the
interference from the higher-priority tasks, we would like to account for the higher-priority jobs arrived prior to $r_k$.  The
variable $y$ defines the extension of the window of interest from $[r_k, r_k+D_k)$ to $[r_k-y,
r_k+D_k)$. The variable
$\omega_i$ defines the maximum residual execution time $\omega_i T_i$
of a carry-in job of task $\tau_i$ that arrives before $r_k-y$ and
should be executed in the window of interest, i.e., $[r_k-y,
r_k+D_k)$. If the residual workload is at most $\omega_i T_i$, the
next job can be released at time $r_k-y+\omega_i$, as shown in the proof.
Task $\tau_k$ is schedulable by the global fixed-priority scheduling,
if, for any combinations of $y \geq 0$ and $0 \leq \omega_i \leq T_i,
\forall \tau_i \in hp(\tau_k)$, we can always finish the inflated workload $\Delta_k^{\max} \cdot(D_k+y)$ of
task $\tau_k$ and the higher-priority workload in the window of
interest. For formal explanations, please refer to the formal proof of Lemma~\ref{lemma:multiprocessor-grm-sporadic-pushforward}.

Note that the schedulability condition in
Lemma~\ref{lemma:multiprocessor-grm-sporadic-pushforward} requires to
test all possible $y \geq 0$ and all possible settings of $0 \leq
\omega_i \leq T_i$ for the higher-priority tasks $\tau_i$ with
$i=1,2,\ldots,k-1$.  Therefore, it needs exponential time (for all the
possible combinations of $\omega_i$).\footnote{This may be the reason why the authors in \cite{DBLP:journals/rts/BaruahBMS10} did not exploit this test.} However, we are not going to directly use
the test in Lemma~\ref{lemma:multiprocessor-grm-sporadic-pushforward}
in the paper. We will only use this test to construct the
corresponding $k$-point schedulability test under
Definition~\ref{def:kpoints}.

We present the corresponding polynomial-time schedulability tests for
global fixed-priority scheduling. More specifically, we will also
analyze the capacity augmentation factors of these tests for global RM
and global DM in Corollaries~\ref{col:grm-tight}
and~\ref{col:gdm-tight}, respectively.

\begin{theorem}
  \label{thm:multiprocessor-grm-sporadic-tight}
Let $U_k^{\max}$ be $\max_{j=1}^{k} U_j$.  
Task $\tau_k$ in a sporadic task system with implicit deadlines  is schedulable by global RM on
$M$ processors if
\begin{equation}
\label{eq:schedulability-GRM-tight}
U_k^{\max} \leq 1 - \frac{2}{M}\sum_{i=1}^{k-1} U_i + \frac{0.5}{M^2}\left((\sum_{i=1}^{k-1} U_i)^2 +(\sum_{i=1}^{k-1} U_i^2)\right)
\end{equation}
or
\begin{equation}
\label{eq:schedulability-GRM-tight-ubound}
\frac{\sum_{j=1}^{k-1} U_j}{M}\leq \left(\frac{k-1}{k}\right)\left(
  2-\sqrt{2+ 2U_k^{\max}\frac{k}{k-1}}\right).
\end{equation}
\end{theorem}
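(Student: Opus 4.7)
\textbf{Proof proposal for Theorem~\ref{thm:multiprocessor-grm-sporadic-tight}.} The plan is to recast the exponential-time forced-forwarding test of Lemma~\ref{lemma:multiprocessor-grm-sporadic-pushforward} as a $k$-point last-release schedulability test in the sense of Definition~\ref{def:kpoints} with coefficients $\alpha_i = \beta_i = \frac{1}{M}$, and then read off the two inequalities by plugging into Lemma~\ref{lemma:framework-constrained-schedulability} and Lemma~\ref{lemma:framework-totalU-exclusive}.

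First I would specialize Lemma~\ref{lemma:multiprocessor-grm-sporadic-pushforward} to the implicit-deadline case, where $D_k = T_k$ and hence $\Delta_k^{\max} = \max\{U_1,\ldots,U_{k-1},C_k/T_k\} = U_k^{\max}$. For any adversarial choice of $y \geq 0$ and $\omega_i \in [0,T_i]$, I would take the $k-1$ candidate times $t_1 \leq t_2 \leq \cdots \leq t_{k-1}$ to be the last release times of the higher-priority jobs within the extended busy window $[0, T_k+y]$ (sorted non-decreasingly), and set $t_k = T_k + y$. Applying the standard ceiling relaxation $\lceil (t_j-\omega_i)/T_i \rceil C_i \leq (t_j-\omega_i) U_i + C_i$, the per-task contribution $\omega_i U_i + \lceil (t_j-\omega_i)/T_i \rceil C_i$ is upper bounded by $t_j U_i + C_i$, so the condition of Lemma~\ref{lemma:multiprocessor-grm-sporadic-pushforward} is implied by
\begin{equation*}
U_k^{\max}\, t_k + \frac{1}{M}\sum_{i=1}^{k-1} t_i U_i + \frac{1}{M}\sum_{i=1}^{j-1} C_i \leq t_j,
\end{equation*}
which is precisely the form of Definition~\ref{def:kpoints} with $C_k/t_k = U_k^{\max}$ and $\alpha_i=\beta_i=\frac{1}{M}$.

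Because under global RM every higher-priority task $\tau_i$ has $T_i \leq T_k \leq t_k$ and $C_i = U_i T_i \leq U_i t_k$, the precondition $\beta_i C_i \leq \beta U_i t_k$ with $\beta = \frac{1}{M}$ required by Lemmas~\ref{lemma:framework-constrained-schedulability} and~\ref{lemma:framework-totalU-exclusive} is met. Substituting $\alpha=\beta=\frac{1}{M}$ and $C_k/t_k = U_k^{\max}$ into Eq.~\eqref{eq:schedulability-constrained-2} of Lemma~\ref{lemma:framework-constrained-schedulability} yields Eq.~\eqref{eq:schedulability-GRM-tight}, while the same substitution into Eq.~\eqref{eq:schedulability-totalU-exclusive} of Lemma~\ref{lemma:framework-totalU-exclusive} yields Eq.~\eqref{eq:schedulability-GRM-tight-ubound}. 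Since the release ordering of the higher-priority tasks depends on the adversary's $\omega_i$, I would finally appeal to Lemma~\ref{lemma:general-sorting} to argue that ordering the higher-priority tasks by non-increasing $\beta_i C_i/(\alpha_i U_i) = T_i$ (i.e., non-increasing period) is worst-case safe.

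The main obstacle is the reduction step: carefully verifying that, for every adversarial $(y,\omega_1,\ldots,\omega_{k-1})$, the $k$ release-aligned times $\{t_j\}$ above indeed contain a valid witness $t$ for the ``$\exists t$'' quantifier of Lemma~\ref{lemma:multiprocessor-grm-sporadic-pushforward}, and that the ceiling-to-linear relaxation does not spoil the $k$-point structure. In particular, one must confirm that if none of these $k$ release-aligned times satisfies the inequality, then no $t \in (0, T_k+y]$ does either, so that restricting attention to the $k$ points of Definition~\ref{def:kpoints} is without loss. Once this combinatorial matching is nailed down, the two displayed utilization conditions follow by routine substitution into the already-proven framework lemmas.
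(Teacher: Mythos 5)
Your overall strategy matches the paper's: specialize Lemma~\ref{lemma:multiprocessor-grm-sporadic-pushforward} to implicit deadlines, massage the workload terms into the $k$-point form of Definition~\ref{def:kpoints} with $\alpha_i=\beta_i=\frac{1}{M}$ and $\frac{C_k}{t_k}=U_k^{\max}$, and then invoke Lemmas~\ref{lemma:framework-constrained-schedulability} and~\ref{lemma:framework-totalU-exclusive}. However, the step you yourself flag as the ``main obstacle'' is exactly where your argument breaks, and the relaxation you propose does not repair it. The bound $\ceiling{(t_j-\omega_i)/T_i}C_i \leq (t_j-\omega_i)U_i + C_i$ applied uniformly gives a per-task contribution of $t_j U_i + C_i$ for \emph{every} higher-priority task $\tau_i$; summing yields $U_k^{\max}t_k + \frac{1}{M}\bigl(t_j\sum_{i=1}^{k-1}U_i + \sum_{i=1}^{k-1}C_i\bigr) \leq t_j$, which is \emph{not} the form of Definition~\ref{def:kpoints}: it has $t_j\sum_i U_i$ in place of $\sum_i t_i U_i$ and the full sum $\sum_{i=1}^{k-1}C_i$ in place of the truncated sum $\sum_{i=1}^{j-1}C_i$. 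The truncation of the $C_i$ terms is precisely what generates the quadratic cross-term $\sum_i U_i(\sum_{\ell\geq i}U_\ell)$ in Eq.~\eqref{eq:schedulability-GRM-tight}; with your uniform relaxation you would only recover a Bertogna-style linear bound, not the claimed quadratic one.

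The missing idea is the explicit construction $t_i = \omega_i + \floor{\frac{T_k+y-\omega_i}{T_i}}T_i$ (the last release of $\tau_i$ in $[0,T_k+y]$ given carry-in offset $\omega_i$), followed by a two-case analysis after sorting: for $i<j$ one has $\omega_i U_i + \ceiling{\frac{t_j-\omega_i}{T_i}}C_i \leq \omega_i U_i + \bigl(\floor{\frac{t_i-\omega_i}{T_i}}+1\bigr)C_i = t_iU_i + C_i$, while for $i\geq j$ the ceiling does not round up past the last release, so $\omega_i U_i + \ceiling{\frac{t_j-\omega_i}{T_i}}C_i \leq t_iU_i$ with \emph{no} $+C_i$ term, because $t_i-\omega_i$ is an exact multiple of $T_i$. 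This case split is what produces the staircase $\sum_{i=1}^{k-1}t_iU_i + \sum_{i=1}^{j-1}C_i \leq t_j$ and also answers your worry about the ``$\exists t$'' witness: the $k$ points $t_1,\ldots,t_k$ are sufficient because the test is only weakened at those points, not because no other $t$ could work. One further small correction: your final appeal to Lemma~\ref{lemma:general-sorting} is unnecessary and slightly misplaced here. The ordering induced by the adversary's $(y,\omega_i)$ is not free to be chosen, but the conditions in Eqs.~\eqref{eq:schedulability-GRM-tight} and~\eqref{eq:schedulability-GRM-tight-ubound} obtained from Lemmas~\ref{lemma:framework-constrained-schedulability} and~\ref{lemma:framework-totalU-exclusive} are symmetric in the $U_i$ (once $\beta_iC_i$ is replaced by $\beta U_it_k$), hence independent of the indexing; that observation, rather than a worst-case-ordering lemma, is how the quantification over all $(y,\omega_i)$ is discharged.
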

\begin{proof}
  We will show that the schedulability condition in the theorem holds for all possible settings of $y$ and $\omega_i$s.
  Suppose that $y$ and $\omega_i$ for $i=1,2,\ldots,k-1$  are given,
  in which  $y \geq 0$ and $0 \leq \omega_i \leq T_i$. Let $t_k$ be
  $T_k+y$. Now, we set $t_i$ to $\omega_i +
  \floor{\frac{T_k+y-\omega_i}{T_i}}T_i$ for $i=1,2,\ldots,k-1$ and
  reindex the tasks such that $t_1 \leq t_2 \leq \ldots \leq t_k$. 

  Therefore, if $i < j$, we know that $\omega_i\cdot U_i+
  \ceiling{\frac{t_j-\omega_i }{T_i}} C_i \leq \omega_i\cdot U_i+
  (\floor{\frac{t_i-\omega_i }{T_i}}+1)C_i  = t_i U_i + C_i$. If $i
  \geq j$, then $\omega_i\cdot U_i+
  \ceiling{\frac{t_j-\omega_i }{T_i}} C_i \leq \omega_i\cdot U_i+
  (\ceiling{\frac{t_i-\omega_i }{T_i}})C_i  = t_i U_i$.
The sufficient schedulability condition in Lemma~\ref{lemma:multiprocessor-grm-sporadic-pushforward} under the given $y$ and $\omega_i$s is to verify the existence of $t_j \in
\setof{t_1, t_2, \ldots t_k}$ such that
\begin{align}
  &U_k^{\max}(T_k+y) + \frac{\sum_{i=1}^{k-1}  \omega_i\cdot U_i+ \ceiling{\frac{t_j-\omega_i }{T_i}} C_i}{M} \\ 
 \leq & U_k^{\max} t_k + \frac{\sum_{i=1}^{k-1} U_i t_i + \sum_{i=1}^{j-1} C_i}{M} 
  \leq t_j.  
\end{align}

  By the definition of global RM scheduling (i.e., $T_k \geq T_i$), we
  can conclude that $C_i = U_i T_i \leq U_i T_k \leq U_i (T_k+y) = U_i t_k$ for $i=1,2,\ldots,k-1$.
  Hence, we can safely reformulate the sufficient test by verifying whether there exists
 $t_j \in
\setof{t_1, t_2, \ldots t_k}$ such that
\begin{align}
  \label{eq:final-grm-k2u-precondition}
U_k^{\max} t_k + \sum_{i=1}^{k-1} \alpha_i U_i t_i + \sum_{i=1}^{j-1} \beta U_i t_k
  \leq t_j,
\end{align}
where $\alpha_i = \frac{1}{M}$ and $\beta \leq \frac{1}{M}$ for $i=1,2,\ldots,k-1$.
Therefore, we reach the conclusion of the schedulability conditions in
Eqs.~\eqref{eq:schedulability-GRM-tight} and
\eqref{eq:schedulability-GRM-tight-ubound} by
Lemma~\ref{lemma:framework-constrained-schedulability} and
Lemma~\ref{lemma:framework-totalU-exclusive} under given $y$ and
$\omega_i$s, respectively.

The schedulability test in Eq.~\eqref{eq:final-grm-k2u-precondition} is independent from the settings of $y$ and $\omega_i$s. However, the setting of $y$ and $\omega_i$s affects how the $k-1$ higher-priority tasks are indexed. Fortunately, it can be observed that the schedulability tests in Eqs.~\eqref{eq:schedulability-GRM-tight} and \eqref{eq:schedulability-GRM-tight-ubound} are completely independent upon the indexing of the higher-priority tasks. Therefore, no matter how $y$ and $\omega_i$s are set, the schedulability conditions in Eqs.~\eqref{eq:schedulability-GRM-tight} and \eqref{eq:schedulability-GRM-tight-ubound} are the corresponding results from the \frameworkkq{} framework.
As a result, we can reach the conclusion.
\end{proof}

\begin{corollary}
  \label{col:grm-tight}
  The capacity augmentation factor of global RM for a sporadic system
  with implicit deadlines is $\frac{3+\sqrt{7}}{2}\approx 2.823$.
\end{corollary}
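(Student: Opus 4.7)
The plan is to derive the capacity augmentation factor directly from the polynomial-time schedulability condition in Eq.~\eqref{eq:schedulability-GRM-tight} of Theorem~\ref{thm:multiprocessor-grm-sporadic-tight}, since that form avoids the asymptotic $k\to\infty$ issue that shows up in Eq.~\eqref{eq:schedulability-GRM-tight-ubound} and therefore gives a bound valid for every $k$. The hypotheses of the capacity augmentation argument are: (i) $\max_{\tau_i} U_i \le 1/b$, so in particular $U_k^{\max}\le 1/b$; (ii) $\sum_{\tau_i} U_i / M \le 1/b$, which implies $S := \sum_{i=1}^{k-1} U_i / M \le 1/b$. The whole argument then reduces to finding the smallest $b$ for which Eq.~\eqref{eq:schedulability-GRM-tight} is automatically satisfied under (i) and (ii).

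First, I would drop the non-negative term $\sum_{i=1}^{k-1} U_i^2 / (2M^2)$ on the right-hand side of Eq.~\eqref{eq:schedulability-GRM-tight}, which is safe since we are seeking a sufficient condition. Writing everything in terms of $S$, it is enough to ensure
\begin{equation*}
\frac{1}{b} \;\le\; 1 - 2S + \frac{S^2}{2} \qquad \text{for all } S \in [0, 1/b].
\end{equation*}
Next, I would observe that the right-hand side, viewed as a function of $S$, is decreasing on $[0,2]$ (its derivative is $S-2$), so its minimum over $S\in[0,1/b]$ is achieved at the upper endpoint $S=1/b$. Substituting and clearing denominators by multiplying through by $2b^2$ transforms the required inequality into the quadratic condition
\begin{equation*}
2b^2 - 6b + 1 \;\ge\; 0.
\end{equation*}

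Solving the quadratic gives roots $b = (3\pm\sqrt{7})/2$. Since we need the inequality to hold for large $b$ (larger $b$ means a weaker hypothesis, hence schedulability should become easier), the relevant bound is the larger root: any $b \ge (3+\sqrt{7})/2$ works, and $(3+\sqrt{7})/2$ is the tightest. This yields the claimed capacity augmentation factor. The main obstacle is really only bookkeeping: verifying that dropping the $\sum U_i^2$ term is indeed a safe relaxation, that the monotonicity argument in $S$ is valid throughout the relevant range (which holds trivially since $S \le 1/b < 2$ for $b>1/2$), and that Eq.~\eqref{eq:schedulability-GRM-tight} is the right form to use (rather than Eq.~\eqref{eq:schedulability-GRM-tight-ubound}, which only matches this value asymptotically in $k$ and in fact fails for small $k$, as a quick check at $k=2$ reveals).
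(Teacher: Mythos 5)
Your proposal is correct, but it reaches the conclusion by a genuinely different route than the paper. The paper argues from the utilization-bound form \eqref{eq:schedulability-GRM-tight-ubound}: it takes the $k\rightarrow\infty$ limit of the right-hand side, substitutes $U_k^{\max}=\sum_i U_i/M=\frac{1}{b}$, and solves $x=2-\sqrt{2+2x}$ to get $x=3-\sqrt{7}$, i.e.\ $b=\frac{3+\sqrt{7}}{2}$ --- algebraically the same quadratic $2b^2-6b+1=0$ you end up with. You instead work from the quadratic form \eqref{eq:schedulability-GRM-tight}, drop the non-negative term $\frac{1}{2M^2}\sum_i U_i^2$, and minimize the remaining bound over $S=\sum_{i<k}U_i/M\in[0,1/b]$ at the endpoint $S=1/b$. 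Your route buys a real improvement in rigor: the right-hand side of \eqref{eq:schedulability-GRM-tight-ubound} is \emph{increasing} in $k$ toward its limit, so substituting the limiting value is a valid sufficient condition only asymptotically; for small $k$ that condition can genuinely fail under the capacity constraints with $b=\frac{3+\sqrt{7}}{2}$ (e.g.\ $k=2$, $M=1$, $U_1=0.2$, $U_2=0.154$ gives $U_1/M=0.2$ against a right-hand side of about $0.163$), and one must then fall back on \eqref{eq:schedulability-GRM-tight} --- which is exactly what your argument does uniformly for every $k$, exploiting the disjunctive structure of Theorem~\ref{thm:multiprocessor-grm-sporadic-tight}. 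Your bookkeeping concerns all check out: dropping $\sum_i U_i^2$ only weakens a sufficient condition, $g(S)=1-2S+S^2/2$ is indeed decreasing on $[0,2]\supseteq[0,1/b]$, the implicit-deadline capacity-augmentation hypotheses are precisely $\max_i U_i\le 1/b$ and $\sum_i U_i/M\le 1/b$ (the density condition is not required here), and the larger root of $2b^2-6b+1=0$ is the relevant one since a capacity augmentation factor must exceed $1$.
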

\begin{proof}
  Suppose that $\sum_{\tau_i} \frac{C_i}{M T_i} \leq \frac{1}{b}$ and
  $U_k^{\max} \leq \max_{\tau_i} U_i \leq \frac{1}{b}$.  The right-hand side
  of Eq.~\eqref{eq:schedulability-GRM-tight-ubound} converges to
  $2-\sqrt{2+U_k^{\max}}$ when $k\rightarrow \infty$ Therefore,
  by Eq.~\eqref{eq:schedulability-GRM-tight-ubound}, we can guarantee
  the schedulability of task $\tau_k$ if $\frac{1}{b} \leq
  2-\sqrt{2+\frac{2}{b}}$. This is equivalent to solving $x =
  2-\sqrt{2+2x}$, which holds when $x = 3-\sqrt{7}$.
  Therefore, we reach the conclusion of the capacity augmentation
  factor $\frac{3+\sqrt{7}}{2}\approx 2.823$.
\end{proof}

\begin{theorem}
  \label{thm:multiprocessor-gdm-sporadic-tight}
Let $\Delta_k^{\max}$ be $\max_{j=1}^{k-1} \{U_j, \frac{C_k}{D_k}\}$.    
Task $\tau_k$ in a sporadic task system with constrained deadlines  is
schedulable by a global fixed-priority scheduling on
$M$ processors if   $\sum_{i=1}^{k} U_i \leq M$,
$\sum_{i=1}^{k-1}\frac{C_i}{D_k} \leq M$, and 
\begin{equation}
\label{eq:schedulability-GDM-tight}
\Delta_k^{\max} \leq 1 - \frac{1}{M}\sum_{i=1}^{k-1} \left( U_i + \frac{C_i}{D_k}\right )+ \frac{1}{M^2}\left(\sum_{i=1}^{k-1} U_i (\sum_{\ell=i}^{k-1} \frac{C_\ell}{D_k})\right),
\end{equation}
  where the $k-1$ higher-priority tasks are ordered
  in a non-increasing order of their periods.
 \end{theorem}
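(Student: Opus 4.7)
The plan is to follow the template of Theorem~\ref{thm:multiprocessor-grm-sporadic-tight}'s proof, but to invoke the general Lemma~\ref{lemma:framework-general-schedulability} rather than the specialized Lemma~\ref{lemma:framework-constrained-schedulability}, since under global DM with constrained deadlines the bound $C_i \leq \beta U_i t_k$ need not hold: a higher-priority task may have $T_i > T_k$ as long as its (shorter) relative deadline gives it higher priority, so $C_i$ cannot be controlled by $U_i t_k$.

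I would start from Lemma~\ref{lemma:multiprocessor-grm-sporadic-pushforward} and fix arbitrary $y \geq 0$ and $\omega_i \in [0, T_i]$. Setting $t_k = D_k + y$ and $t_i = \omega_i + \floor{(D_k+y-\omega_i)/T_i}\,T_i$ for every $\tau_i \in hp(\tau_k)$, then reindexing so that $t_1 \leq \cdots \leq t_{k-1} \leq t_k$, the same ceiling-versus-floor manipulations as in the proof of Theorem~\ref{thm:multiprocessor-grm-sporadic-tight} (for $i<j$, $\omega_i U_i + \ceiling{(t_j-\omega_i)/T_i}C_i \leq t_i U_i + C_i$; for $i \geq j$ the same quantity is bounded by $t_i U_i$) reduce the existence condition of Lemma~\ref{lemma:multiprocessor-grm-sporadic-pushforward} to the existence of some $j$ with
\[
\Delta_k^{\max}\,t_k + \frac{1}{M}\sum_{i=1}^{k-1} t_i U_i + \frac{1}{M}\sum_{i=1}^{j-1} C_i \leq t_j.
\]
This matches Definition~\ref{def:kpoints} with the role of $C_k$ played by the (for each fixed $t_k$, positive) constant $\Delta_k^{\max} t_k$ and $\alpha_i = \beta_i = 1/M$. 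The preconditions of Lemma~\ref{lemma:framework-general-schedulability} become $\sum U_i/M \leq 1$ and $\sum C_i/M \leq t_k$, both implied by the theorem's hypotheses $\sum_{i=1}^{k} U_i \leq M$ and $\sum_{i=1}^{k-1} C_i/D_k \leq M$ together with $t_k \geq D_k$. Applying that lemma and dividing by $t_k$ then yields
\[
\Delta_k^{\max} \leq 1 - \frac{1}{M}\sum_{i=1}^{k-1}U_i - \frac{1}{M t_k}\left(\sum_{i=1}^{k-1} C_i - \frac{1}{M}\sum_{i=1}^{k-1} U_i \sum_{\ell=i}^{k-1} C_\ell\right).
\]

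The main obstacle I anticipate is to show that this sufficient condition is tightest at $t_k = D_k$ (i.e., at $y=0$), so that checking the theorem's single inequality covers every $y \geq 0$ the adversary in Lemma~\ref{lemma:multiprocessor-grm-sporadic-pushforward} may pick. I would swap the order of the double sum to rewrite the bracketed quantity as $\sum_{\ell=1}^{k-1} C_\ell \bigl(1 - (1/M)\sum_{i=1}^{\ell} U_i\bigr)$, which is non-negative under $\sum U_i \leq M$. Hence the coefficient of $-1/(M t_k)$ is non-positive, the right-hand side is non-decreasing in $t_k$, and $t_k = D_k$ is the worst case.

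Finally, the ordering on the $t_i$'s induced by the adversary's $(y, \{\omega_i\})$ is unknown, and the right-hand side above depends on the ordering through $\sum_i U_i \sum_{\ell \geq i} C_\ell$. I would close the argument by appealing to Lemma~\ref{lemma:general-sorting}: with $\alpha_i = \beta_i = 1/M$ the sorting criterion $\beta_i C_i/(\alpha_i U_i)$ reduces to $T_i$, so the worst-case ordering is non-increasing in $T_i$. Thus verifying Eq.~\eqref{eq:schedulability-GDM-tight} under the non-increasing-period ordering stated in the theorem dominates every ordering that can arise from any adversarial $(y,\{\omega_i\})$, which completes the proof.
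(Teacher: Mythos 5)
Your proposal is correct and follows essentially the same route as the paper's own (much terser) proof: reduce via Lemma~\ref{lemma:multiprocessor-grm-sporadic-pushforward} and the ceiling/floor manipulations from Theorem~\ref{thm:multiprocessor-grm-sporadic-tight} to Definition~\ref{def:kpoints} with $\alpha_i=\beta_i=1/M$, apply the general Lemma~\ref{lemma:framework-general-schedulability} under the worst-case ordering of Lemma~\ref{lemma:general-sorting} (non-increasing $T_i$), and dispose of the adversarial $y$ by rewriting the double sum as $\sum_{\ell}C_\ell\bigl(1-\tfrac{1}{M}\sum_{i\le\ell}U_i\bigr)\ge 0$ so that $y=0$ is the worst case — exactly the argument the paper gives. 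Your added observations (why Lemma~\ref{lemma:framework-constrained-schedulability} is unavailable under DM, and why the two hypotheses $\sum_{i=1}^{k}U_i\le M$ and $\sum_{i=1}^{k-1}C_i/D_k\le M$ are precisely the preconditions of Lemma~\ref{lemma:framework-general-schedulability}) are correct details the paper leaves implicit.
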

\begin{proof}
  This is due to a similar proof to that of
  Theorem~\ref{thm:multiprocessor-grm-sporadic-tight} and
  Lemma~\ref{lemma:multiprocessor-grm-sporadic-pushforward}, by
  applying Lemma~\ref{lemma:framework-general-schedulability} with
  $t_k=D_k+y$, $\alpha_i=\frac{1}{M}$, and $\beta_i = \frac{1}{M}$, under the worst-case last release time
  ordering, $\frac{\beta_i C_i}{\alpha_i U_i}=T_i$ non-increasingly, in Lemma~\ref{lemma:general-sorting}. Therefore, for a
  given $y \geq 0$, if $\Delta_k^{\max} \leq 1 -
  \frac{1}{M}\sum_{i=1}^{k-1} \left( U_i + \frac{C_i}{D_k+y}\right )+
  \frac{1}{M^2}\left(\sum_{i=1}^{k-1} U_i (\sum_{\ell=i}^{k-1}
    \frac{C_\ell}{D_k+y})\right)$, task $\tau_k$ is schedulable by
  the global fixed-scheduling.  By the assumption $\sum_{i=1}^{k-1} U_i \leq M$, we know that 
$ C_i \frac{\sum_{\ell=1}^{i} U_\ell}{M}  \leq C_i$.  Therefore\footnote{This comes from the simple algebra property that for any two vectors $\vec{a}$ and $\vec{b}$ of size $(k-1)$ there is $\sum_{i=1}^{k-1} a_i \sum_{\ell=i}^{k-1} b_\ell = \sum_{i=1}^{k-1} b_i \sum_{\ell=1}^{i} a_\ell$.},
  \begin{align*}
&-\frac{1}{M}\sum_{i=1}^{k-1} \frac{C_i}{D_k+y} +
\frac{1}{M^2}\left(\sum_{i=1}^{k-1} U_i (\sum_{\ell=i}^{k-1}
  \frac{C_\ell}{D_k+y})\right)\\
= &    \frac{1}{D_k+y}\frac{1}{M} \left(-\sum_{i=1}^{k-1} C_i
+\sum_{i=1}^{k-1} C_i \left(\frac{\sum_{\ell=1}^{i} U_\ell}{M} \right)\right)
  \end{align*}
  is minimized when $y$ is $0$.  As a result, the above schedulability
  condition is the worst when $y$ is $0$.
\end{proof}

\begin{corollary}
  \label{col:gdm-tight}
  The capacity augmentation factor and the speed-up factor of global
  DM by using Theorem~\ref{thm:multiprocessor-gdm-sporadic-tight} for
  a sporadic system with constrained deadlines is 3.
\end{corollary}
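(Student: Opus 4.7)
The plan is to substitute the capacity-augmentation hypotheses with $b = 3$ directly into the schedulability condition of Theorem~\ref{thm:multiprocessor-gdm-sporadic-tight} and verify that Eq.~\eqref{eq:schedulability-GDM-tight} holds for every task $\tau_k$. From Section~\ref{sec:model}, a capacity-augmentation factor $b = 3$ gives $\frac{1}{M}\sum_{\tau_i}U_i \leq \frac{1}{3}$, the density bound $\frac{C_k + \sum_{\tau_i \in hp(\tau_k)} C_i}{M D_k} \leq \frac{1}{3}$ for every $\tau_k$, and $\max_{\tau_i}\frac{C_i}{\min\{T_i,D_i\}} \leq \frac{1}{3}$. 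Since $D_i \leq T_i$ in a constrained-deadline system, the last bound immediately yields both $U_i \leq \frac{1}{3}$ for every $i$ and $\frac{C_k}{D_k} \leq \frac{1}{3}$, and hence $\Delta_k^{\max} \leq \frac{1}{3}$. The side hypotheses $\sum_{i=1}^{k} U_i \leq M$ and $\sum_{i=1}^{k-1} C_i/D_k \leq M$ of Theorem~\ref{thm:multiprocessor-gdm-sporadic-tight} follow with substantial slack from the same bounds.

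The core step is then a one-line lower bound of the right-hand side of Eq.~\eqref{eq:schedulability-GDM-tight}. Writing $A = \frac{1}{M}\sum_{i=1}^{k-1} U_i$ and $B = \frac{1}{M}\sum_{i=1}^{k-1}\frac{C_i}{D_k}$, the hypotheses give $A \leq \frac{1}{3}$ and $B \leq \frac{1}{3}$, while the quadratic correction term $\frac{1}{M^2}\sum_{i=1}^{k-1} U_i \sum_{\ell=i}^{k-1} \frac{C_\ell}{D_k}$ is non-negative. Hence the right-hand side is at least $1 - A - B \geq \frac{1}{3} \geq \Delta_k^{\max}$, so Eq.~\eqref{eq:schedulability-GDM-tight} is satisfied and $\tau_k$ is schedulable under global DM. This establishes the capacity-augmentation factor of $3$.

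For the speed-up statement I would invoke the remark from Section~\ref{sec:model} that a capacity-augmentation factor $b$ automatically yields a speed-up factor $b$: running a feasible task set on processors of speed $b = 3$ divides every $C_i$ by $3$, and the standard necessary conditions for feasibility ($\sum U_i \leq M$, $U_i \leq 1$, and $C_i \leq D_i$) together with the fact that under DM every $\tau_i \in hp(\tau_k)$ has $D_i \leq D_k$---so the synchronous release at time $0$ forces $C_k + \sum_{\tau_i \in hp(\tau_k)} C_i \leq M D_k$---deliver exactly the three capacity-augmentation bounds at level $1/3$. The preceding calculation then applies verbatim. The hard part, such as it is, is really carried by Theorem~\ref{thm:multiprocessor-gdm-sporadic-tight} itself; once that test is in hand, the non-negativity of the quadratic correction term makes this corollary a mechanical check.
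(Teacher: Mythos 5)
Your proposal is correct and follows essentially the same route as the paper: the paper argues by contrapositive (if the test in Eq.~\eqref{eq:schedulability-GDM-tight} fails, then after dropping the non-negative quadratic term, the pigeonhole principle forces one of $\Delta_k^{\max}$, $\frac{1}{M}\sum_i U_i$, $\frac{1}{M}\sum_i C_i/D_k$ above $\frac{1}{3}$), which is exactly the direct argument you give. Your added detail on why the three capacity-augmentation conditions at level $\frac{1}{3}$ are necessary for feasibility (hence the speed-up claim) is a correct elaboration of the paper's blanket remark that a capacity augmentation factor $b$ implies a speed-up factor $b$.
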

\begin{proof}
  If $\sum_{i=1}^{k} U_i \leq M$ or $\sum_{i=1}^{k-1}\frac{C_i}{D_k}
  \leq M$ is violated, the capacity augmentation factor is already
  $1$. Therefore, we focus on the case that task $\tau_k$ does not
  pass the schedulability condition in
  Eq.~\eqref{eq:schedulability-GDM-tight}. That is,
  \begin{align*}
\Delta_k^{\max} >& 1 - \frac{1}{M}\sum_{i=1}^{k-1} \left( U_i +
  \frac{C_i}{D_k}\right )+ \frac{1}{M^2}\left(\sum_{i=1}^{k-1} U_i
  (\sum_{\ell=i}^{k-1} \frac{C_\ell}{D_k})\right) \\
  \geq& 1 - \frac{1}{M}\sum_{i=1}^{k-1} \left( U_i +
  \frac{C_i}{D_k}\right).
  \end{align*}
  This means that the unschedulability of task $\tau_k$ under global
  DM implies that either $\Delta_k^{\max} > \frac{1}{3}$,
  $\frac{1}{M}\sum_{i=1}^{k-1}  U_i > \frac{1}{3}$,  or
  $\frac{1}{M}\sum_{i=1}^{k-1}  \frac{C_i}{D_k} > \frac{1}{3}$, by the
  pigeonhole principle. Therefore, we conclude the factor $3$.
\end{proof}

\noindent{\bf Remarks:} The utilization bound in
Eq.~\eqref{eq:schedulability-GRM-tight-ubound} is analytically better
than the best known utilization-based schedulability test
$\sum_{j=1}^{k} U_j\leq \frac{M}{2}(1-U_k^{\max})+U_k^{\max}$ for
global RM by Bertogna et al. \cite{DBLP:conf/opodis/BertognaCL05},
since $\frac{(1-x)}{2} \leq 2-\sqrt{2+2x}$ when $0 \leq x \leq 1$. 

The capacity augmentation factor $2.823$ in Corollary
\ref{col:grm-tight} is weaker than the result $2.668$ by Lundberg
\cite{DBLP:conf/rtas/Lundberg02}.  However, we would like to point out
the incompleteness in the proof in
\cite{DBLP:conf/rtas/Lundberg02}. In the proof of the extreme task
set, the argument in Page 150 in \cite{DBLP:conf/rtas/Lundberg02}
concludes that task $\tau_n$ is more difficult to be schedulable due
to the increased interference of task $\tau_{n-1}$ to task $\tau_n$
after the transformation. The argument was not correctly proved and
can be  optimistic since
the increased interference has to be analyzed in all time points in
the analysis window, whereas the analysis in
\cite{DBLP:conf/rtas/Lundberg02} only considers the interference in a
specific interval length. Without analyzing the resulting interference
in all time points in the analysis window, task $\tau_n$ after
transformation may still have chance to finish earlier due to the
potential reduction of the interference at earlier time points.

The speed-up factor $3$ provided in Corollary~\ref{col:gdm-tight} is
asymptotically the same as the result by Baruah et
al. \cite{DBLP:journals/rts/BaruahBMS10}. The speed-up factor $3$ in
\cite{DBLP:journals/rts/BaruahBMS10} requires a pseudo polynomial-time
test, whereas we show that a simple test in
Eq.~\eqref{eq:schedulability-GDM-tight} can already yield the speed-up
factor $3$ in $O(k\log k)$ time complexity.

\ifbool{techreport}{We limit our attention here for the global RM/DM scheduling. Andersson et
al. \cite{DBLP:conf/rtss/AnderssonBJ01} propose the RM-US[$\varsigma$]
algorithm, which gives the highest priority to tasks $\tau_i$s with
$U_i > \varsigma$, and otherwise assigns priorities by using RM. Our
analysis here can also be applied for the RM-US[$\varsigma$] algorithm
with some modifications in the proofs by setting
$\varsigma=\frac{2}{3+\sqrt{7}}\approx 0.3542$.}{}

\begin{figure}[t]
  \centering
  \scalebox{0.45}{
    \begin{tikzpicture}[x=1cm,font=\sffamily\Large,ultra thick]]
   
  \draw[->](0,0) -- (19, 0)  node[right]{$x$};
  
   \draw[-](2,-.1) node[below]{$z_\ell$} -- (2, 1)  -- (5,1) node[pos=.5,above]{$J_\ell$} -- (5,0) ;
     
    \draw[-](4,-.1) node[below]{$z_{\ell-1}$} -- (4, 1.2)  -- (7,1.2) node[pos=.5,above]{$J_{\ell-1}$};

\draw[-](10,1) node[above]{$J_{3}$} -- (13,1) -- (13,0) ;

 \draw[-](12,-.1) node[below]{$z_2$} -- (12, 1.2)  -- (16,1.2) node[pos=.5,above]{$J_2$} -- (16,0) ;

 \draw[-](15,-.1) node[below]{$z_1$} -- (15, 1)  -- (18,1) node[pos=.5,above]{$J_1$} -- (18,0)node[below]{$z_0$} ;
 
  \node [ ] at (8,0.6) {$\cdot\cdot\cdot\cdot\cdot$};
    \node [ ] at (1,0.6) {$\cdot\cdot\cdot\cdot\cdot$};
      \end{tikzpicture}  }
  \caption{Notation of the forced-forward algorithm for the analysis of global fixed-priority scheduling.}
  \label{fig:force-forward}
\end{figure}
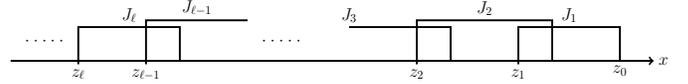

\section*{Appendix D: Response-Time for Periodic Tasks with Jitters}
\label{sec:response-jitter} 

A periodic task (with jitters) $\tau_i$ is defined by its period
$T_i$, its execution time $C_i$, its relative deadline $D_i$ and, its
jitter $L_i$. The jitter problem arises when we consider some
flexibility to delay the job arrival for a certain bounded length
$L_i$. Such a problem has also been studied in the literature, such as
\cite{DBLP:conf/rtss/BaruahCM97,238595}.  

We focus on uniprocessor fixed-priority scheduling here. The busy window concept has
also been used for the schedulability analysis of fixed-priority scheduling
\cite{238595}. 
For the $h$-th job of task $\tau_k$ in the busy window, the finishing
time $R_{k,h}$ is the minimum $t$ such that
\[ 
h C_k + \sum_{i=1}^{k-1} \ceiling{\frac{t+L_i}{T_i}}C_i \leq t.
\] 
The $h$-th job of task $\tau_k$ arrives at time $\max\{(h-1)T_k-L_k,
0\}$, and, hence, its response time is $R_{k,h}-\max\{(h-1)T_k-L_k,
0\}$. The busy window of task $\tau_k$ finishes on the $h$-th job if
$R_{k,h} \leq \max\{h T_k-L_k, 0\}$.

\begin{lemma}
\label{lemma:rkh-last-release}
$R_{k,h}$ is upper bounded by finding the maximum
  \begin{equation}
    \label{eq:PJ-arbitrary-objective-k}
   t_k = hC_k + \sum_{i=1}^{k-1} L_i U_i +\sum_{i=1}^{k-1}  t_i U_i + \sum_{i=1}^{k-1}  C_i,
  \end{equation}
  among all possible least release time orderings of the $k-1$
  higher-priority tasks with $0 \leq t_1
  \leq t_2 \leq \cdots \leq t_{k-1} \leq t_{k}$ and
  {\small \begin{align}
    \label{eq:PJ-arbitrary-objective-k2}
    hC_k + \sum_{i=1}^{k-1} L_i U_i  + \sum_{i=1}^{k-1} t_i U_i + \sum_{i=1}^{j-1} C_i > t_j, & \forall j=1,2,\ldots,k-1.
  \end{align}  }
\end{lemma}
\begin{proof}
  Suppose that $R_{k,h}$ is known. Then, the last release of task
  $\tau_i$ in the busy window before $R_{k,h}$ is at time
  $t_i=(\ceiling{\frac{R_{k,h}+L_i}{T_i}}-1)T_i-L_i$.  We can index the
  $k-1$ higher-priority tasks by $t_i$ non-decreasingly. 
  By the definition of $t_i$, we know that
  $(\ceiling{\frac{R_{k,h}+L_i}{T_i}}-1) C_i =
  \frac{t_i+L_i}{T_i} T_iU_i = t_i U_i +
  L_i U_i$. That is, the accumulative workload of task $\tau_i$ from
  $0$ to $t_i$ is exactly $L_iU_i + t_i U_i$. 
 Due to the indexing
  rule, and the known $R_{k,h}$ we can now conclude the conditions in
  Eqs.~\eqref{eq:PJ-arbitrary-objective-k} and~\eqref{eq:PJ-arbitrary-objective-k2}.

  However, it should be noted that the last release time ordering
  is actually unknown since $R_{k,h}$ is unknown. Therefore, we
  have to consider all possible last release time orderings.
\end{proof}

\begin{lemma}
  \label{lemma:finishing-time-pj-h}
  Suppose that $\sum_{i=1}^{k-1}  U_i \leq 1$. Then, for any
  $h \geq 1$ and $C_k > 0$, we have
  \begin{equation}
    \label{eq:PJ-R-k-h}
     R_{k,h} \leq \frac{hC_k+\sum_{i=1}^{k-1}  (C_i + L_iU_i) - \sum_{i=1}^{k-1} U_i (\sum_{\ell=i}^{k-1}  C_{\ell})}{1-\sum_{i=1}^{k-1}  U_i},
  \end{equation}
  where the $k-1$ higher-priority tasks are ordered
  in a non-increasing order of their periods.
\end{lemma}
\begin{proof}
 This comes from the above discussions with $\alpha_i=1$, $\beta_i=1$
 in Lemma~\ref{lemma:rkh-last-release} 
  by applying Lemmas~\ref{lemma:framework-general-response} and
  \ref{lemma:general-response-sorting} when $\sum_{i=1}^{k-1} 
  U_i < 1$. The case when $\sum_{i=1}^{k-1} U_i = 1$ has a
  safe upper bound $R_{k,h}=\infty$ in Eq. \eqref{eq:PJ-R-k-h}.
\end{proof}

\begin{theorem}
  \label{theorem:response-time-PJ}
  Suppose that $\sum_{i=1}^{k} U_i \leq 1$. The worst-case
  response time of task $\tau_k$ is at most
{\small
  \begin{align}
    \label{eq:jitter-R-k}
    R_{k} \leq & \frac{\sum_{i=1}^{k-1}    (C_i + L_iU_i)-
      \sum_{i=1}^{k-1} U_i (\sum_{\ell=i}^{k-1}
      C_{\ell})}{1-\sum_{i=1}^{k-1}  U_i},\\
&               + \max\left\{\frac{h^*C_k}{1-\sum_{i=1}^{k-1}  U_i},
  \frac{(h^*+1)C_k}{1-\sum_{i=1}^{k-1}  U_i} - h^*T_k+L_k\right\} \nonumber
  \end{align}
}
  where the $k-1$ higher-priority tasks are ordered
  in a non-increasing order of their periods and $h^*=\floor{\frac{L_k}{T_k}}+1$.
\end{theorem}
\begin{proof}
By the definition of $h^*$,
  we know that $(h^*-1)T_k-L_k \leq 0$, whereas $h^*T_k-L_k > 0$.
  When $h \leq h^*$, we know that $\max\left\{(h^*-1)T_k-L_k, 0\right\}$ is
$0$. Therefore, $R_{k,h}-0$ is maximized when $h$ is set to $h^*$ for
any $h \leq h^*$.

The first-order derivative of $R_{k,h}-((h-1)T_k - L_k)$ with respect to $h$
when $h \geq h^*+1$ is $\frac{C_k}{1-\sum_{i=1}^{k-1} U_i} - T_k =
\frac{C_k - (1-\sum_{i=1}^{k-1} U_i) T_k }{1-\sum_{i=1}^{k-1}
  U_i}$. Similar to the proof of
Theorem~\ref{theorem:response-time-sporadic}, we know that setting
$R_{k,h}-(h-1)T_k$ is maximized when $h$ is set to $h^*+1$ for any $h
\geq h^*+1$.

  As a result, we only have to evaluate the two cases by setting $h$
  as $h^*$ or $h^*+1$. One of them is the worst-case response
  time. The formulation in Eq.~\eqref{eq:jitter-R-k} simply compares
  these two response times.
\end{proof}

\section*{Appendix E: Schedulability for Generalized Multi-Frame}
\label{sec:schedulability-gmf}

A generalized multi-frame real-time task $\tau_i$ with $m_i$ frames is
defined as a task with an array $(C_{i,0}, D_{i,0}, T_{i,0}, C_{i,1},
D_{i,1}, T_{i,1},
\ldots, C_{i,m_i-1}, D_{i,m_i-1}, T_{i,m_i-1})$ of different execution times,
different relative deadlines, and
the minimal inter-arrival time of the next frame
\cite{DBLP:conf/rtcsa/TakadaS97,DBLP:journals/rts/BaruahCGM99}. The execution time
of the $j$-th job of task $\tau_i$ is defined as $C_{i,(j\mod
  m_i)}$. If a job of the $j$-th frame of task $\tau_i$ is released at
time $t$, the earliest time that task $\tau_i$ can release the next
$(j+1) \mod m_i$ frame is at time $t+T_j$. Here, we consider only
constrained-deadline cases, in which $D_{i,j} \leq T_{i,j}$.

Takada and Sakamura \cite{DBLP:conf/rtcsa/TakadaS97} provide an exact
test with exponential-time complexity for such a problem under
task-level fixed-priority scheduling, in which each task is assigned
with one static priority level. For this section, we will focus on
such a setting. Specifically, we are interested in analyzing the
schedulability of the $h$-th frame of task $\tau_k$ under the given
task priority ordering. It was shown that the critical instant theorem
of periodic task systems by Liu and Layland \cite{liu1973scheduling}
does not work anymore for generalized multi-frame
systems. Fortunately, as shown in Theorem 2 in
\cite{DBLP:conf/rtcsa/TakadaS97}, the critical instant of the $h$-th
frame of task $\tau_k$ is to release a certain frame of a higher
priority task $\tau_i$ at the same time, and the subsequent frames of
task $\tau_i$ as early as possible. 
In fact, this problem has been
recently proved to be co-NP hard in the strong sense in \cite{DBLP:conf/ecrts/StiggeY12}.

For completeness, the test with exponential time complexity by Takada
and Sakamura \cite{DBLP:conf/rtcsa/TakadaS97} will be presented in
Lemma~\ref{lemma:generalized-mf} by using our notation. 
We define ${\sc rbf}(i, q, t)$ as the maximum workload of task
$\tau_i$ released within an interval length $t$ starting from the $q$-th 
frame. That is, 
\begin{equation}
  \label{eq:rbf-gmf}
  {\sc rbf}(i, q, t) = \sum_{j=q}^{\theta(i,q,t)} C_{i,(j\mod m_i)},
\end{equation}
where $\theta(i,q,t)$ is the smallest $\ell$ such that
$\sum_{j=q}^{\ell} T_{i,(j\mod m_i)} \geq t$. That is, $\theta(i,q,t)$ is
the last frame released by task $\tau_i$ before $t$ under the critical
instant starting with the $q$-th frame of task $\tau_i$.

\begin{lemma}
\label{lemma:generalized-mf}
The $h$-th frame of task $\tau_k$ in a generalized multiframe task
system with constrained deadlines  is
schedulable by a fixed-priority scheduling on a uniprocessor system
if 
\begin{align*}
&\forall q_i \in\setof{0, 1, \ldots, m_i-1}, \forall i=1,2,\ldots,k-1\\
&\exists 0 < t \leq D_{k,h}, \;\;\;\;C_{k,h} + \sum_{i=1}^{k-1}{\sc rbf}(i, q_i, t) \leq t.
\end{align*}
\end{lemma}
\begin{proof}
  This is a reformulation of the test by Takada and Sakamura \cite{DBLP:conf/rtcsa/TakadaS97}.
\end{proof}

Since the test in Lemma~\ref{lemma:generalized-mf} requires
exponential-time complexity, an approximation by using Maximum
Interference Function (MIF) was proposed in
\cite{DBLP:conf/rtcsa/TakadaS97} to provide a sufficient test
efficiently.\footnote{The reason why this is not an exact test was
  recently provided by Stigge and Wang in
  \cite{DBLP:conf/ecrts/StiggeY12}.}  Instead of testing all possible
combinations of $q_i$, a simple strategy is to use $rbf(i,t) =
\max_{q=0}^{m_i-1}rbf(i,q,t)$ to approximate the test in
Lemma~\ref{lemma:generalized-mf}. This results in a
pseudo-polynomial-time test. Guan et
al. \cite{DBLP:conf/rtss/GuanGSD014} have recently provided proofs to
show that such an approximation is with a  speed-up factor of $2$.

We will use a different way to build our analysis by constructing the
$k$-point last-release schedulability test in
Definition~\ref{def:kpoints}. The idea is very simple. If task
$\tau_i$ starts with its $q$-th frame, we can find and define 
\[ t_{i,q}
= \sum_{j=q}^{\theta(i,q,D_{k,h})-1} T_{i,(j\mod m_i)}
\] as the last
release time of task $\tau_i$ before $D_{k,h}$. Therefore, for the given
$D_{k,h}$, we are only interested in these $m_i$ last release times of
task $\tau_i$. We need a safe function for estimating their
workload. More precisely, we want to find two constants $U_{i,k,h}$
and $C_{i,k,h}$ such that $U_{i,k,h}\cdot t_{i,q} \geq
rbf(i,q,t_{i,q})$ and $U_{i,k,h}\cdot t_{i,q} + C_{i,k,h}\geq
rbf(i,q,t_{i,q}+\epsilon) = rbf(i,q,D_{k,h})$. This means that no matter
which frame of task $\tau_i$ is the first frame in the critical
instant, we can always bound the workload from task $\tau_i$ by using 
$U_{i,k,h}\cdot t + C_{i,k,h}$ for the points that we are interested
to test. According to the above discussions, we can set
\begin{align}
&U_{i,k,h} = \max_{q=0,1,\ldots,m_i-1}   \max\left\{\frac{rbf(i,q,t_{i,q})}{t_{i,q}}\right\} \label{U:ikh}\\
&C_{i,k,h} = \max_{q=0,1,\ldots,m_i-1}   {\sc rbf}(i, q,
D_{k,h})-U_{i,k,h}\cdot t_{i,q}\label{C:ikh}
\end{align}
Now, we can reorganize the schedulability test to link to
Definition~\ref{def:kpoints}.

\begin{lemma}
\label{lemma:generalized-mf-approximate}
The $h$-th frame of task $\tau_k$ in a generalized multiframe task
system with constrained deadlines is schedulable by a fixed-priority
scheduling on a uniprocessor system if the following condition holds:
For any last release ordering $\pi$ of the $k-1$ higher-priority
tasks with $t_{1,q_1} \leq t_{2,q_2} \leq \cdots \leq t_{k-1,q_{k-1}}
\leq t_k = D_{k,h}$, there exists $t_{j,q_j}$ such that
\begin{equation}
  \label{eq:genralized-mf-schedulability-approximate}
  C_{k,h} + \sum_{i=1}^{k-1}  U_{i,k,h}\cdot t_{i,q_i} + \sum_{i=1}^{j-1} C_{i,k,h} \leq t_{j,q_j},
\end{equation}
where $U_{i,k,h}$ and $C_{i,k,h}$ are defined in Eqs.~\eqref{U:ikh}
and~\eqref{C:ikh}, respectively.
\end{lemma}
\begin{proof}
  This comes from the above discussions and Lemma~\ref{lemma:generalized-mf}.
\end{proof}

The choice of $q_i$ only affects the last release ordering of the
$k-1$ higher-priority tasks, but it does not change the constants
$U_{i,k,h}$ and $C_{i,k,h}$. Therefore,
Lemma~\ref{lemma:generalized-mf-approximate} satisfies
Definition~\ref{def:kpoints} by removing the indexes $q$, but the
worst-case last release time ordering has to be considered.

\begin{theorem}
  \label{theorem:response-time-sporadic-gmf}
  Suppose that $\sum_{i=1}^{k-1} C_{i,k,h} < D_{k,h}$.  The $h$-th
  frame of task $\tau_k$ in a generalized multiframe task system with
  constrained deadlines is schedulable by a scheduling algorithm on a
  uniprocessor system if
  the following condition holds
\begin{equation}
\label{eq:schedulability-gmf}
\frac{C_{k,h}}{D_{k,h}} \leq 1 - \sum_{i=1}^{k-1} U_{i,k,h} - \frac{\sum_{i=1}^{k-1} (C_{i,k,h} - U_{i,k,h} (\sum_{\ell=i}^{k-1} C_{\ell,k,h}) )}{D_{k,h}},
\end{equation}
  in which the $k-1$ higher-priority tasks are indexed 
in a non-increasing order of $\frac{C_{i,k,h}}{U_{i,k,h}}$ of task $\tau_i$.
\end{theorem}
\begin{proof}
  The schedulability test (under a specified last release time
  ordering $\pi$) in Lemma \ref{lemma:generalized-mf-approximate}
  satisfies Definition~\ref{def:kpoints} with
  $\alpha_i=\beta_i=1$. Therefore, we can apply
  Lemmas~\ref{lemma:framework-general-schedulability} and
  \ref{lemma:general-sorting} to reach the conclusion of the proof.
\end{proof}

A na\"ive implementation to calculate $t_{i,q}$ and $rbf(i,q,t)$
requires $O(m_i)$ time complexity. Therefore, calculating $U_{i,k,h}$
and $C_{i,k,h}$ requires $O(m_i^2)$ time complexity with such an
implementation. It can be calculated with better data structures and
implementations to achieve $O(m_i)$ time complexity. That is, we
first calculate $t_{i,0}$ and $rbf(i, 0, t_{i,0})$ in $O(m_i)$ time
complexity. The overall time complexity to calculate the following
$t_{i,q}$ and $rbf(i,q,t_{i,q})$ can be done in $O(m_i)$ by using
simple algebra.  Therefore, the time complexity of the schedulability
test in Theorem~\ref{theorem:response-time-sporadic-gmf} is
$O(\sum_{i=1}^{k-1}m_i + k \log k)$.

\noindent{\bf Remarks:} Although we focus ourselves on generalized
multi-frame task systems in this section, it can be easily seen that
our approach can also be adopted to find polynomial-time tests based
on the request bound function presented in
\cite{DBLP:conf/rtss/GuanGSD014}. Since such functions have been shown
useful for schedulability tests in more generalized task models,
including the digraph task model \cite{DBLP:conf/rtas/StiggeEGY11},
the recurring real-time task model \cite{DBLP:conf/rtss/Baruah10},
etc., the above approach can also be applied for such models by
quantifying the utilization (e.g., $U_{i,k,h}$) and the execution time
(e.g., $C_{i,k,h}$) of task $\tau_i$
based on the \emph{limited} options of the last release times before
the deadline (e.g., $D_{k,h}$ in generalized multi-frame) to be tested.

\section*{Appendix F: Acyclic and Multi-Mode Tasks}
\label{sec:schedulability-acyclic-modes}

This section considers the acyclic model, proposed in
\cite{DBLP:journals/tc/AbdelzaherSL04}. That is, each task $\tau_i$ is
specified only by its utilization $U_i$. An instance of task $\tau_i$
can have different worst-case execution times and different relative
deadlines. If an instance of task $\tau_i$ arrives at time $t$ with
execution time $C_{i,t}$, its relative deadline is
$\frac{C_{i,t}}{U_i}$, and the next instance of task $\tau_i$ can only
be released after $t+\frac{C_{i,t}}{U_i}$. 


For systems with known modes, we can also define a multi-mode task
system. A multi-mode task $\tau_i$ with $m_i$ modes is denoted by a
set of triplet: $\tau_i=\{\tau_{i,0}=\setof{C_{i,0}, D_{i,0},
  T_{i,0}}, \tau_{i,1}=\setof{C_{i,1}, D_{i,1}, T_{i,1}}, \ldots,
\tau_{i,m_i-1}=\setof{C_{i,m_i-1}, D_{i,m_i-1}, T_{i,m_i-1}}\}$ to
specify the worst-case execution time $C_{i,j}$, the minimum
inter-arrival time $T_{i,j}$, and the relative deadline $D_{i,j}$ of
the corresponding task mode $\tau_{i,j}$.  For a multi-mode task $\tau_i$, when a
job of mode $\tau_{i,j}$ is released at time $t$, this job has to be
finished no later than its absolute deadline at time $t+D_{i, j}$, and
the next release time of task $\tau_i$ is no earlier than
$t+T_{i,j}$. We only consider systems with constrained deadlines, in
which $D_{i,j} \leq T_{i,j}$. This model is studied in our paper \cite{HuangC-RTCSA15}.
 The difference between this model and the
 generalized multi-frame model is that the system can switch arbitrarily
 among any two modes if the temporal seperation constraints
 in the multi-mode model are respected.

We will focus on mode-level fixed-priority scheduling on uniprocessor
scheduling in this section. Suppose that we are testing whether task
$\tau_{k,h}$ can be feasibly scheduled. Let $hp(\tau_{k,h})$ be the
set of task mode $\tau_{k,h}$ and the other task modes with higher priority than task mode
$\tau_{k,h}$. It is important to note that $\tau_{k,h}$ is also in $hp(\tau_{k,h})$ for the simplicity of presentation. For notational brevity, we assume that there are $k-1$
tasks with higher-priority task modes than $\tau_{k,h}$.  Moreover,
for the rest of this section, we implicitly assume that the tasks in
$hp(\tau_{k,h})\setminus\setof{\tau_{k,h}}$ are schedulable by the mode-level fixed-priority
scheduling algorithm under testing.

Let $load(i, t)$ be the maximum workload of task $\tau_i$ (by
considering all the task modes $\tau_{i,j} \in hp(\tau_{k,h})$)
released from $0$ to $t$ such that the next mode can be released at
time $t$. That is, let $n_{i,q}$ be a non-negative integer to denote
the number of times that task mode $\tau_{i,q}$ is released, in which
\begin{align*}
load(i, t) = \max_{\tau_{i,q} \in hp(\tau_{k,h})}\{&\sum_{q}
n_{i,q} C_{i,q}\\
&\qquad
\mbox{ s.t. } \sum_{q} n_{i,q} T_{i,q} \leq t\}. 
\end{align*}

We denote $C_i^{\max}(\tau_{k,h})$ the maximum mode execution time
among the higher-priority modes of task $\tau_i$ than task mode
$\tau_{k,h}$, i.e., $\max_{\tau_{i,j} \in hp(\tau_{k,h})} C_{i,j}$ for
a given index $i$.  Similarly, we denote $U_i^{\max}(\tau_{k,h})$ the
maximum mode utilization among the higher-priority modes of task
$\tau_i$ than task mode $\tau_{k,h}$, i.e., $\max_{\tau_{i,j} \in
  hp(\tau_{k,h})} \frac{C_{i,j}}{T_{i,j}}$ for a given index $i$. For
notational brevity, since we define $C_{i}^{\max}(\tau_{k,h})$ and
$U_i^{\max}(\tau_{k,h})$ by referring to $\tau_{k,h}$, we will
simplify the notation by using $C_{i}^{\max}$ and $U_i^{\max}$,
respectively, for the rest of this section.

Suppose that the last release of task $\tau_i$ in the window of
interest is at $t_i$. We define the request bound function for
such a case as follows:
\begin{align}\label{eq:rbf-mode-change}
rbf(i, t_i, t) = 
\begin{cases}
  load(i, t_i) & \mbox{ if } t \leq t_i\\
  load(i, t_i) + C_i^{\max} & \mbox{ otherwise}.
\end{cases}
\end{align}

\begin{lemma}
\label{lemma:multi-mode-pushforward} 
Task $\tau_{k,h}$ in a multi-mode task system with constrained
deadlines is schedulable by a mode-level fixed-priority scheduling
algorithm on uniprocessor systems if {\small \begin{align*} &\forall y
    \geq 0, {\;\; \forall 0 \leq t_i < y+D_{k,h}
      \forall i=1,2,\ldots,k-1\;\;} \\
    &\exists t,\; 0 < t \leq y+D_{k,h},\qquad C_{k,h} + load(k, y) +
    \sum_{i=1}^{k-1} rbf(i, t_i, t) \leq t.
\end{align*}}
\end{lemma}
\begin{proof}
 This is proved by contrapositive. Suppose that $\tau_{k,h}$ misses
  its deadline firstly at time $d_{k,h}$. We know that this job of $\tau_{k,h}$ arrives to
  the system at time $a_1=d_{k,h}-D_{k,h}$. Due to the scheduling
  policy, we know that the processor is busy executing jobs in
  $hp(\tau_{k,h})$ (recall that $\tau_{k,h}$ is also in $hp(\tau_{k,h})$) from $a_1$ to $d_{k,h}$. Let $a_0$
  be the last time point in the
  above schedule before $a_1$ such that the processor is idle or
  executing any job with lower priority than $\tau_{k,h}$. It is clear
  that $a_0$ is well-defined. Therefore, the processor executes only
  jobs in $hp(\tau_{k,h})$ from $a_0$ to $d_{k,h}$. We denote $a_1-a_0$ as
  $y$. Let the last release of the task modes of task $\tau_i$ before
  $d_{k,h}$ be $a_0+t_i$. If task $\tau_i$ does not have any release,
  we set $t_i$ to $0$. Therefore, we have $0 \leq t_i < y + D_{k,h}$.

  Without loss of generality, we set $a_0$ to $0$.  Therefore, the
  workload requested by the modes of task $\tau_i$ at any time $t$
  before (and at) $t_i$ is no more than $load(i,t)$. Moreover, the
  workload released by the modes of task $\tau_i$ at any time $t$
  after $t_i$ is no more than $load(i,t_i)+C_i^{\max}$.  As a result,
  the function $rbf(i, t_i, t)$ defined in
  Eq.~\eqref{eq:rbf-mode-change} is a safe upper bound of the workload
  requested by task $\tau_i$ upt to time $t$.

  By the
  definition of the task model, task mode $\tau_{k,h}$ and the other higher-priority task modes
  of task $\tau_k$ may also release some workload before $a_1$, in
  which the workload is upper bounded by $load(k, y)$.
  The assumption of non-schedulability of task $\tau_{k,h}$ in the
  above schedule and the busy execution in the
  interval $(a_0, d_{k,h}]$ implies that the above workload at any
  point $t$ in the interval $(a_0, d_{k,h}]$ is larger than
  $t$. Therefore, we know that 
  \begin{align*}
&\exists y
    \geq 0, {\;\; \exists 0 \leq t_i < y+D_{k,h}
      \forall i=1,2,\ldots,k-1\;\;} \\    
&\forall t,\; 0 < t \leq y+D_{k,h},\;\; C_{k,h} + load(k, y) +
    \sum_{i=1}^{k-1} rbf(i, t_i, t) > t,
  \end{align*}
  which concludes the proof by contrapositive.
\end{proof}

Evaluating $load(i, t)$ is in fact equivalent to the \emph{unbounded
  knapsack problem} (UKP). The UKP problem is to select some items in a
  collection of items with specified weights and profits so that the total weight of the selected items is
  less than or equal to a given limit (called knapsack) and total
  profit of the selected items is maximized, in which an item can be
  selected unbounded multiple times. The definition of $load(i,t)$ is
  essentially an unbounded knapsack problem, by considering $t$ as the
  knapsack constraint, the minimum inter-arrival time (period) of a task mode as
  the weight of an item, and the execution time of a task mode as the
  profit of the item.
\begin{lemma}
  \label{lemma:load-mode-change}
  \begin{equation}
    \label{eq:load-mode-change}
    load(i, t) \leq U_i^{\max}\cdot t.
  \end{equation}
\end{lemma}
\begin{proof}
  It has been shown
  in~\cite{martello1990knapsack} that an upper bound of the above unbounded
  knapsack problem is $\left(\max_{\tau_{i,j} \in
    hp(\tau_{k,h})}\{\frac{C_{i,j}}{T_{i,j}}\}\right)\cdot t =
U_i^{\max}\cdot t$, which completes the proof. 
\end{proof}

By Lemma~\ref{lemma:multi-mode-pushforward} and
Lemma~\ref{lemma:load-mode-change}, we can now apply the \frameworkkq{}
framework to obtain the polynomial-time schedulability test in the
following theorems.
\begin{theorem}
  \label{theorem:schedulability-mode-change}
  For a given task $\tau_i$ and a task mode $\tau_{k,h}$ under
  testing, let $C_i^{\max} = \max_{\tau_{i,j} \in hp(\tau_{k,h})}
  C_{i,j}$ and $U_i^{\max} = \max_{\tau_{i,j} \in hp(\tau_{k,h})}
  \frac{C_{i,j}}{T_{i,j}}$.  Suppose that there are $k-1$ tasks with
  higher-priority modes than task mode $\tau_{k,h}$.  Let
  $\Delta_k^{\max}$ be $\max\{\max_{\tau_{k,j} \in hp(\tau_{k,h})}\{
  \frac{C_{k,j}}{T_{k,j}}\}, \frac{C_{k,h}}{D_{k,h}}\}$.  Suppose that
  $\sum_{i=1}^{k-1} C_i^{\max} < D_{k,h}$.  Task $\tau_{k,h}$ in a
  multi-mode task system with constrained deadlines is schedulable by
  a mode-level fixed-priority scheduling algorithm on a uniprocessor
  system if $\sum_{i=1}^{k-1} U_i^{\max} \leq 1$ and
\begin{equation}
\label{eq:schedulability-mode-change}
\Delta_k^{\max}
\leq 1 - \sum_{i=1}^{k-1} U_i^{\max} -
\frac{\sum_{i=1}^{k-1} (C_i^{\max} - U_i^{\max} (\sum_{\ell=i}^{k-1} C_{\ell}^{\max}) )}{D_{k,h}},
\end{equation}
  in which the $k-1$ higher-priority tasks are indexed 
in a non-increasing order of $\frac{C_i^{\max}}{U_i^{\max}}$ of task $\tau_i$.
\end{theorem}
\begin{proof}
  Suppose that $t_i$ is given for each task $\tau_i$. For the given
  $t_i$, we can define the last release time ordering, in which $t_1
  \leq t_2 \leq \ldots \leq t_{k-1} \leq D_{k,h}+y = t_k$. By Lemma
  \ref{lemma:load-mode-change}, we can pessimistically rephrase the
  schedulability test in Lemma~\ref{lemma:multi-mode-pushforward} to
  verify whether there exists $t_j \in \setof{t_1, t_2, \ldots, t_k}$
  such that $\Delta_{k}^{\max}(D_{k,h}+y) + \sum_{i=1}^{k-1} U_i^{\max}t_i +
  \sum_{i=1}^{j-1} C_i^{\max}\leq t_j$. Therefore, we reach
  Definition~\ref{def:kpoints} with $\alpha_i=1$ and $\beta_i=1$, and,
  hence, can apply Lemmas~\ref{lemma:framework-general-schedulability}
  and \ref{lemma:general-sorting}. .
  Moreover, with the same argument in the proof of
  Theorem~\ref{thm:multiprocessor-gdm-sporadic-tight}, we know that
  the resulting schedulability condition is the worst if $y$ is $0$
  under the assumption  $\sum_{i=1}^{k-1} U_i^{\max} \leq 1$.
\end{proof}

\begin{theorem}
  \label{theorem:schedulability-mode-change-rm}
  For a given task $\tau_i$ and a task mode $\tau_{k,h}$ under
  testing, let $U_i^{\max} = \max_{\tau_{i,j} \in hp(\tau_{k,h})}
  \frac{C_{i,j}}{T_{i,j}}$. Suppose that there are $k-1$ tasks with
  higher-priority modes than task mode $\tau_{k,h}$. Let $U_k^{\max}$
  be $\max\{\max_{\tau_{k,j} \in hp(\tau_{k,h})}\{
  \frac{C_{k,j}}{T_{k,j}}\}, \frac{C_{k,h}}{T_{k,h}}\}$.  
  Suppose that $\sum_{i=1}^{k} U_i^{\max}\leq 1$.
 Task
  $\tau_{k,h}$ in a multi-mode task system with implicit deadlines is
  schedulable by the mode-level RM scheduling algorithm on a
  uniprocessor system if the following condition holds
\begin{equation}
\label{eq:schedulability-mode-change-rm}
U_k^{\max} \leq 1 - 2\sum_{i=1}^{k-1} U_i^{\max} + 0.5\left((\sum_{i=1}^{k-1} U_i^{\max})^2 +(\sum_{i=1}^{k-1} (U_i^{\max})^2)\right),
\end{equation}
or
\begin{equation}
\label{eq:schedulability-mode-change-rm-ubound}
\sum_{i=1}^{k-1} U_i^{\max}\leq \left(\frac{k-1}{k}\right)\left(
  2-\sqrt{2+ 2U_k^{\max}\frac{k}{k-1}}\right),
\end{equation}
or
\begin{equation}
\label{eq:schedulability-mode-change-rm-ubound-2}
U_{k}^{\max}+\sum_{i=1}^{k-1} U_i^{\max}\leq 
\begin{cases}
\left(\frac{k-1}{k}\right)\left(
  2-\sqrt{4- \frac{2k}{k-1}}\right),&\mbox{ if } k > 3\\
1 - \frac{(k-1) }{2k}   &\mbox{otherwise}.
\end{cases}
\end{equation}
\end{theorem}
\begin{proof}
  Due to the RM property, we know that $T_{i,j} \leq T_{k,h}$ if
  $\tau_{i,j}$ is in $hp(\tau_{k,h})$. Therefore, $C_i^{\max} \leq
  T_{k,h} U_i^{\max}$. We now reach the conditions in
  Lemmas~\ref{lemma:framework-constrained-schedulability},
  ~\ref{lemma:framework-totalU-exclusive},
  and~\ref{lemma:framework-totalU-constrained} with
  $\alpha =1$ and $\beta =1$. The three conditions are directly from
  these lemmas.
\end{proof}

Note that the above tests assume implicitly that the tasks in
$hp(\tau_{k,h})$ are already tested to be schedulable under the
scheduling policy. Therefore, we
have to apply the results in
Theorems~\ref{theorem:schedulability-mode-change}
and~\ref{theorem:schedulability-mode-change-rm} by testing all the
task modes from the highest priority to the lowest priority. The
following theorem provides the utilization bound for testing the whole
task set in linear time for mode-level
RM scheduling. 

\begin{corollary}
  \label{col:acyclic}
  Suppose that $\sum_{i=1}^{k} U_i\leq 1$.
  A system with $k$ acyclic tasks with utilization $U_1, U_2, \ldots, U_k$ is
  schedulable by the mode-level RM scheduling algorithm on a
  uniprocessor system if 
\begin{equation}
\label{eq:schedulability-acyclicrm-ubound}
\sum_{i=1}^{k} U_i\leq
\begin{cases}
\left(\frac{k-1}{k}\right)\left(
  2-\sqrt{4- \frac{2k}{k-1}}\right),&\mbox{ if } k > 3\\
1 - \frac{(k-1) }{2k}   &\mbox{otherwise}.
\end{cases}
\end{equation}
or 
\begin{equation}
\label{eq:schedulability-acyclic-rm}
0 \leq 1 - U_k - 2\sum_{i=1}^{k-1} U_i + 0.5\left((\sum_{i=1}^{k-1} U_i)^2 +(\sum_{i=1}^{k-1} U_i^2)\right),
\end{equation}
where task $\tau_k$ is the task with the minimum utilization among the
$k$ tasks.
\end{corollary}
\begin{proof}
  This comes with similar arguments in
  Theorem~\ref{theorem:schedulability-mode-change-rm}. Adopting the
  utilization bound in Eq.~\eqref{eq:schedulability-acyclicrm-ubound}
  does not need to consider the ordering of the tasks. However, the
  quadratic bound in Eq.~\eqref{eq:schedulability-acyclic-rm} changes
  for different settings of $U_k$. Let $\sum_{i=1}^{k} U_i$ be a given
  constant $H_1$ and $\sum_{i=1}^{k} U_i^2$ be $H_2$. Then, the
  quadratic bound in Eq.~\eqref{eq:schedulability-acyclic-rm} becomes
  $1-2H_1+U_k+0.5H_1^2+0.5H_2-0.5U_k^2$.  The first order derivative
  of the quadratic bound with respect to $U_k$ is $1-U_k$, which
  implies that the minimum $U_k$ leads to the worst condition in the
  quadratic bound in Eq.~\eqref{eq:schedulability-acyclic-rm} under
  the condition $U_k \leq 1$.
\end{proof}

The result in Eq.~\eqref{eq:schedulability-acyclicrm-ubound} in
Corollary~\ref{col:acyclic} is the same as the utilization bound
$2-\sqrt{2}$ (when $k \rightarrow \infty$) in
\cite{DBLP:journals/tc/AbdelzaherSL04,HuangC-RTCSA15} for such a task model. Our
results here are more generic than \cite{DBLP:journals/tc/AbdelzaherSL04} and can also be easily applied for any
(mode-level or task-level) fixed-priority scheduling.

\end{document}